\newcommand{\papertitle}{Identity Testing for Circuits with Exponentiation Gates}
\title{\papertitle}
\author{Jiatu Li\thanks{Computer Science and Artificial Intelligence Laboratory, Massachusetts Institute of Technology, \href{mailto:jiatuli@mit.edu}{jiatuli@mit.edu}} \and Mengdi Wu\thanks{Computer Science Department, Carnegie Mellon University, \href{mailto:mengdiwu@andrew.cmu.edu}{mengdiwu@andrew.cmu.edu}}}
\date{\today}
\newcommand{\AExp}{\mathsf{AExp}}
\newcommand{\dom}{\mathsf{dom}}
\newcommand{\eqdef}{\triangleq}
\newcommand{\eps}{\varepsilon}
\newcommand{\F}{\mathbb{F}}
\newtheorem{theorem}{Theorem}
\newtheorem{lemma}[theorem]{Lemma}
\newtheorem{corollary}[theorem]{Corollary}
\newtheorem{proposition}[theorem]{Proposition}
\theoremstyle{definition}
\newtheorem{definition}{Definition}
\newtheorem{conjecture}{Conjecture}
\newtheorem{example}{Example}
\newtheorem{remark}{Remark}
\newcommand{\Q}{\mathbb{Q}}
\newcommand{\R}{\mathbb{R}}
\newcommand{\N}{\mathbb{N}}
\newcommand{\Z}{\mathbb{Z}}
\newcommand{\softmax}{\mathsf{softmax}}
\newcommand{\sigmoid}{\mathsf{sigmoid}}
\newcommand{\Attention}{\mathsf{Attention}}
\newcommand{\GLU}{\mathsf{GLU}}
\newcommand{\poly}{\mathsf{poly}}
\newcommand{\Quot}{\mathsf{Quot}}
\newcommand{\Inv}{\mathsf{Inv}}
\begin{document}

\maketitle

\begin{abstract}
    Motivated by practical applications in the design of optimization compilers for neural networks, we initiated the study of identity testing problems for arithmetic circuits augmented with \emph{exponentiation gates} that compute the real function $x\mapsto e^x$. These circuits compute real functions of form $P(\vec x)/P'(\vec x)$, where both $P(\vec x)$ and $P'(\vec x)$ are \emph{exponential polynomials}
    \[
    \sum_{i=1}^k f_i(\vec x)\cdot \exp\left(\frac{g_i(\vec x)}{h_i(\vec x)}\right),
    \]
    for polynomials $f_i(\vec x),g_i(\vec x)$, and $h_i(\vec x)$.
    
    We formalize a black-box query model over finite fields for this class of circuits, which is mathematical simple and reflects constraints faced by real-world neural network compilers. We proved that a simple and efficient randomized identity testing algorithm achieves perfect completeness and non-trivial soundness. Concurrent with our work, the algorithm has been implemented in the optimization compiler Mirage by Wu et al.~(OSDI 2025), demonstrating promising empirical performance in both efficiency and soundness error. Finally, we propose a number-theoretic conjecture under which our algorithm is sound with high probability. 
\end{abstract}

\section{Introduction}

Polynomial Identity Testing (PIT) is a central problem of theoretical computer science. Given a multi-variate polynomial $f:\F^n\to\F$ with certain properties, the goal of the problem is to verify whether $f$ is identically zero. 

Early results on identity testing of polynomials date back to DeMillo and Lipton \cite{DL78}, Zippel \cite{Zippel79}, and Schwartz \cite{Schwartz80} (see also \cite{Ore1921} for a special case on finite fields). It is proved that a black-box probabilistic testing, namely checking whether $f(\vec x)=0$ for a random element $\vec x\in\F^n$, works well for low-degree polynomials. This simple algorithm serves as a key step in many randomized algorithms (see, e.g., \cite{Tutte47,Lovasz79,AB03}) as well as probabilistic proof systems, including the celebrated interactive proof system for $\mathsf{PSPACE}$ \cite{Shamir92} and the PCP theorem \cite{AS98}. Subsequently, there has been a rich literature on identity testing of various types of polynomials, including sparse polynomials and polynomials computed by depth-$3$ algebraic circuits (see, e.g., \cite{Saxena09,Saxena14} and the references therein).

Perhaps surprisingly, the probabilistic identity testing algorithm for low-degree polynomials has recently been found useful in developing optimization compilers for deep neural networks. Loosely speaking, deep neural networks are represented as a high-level programming language, and the optimization compiler is designed to automatically identify redundancy in programs to improve the efficiency of the neural network. In a recent optimization compiler PET~\cite{PET21}, Wang et al.~observed that redundancy detection in partial neural network computation \emph{without} non-linear activation can be naturally modeled as black-box identity testing of low-degree polynomials\footnote{In the black-box setting, the algorithm is given oracle access to the polynomial rather than its description. For the optimization compilers of neural networks (e.g., \cite{PET21}), the neural networks are gigantic and are usually loaded into optimization devices such as GPUs or TPUs. Black-box identity testing algorithms are favorable, as \emph{evaluation} on (partial) neural networks is much more efficient than \emph{white-box access} to neural networks due to communication overhead and the design of the optimization devices.}. By applying the classical probabilistic testing algorithm \cite{DL78,Zippel79,Schwartz80}, PET achieves a significant speed-up over its benchmark. 

A natural question is, therefore, whether this approach can be extended to neural networks \emph{with} non-linear activation. In a subsequent work, Wu et al.~\cite{Mirage24} introduced a framework, Mirage, that could benefit from extending the redundancy detection algorithm in PET \cite{PET21} to neural network components with \emph{exponentiation} operators. This enables the modeling of non-linear activation functions such as $\softmax(\cdot)$. In the language of circuit complexity, the program representation in Mirage can be modeled as integer-coefficient arithmetic circuits with ``exponentiation gates'', where there is at most one exponentiation gate in each path from output gates to input gates (see \Cref{sec:def_circuit_model}). The goal of Mirage is to ``efficiently'' check whether a circuit $C:\R^n\to\R$ of such form is identical to $0$ on $\mathbb{R}$ in a black-box query model that is practically satisfactory.

Compared to polynomials and standard arithmetic circuits computing polynomials, the behavior of circuits with exponentiation gates is not well studied. Integer-coefficient polynomials can be naturally evaluated over $\F_p$ with modular arithmetic. However, it is not a-prior clear what it the correct way to define the evaluation of circuits with exponentiation gates over finite fields, as there is no standard analogy of the exponentiation function in finite fields. This makes it hard to even define a suitable ``black-box query model'' for the identity testing problem that is easy to analyze and captures the real-world constraints. Moreover, it is unclear whether the standard randomized algorithm \cite{DL78,Schwartz80,Zippel79} works after we define the evaluation of such circuits. 

\paragraph{Our Contribution.} Motivated by the practical applications, we initiate a theoretical analysis of the identity testing of \emph{circuit with exponentiation gates}. 
\begin{itemize}
\item We introduce a natural circuit model that formalizes circuits with exponentiation gates $x\mapsto \exp(x)$, which suffices to captures components of modern neural networks such as Attention \cite{attention} with certain non-linear activation functions. The circuit model generalizes the standard algebraic circuit model by allowing exponentiation gates. 
\item We prove that in the \emph{idealized model} where the algorithm could query the circuit on any \emph{real number}, a simple probabilistic testing is correct with high probability. 
\item We introduce an \emph{algebraic query model} that captures the real-world constraints in optimization compilers for deep neural networks such as Mirage \cite{Mirage24}, and design a simple randomized algorithm that is perfectly complete and \emph{non-trivially} sound. Concurrent with our work, this algorithm is implemented in Mirage as one of its two redundancy detection algorithms. We also introduce a conjecture about sparse polynomials in finite field, under which our randomized algorithm in algebraic query model is sound \emph{with high probability}.  
\end{itemize}

\subsection{Main Results: A Simplified Setting}

Before formally describing our circuit and query models in \Cref{sec: models}, we try to describe a simplified version of the main mathematical problem. Let $k\in\N$, $f_i,g_i,h_i$ be $n$-variate degree-$d$ polynomials with integer coefficients in $[-w,w]$, and $P:\R^n\to\R\cup\{\bot\}$ be a partial real function defined as 
\[
P(\vec x) \eqdef \sum_{i=1}^{k} f_i(\vec x) \exp\left(\frac{g_i(\vec x)}{h_i(\vec x)}\right),
\]
where $\exp(x)\eqdef e^x$. This function is similar to the \emph{exponential polynomials} that are used in, e.g., transcendental number theory \cite[Chapter 12]{Baker-book}. For simplicity, we will call $P(\vec x)$ a degree-$d$ exponential polynomial. The number of terms $k$ in $P(\vec x)$ is said to be the \emph{width} of $P(\vec x)$. We call $\{f_i(\vec x)\}_{i \in [k]}$ the coefficient polynomials of $P(\vec x)$, and $\{g_i(\vec x)/h_i(\vec x)\}_{i \in [k]}$ the exponent fractions of it.  

The algorithmic problem we are interested in is to check whether $P(\vec x)$ is identically zero on $\R^n$ with queries of one of the following two forms: 
\begin{itemize}
\item (\emph{Real}): Given $\vec x\in\Q^n$, the oracle outputs $\{0,1,\bot\}$ indicating that $P(\vec x) = 0$, $P(\vec x)\in\R\setminus\{0\}$, and $P(\vec x)=\bot$ (i.e.~undefined due to division-by-zero), respectively. 
\item (\emph{Algebraic}): Let $p,q$ be prime numbers such that $q\mid p-1$, $G=\langle g\rangle$ be the unique multiplicative subgroup of $\F_p^*$ of order $q$. Given $\vec u\in\F_p^n, \vec v\in\F_q^n$, and $a\in G$, the oracle outputs 
\[
P_a(\vec u,\vec v)\eqdef \left(\sum_{i=1}^k f_i(\vec u) \cdot a^{g_i(\vec v)\cdot (h_i(\vec v))^{-1}\bmod q}\right) \bmod p,  
\]
where $(\cdot)^{-1}$ on the exponent denotes the multiplicative inverse in $\F_q$. The oracle returns $\bot$ if $h_i(\vec v) \bmod q = 0$ for any $i\in[k]$. 
\end{itemize}

The former type of query is an idealized model. In practical applications (e.g.~\cite{PET21,Mirage24}), there is no efficient algorithm to implement the oracle that checks whether $P(\vec x) = 0$ precisely given $\vec x$, and the approximation of real numbers using floating-point numbers is also unsatisfactory in practice (see \Cref{sec: models} for more discussion). The latter type of query is more realistic; indeed, it is implemented in the optimization compiler Mirage \cite{Mirage24} where the explicit representation of $P(\vec x)$ is a (partial) neural network.\footnote{The special case of the algebraic query model without exponentiation functions is implemented in \cite{PET21}.} 

\paragraph{Identity Testing Algorithms in the Simplified Setting.} Now we describe our identity testing algorithms in the simplified setting. First, we show that a simple randomized algorithm with \emph{one} query works in the real query model. The algorithm is to randomly sample $\vec x\in[B]^n$ for $B = 20\cdot d\cdot k^2$ and accepts if $P(\vec x)\in\{0,\bot\}$. It is clear that the algorithm is perfectly complete, and the soundness is given by the following theorem: 

\begin{theorem}[\Cref{thm: soundness}, simplified]\label{thm: real model informal}
Let $P(\vec x)$ be an $n$-variate degree-$d$ exponential polynomial of width $k$ that is not identically zero on its domain, and $S\subseteq \Q$ be any finite non-empty set. For $\vec x\in S^n$ sampled uniformly at random, $\Pr[P(\vec x) \in \{0,\bot\}]\le 8dk^2/|S|$. 
\end{theorem}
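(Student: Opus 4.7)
The plan is to reduce the analysis to two classical ingredients: the Schwartz--Zippel lemma, which controls the ``easy'' failure modes, and the Lindemann--Weierstrass theorem (already implicitly in the paper's reference to \cite{Baker-book}), which supplies the algebraic core of the argument.

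First, I would normalize $P(\vec x)$ by grouping terms with identical exponent fractions: rewrite the sum as $\sum_{j} F_j(\vec x)\exp(\alpha_j(\vec x))$, where the rational functions $\alpha_j = g_j/h_j$ are now pairwise distinct and each $F_j$ is the sum of the original $f_i$'s whose exponent equals $\alpha_j$. The degree of each $F_j$ remains at most $d$ and the number of surviving terms is at most $k$. Since $P$ is not identically zero on its domain, at least one combined coefficient, call it $F_{j_0}$, must be a nonzero polynomial---otherwise the regrouped expression, and hence $P$ itself, would be identically zero as a function.

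Next, I would decompose the bad event $\{P(\vec x)\in\{0,\bot\}\}$ into three sub-events and bound each by a Schwartz--Zippel application. Event (i): some $h_j(\vec x)=0$, which is the only source of $\bot$; since each $h_j$ has degree at most $d$, a union bound over $j$ gives probability at most $kd/|S|$. Event (ii): some pair of exponent fractions collides at $\vec x$, i.e.\ $\alpha_i(\vec x)=\alpha_j(\vec x)$ for some $i\neq j$; this is equivalent to the vanishing of $g_i h_j - g_j h_i$, a nonzero polynomial (since $\alpha_i\neq\alpha_j$ as rational functions) of degree at most $2d$, contributing at most $\binom{k}{2}\cdot 2d/|S|\le k^2 d/|S|$ after a union bound over pairs. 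Event (iii): neither of the above occurs, and yet $P(\vec x)=0$; this is the case that requires a genuinely transcendental input.

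For event (iii), since $\vec x\in S^n\subseteq\Q^n$ and no $h_j$ vanishes, the values $\alpha_j(\vec x)$ are well-defined, pairwise distinct rational (hence algebraic) numbers. By the Lindemann--Weierstrass theorem---in its corollary form stating that distinct algebraic exponents yield $\bar\Q$-linearly independent exponentials---the real numbers $e^{\alpha_1(\vec x)},\ldots,e^{\alpha_k(\vec x)}$ are linearly independent over $\bar\Q$. Since each $F_j(\vec x)\in\Q\subseteq\bar\Q$, the identity $P(\vec x)=0$ forces $F_j(\vec x)=0$ for every $j$, and in particular $F_{j_0}(\vec x)=0$. One final Schwartz--Zippel application on the nonzero polynomial $F_{j_0}$ of degree at most $d$ bounds this event by $d/|S|$. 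Summing the three contributions gives $(k^2+k+1)d/|S|\le 3k^2 d/|S|$, which is comfortably within the claimed $8dk^2/|S|$.

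The only non-routine ingredient is Lindemann--Weierstrass; everything else is a careful union-bound with Schwartz--Zippel. The conceptual crux---and the step I expect to require the most attention in writing---is justifying the reduction from an exponential-polynomial identity at a rational point to an ordinary polynomial identity, via linear independence of exponentials over the algebraic closure. Once that is in place, the probability calculation is essentially forced.
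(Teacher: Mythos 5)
Your argument is correct and follows essentially the same route as the paper: condense the exponential polynomial by grouping equivalent exponent fractions, apply Schwartz--Zippel to the $h_i$'s, to the pairwise differences $g_i h_j - g_j h_i$, and to a coefficient polynomial, and invoke Lindemann--Weierstrass at the sampled rational point. The only (harmless) difference is that you ask for a single surviving coefficient $F_{j_0}$ to be nonvanishing rather than all nonzero coefficient polynomials, which slightly improves the constant and stays comfortably within the stated $8dk^2/|S|$ bound.
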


As the second result, we analyze the following simple randomized identity testing algorithm in the algebraic query model: Let $p,q,G$ be defined as above, the algorithm randomly samples $\vec u\in\F_p^n$, $\vec v\in \F_q^n$, and $a\in G$, and accepts if $P_a(\vec u,\vec v) \in \{0,\bot\}$. The following theorem formalizes the completeness and soundness of the algorithm: 

\begin{theorem}[\Cref{thm: algebraic model}, simplified]\label{thm: correctness algebraic informal}
Let $P(\vec x)$ be an $n$-variate degree-$d$ exponential polynomial of width $k$. Let $p,q$ be prime numbers, $q\mid p-1$, and $G=\langle g\rangle$ be the unique order-$q$ multiplicative subgroup of $\F_p^*$. Suppose that all integer coefficients in $P(\vec x)$ are within $[-w,w]$, $q > 2w$, then: 
\begin{compactitem}
\item (Completeness). If $P(\vec x)$ is identically zero on its domain as a partial real function, for any $a\in G$, $\vec u\in\F_p^n$, and $\vec v\in\F_q^n$, $P_a(\vec u,\vec v)\in\{0,\bot\}$. 
\item (Soundness). If $P(\vec x)$ is not identically zero on its domain as a partial real function, $p,q \ge 2kw$, then for uniformly random $a\in G$, $\vec u\in\F_p^n$, and $\vec v\in\F_q^n$, the probability that $P_a(\vec u,\vec v) \in \{0,\bot\}$ is at most $q^{-\frac{1}{k-1}}+O(dk^2/q)$. 
\end{compactitem}
\end{theorem}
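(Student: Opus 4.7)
The plan is to exploit a structural decomposition of $P$ that groups its $k$ terms by equality of their exponent fractions. Define the equivalence $i \sim j \Leftrightarrow g_i h_j = g_j h_i$ as integer polynomials, partition $[k] = C_1 \sqcup \cdots \sqcup C_m$ with $m \le k$, pick a representative $i_j \in C_j$ with common rational exponent $\alpha_j = g_{i_j}/h_{i_j}$, and write $F_{C_j}(\vec x) \eqdef \sum_{i \in C_j} f_i(\vec x)$. The classical linear independence of $\{\exp(\alpha_j)\}$ over pairwise distinct rational functions $\alpha_j$ gives that $P \equiv 0$ on its real domain if and only if each $F_{C_j}$ is identically the zero integer polynomial; this decomposition drives both directions of the proof.

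For completeness, the integer identity $g_i h_j = g_j h_i$ (for $i,j \in C_j$) reduces mod $q$, so for any $\vec v$ with every $h_i(\vec v) \not\equiv 0 \bmod q$ the exponents $g_i(\vec v) h_i(\vec v)^{-1}$ coincide within each $C_j$ in $\F_q$. Regrouping $P_a(\vec u,\vec v)$ by common exponent in $\F_q$, each coefficient is a sum of $F_{C_j}(\vec u)$'s from the real classes merged into that algebraic group, and each such sum vanishes since the $F_{C_j}$'s are zero as integer polynomials.

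For soundness, fix $j^*$ with $F_{C_{j^*}} \not\equiv 0$, which exists by $P \not\equiv 0$. I would first show the ``good event'' holds with probability $\ge 1 - O(dk^2/q)$: by Schwartz-Zippel over $\F_q$ on the degree-$d$ denominators $h_i$ and the degree-$\le 2d$ cross polynomials $D_{j,j'} \eqdef g_{i_j} h_{i_{j'}} - g_{i_{j'}} h_{i_j}$ (nonzero over $\Z$ by construction of the $C_j$'s, and remaining nonzero mod $q$ under the size hypothesis), a random $\vec v \in \F_q^n$ simultaneously makes all $h_i(\vec v) \ne 0$ and the $m$ exponents $e_{C_j}(\vec v) \eqdef g_{i_j}(\vec v) h_{i_j}(\vec v)^{-1}$ pairwise distinct in $\F_q$; separately, Schwartz-Zippel over $\F_p$ using $p \ge 2kw$ (so $F_{C_{j^*}}$ does not vanish mod $p$) yields $F_{C_{j^*}}(\vec u) \ne 0$ with probability $\ge 1 - O(d/p)$. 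On this good event, $P_a(\vec u,\vec v) = \sum_{j=1}^m F_{C_j}(\vec u)\, a^{e_{C_j}(\vec v)}$ is a nonzero $m$-sparse polynomial in $a$, with $m \le k$ and pairwise distinct exponents in $\Z/q\Z$.

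The main obstacle is then to bound, for such a nonzero $k$-sparse polynomial $Q(y) = \sum_j b_j y^{c_j} \in \F_p[y]$ with distinct $c_j \in \Z/q\Z$, the probability that a uniformly random $y \in G$ is a root. To obtain the advertised $q^{-1/(k-1)}$ factor one needs a bound of the form $|\{y \in G : Q(y) = 0\}| \le q^{1-1/(k-1)}$ (up to constants), which is exactly the content of sparse-polynomial-in-subgroup estimates (of Bourgain or Canetti--Friedlander--Shparlinski type, typically proved via Stepanov's method or additive-combinatorial bounds on cyclic subgroups of $\F_p^*$); this is the heart of the argument and the step I expect to require the most care, since naive Schwartz-Zippel on $a$ only gives a vacuous bound. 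Once this bound is invoked, union-bounding it with the good event yields the claimed soundness error $q^{-1/(k-1)} + O(dk^2/q)$.
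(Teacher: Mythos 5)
Your proposal follows essentially the same route as the paper: grouping terms by equality of exponent fractions is exactly the paper's condensation step, the Lindemann--Weierstrass characterization (each grouped coefficient polynomial must be the zero polynomial) is \Cref{lmm: equiv pexp}, and the soundness analysis via Schwartz--Zippel good events followed by a root bound for sparse polynomials on the order-$q$ subgroup is precisely the structure of \Cref{lmm: zero poly ff}. The one place you stop short is the heart of the soundness bound: you invoke a ``sparse-polynomial-in-subgroup'' estimate of Bourgain or Canetti--Friedlander--Shparlinski type as a black box. Two remarks there. First, the results you point to bound roots in $\F_p^*$ (with exponents modulo $p-1$), not roots in the order-$q$ subgroup $G$ with exponents in $\F_q$, so a literal citation would not close the step. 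Second, the subgroup version is easier, not harder, than you anticipate: the paper proves it (\Cref{thm: week lw}) with an elementary argument and no Stepanov-type machinery --- after normalizing one exponent to $0$, \Cref{lmm: Kelley} supplies a multiplier $c\in\{1,\dots,q-1\}$ with $c\alpha_i \bmod q \le q^{1-1/(k-1)}$ for all $i$, and since $c$ is invertible modulo $q$ the substitution $z\mapsto z^{c}$ permutes $G$ and preserves the number of roots in $G$ (\Cref{prop: week lw inner}); the transformed polynomial has degree at most $q^{1-1/(k-1)}$, which gives the bound after dividing by $|G|=q$. Apart from this deferred lemma, your variant is sound: requiring only the single surviving coefficient $F_{C_{j^*}}(\vec u)\ne 0$ (rather than all of them, as the paper does) still leaves a nonzero $\le k$-sparse polynomial in $a$ with pairwise distinct exponents, which only improves the bound, and your completeness argument correctly accounts for distinct real exponent classes possibly merging modulo $q$.
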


Note that for sufficiently large $k,q\in\N$ such that $\ln q \le k-1$, we have 
\[
1-\frac{2\ln q}{k-1}\le q^{-\frac{1}{k-1}} \le  1-\frac{\ln q}{2(k-1)}.
\]
Thus in a typical setting that $dk^3\le q\le k^{O(1)}$, the soundness error is $1-\Theta(\log k/(k-1))$.\footnote{The error probability is constant if $q$ is exponential in $k$. However, this setting is not meaningful: The width $k$ is usually comparable to the input length, and thus the arithmetic operations over $q$ would be extremely inefficient.} By parallel repetition of the randomized algorithm for $O(k\log k/\log q)$ times, we can boost the error probability to $1/k^{O(1)}$. This leads to an efficient randomized identity testing algorithm when $k$ is relatively small and the evaluation is much more efficient than obtaining the description of $P(\vec x)$, which, in particular, captures the real-world constraints for the optimization compilers such as Mirage \cite{Mirage24}. 

\subsection{Connection to the Circuit Models} \label{sec: intro circuits}

We stress that the abstraction in previous subsection is mathematically clean but omits crucial details in modeling the real-world problem. Specifically: 
\begin{itemize}
    \item It is not immediately clear how the identity testing results for exponential polynomials apply to our motivating real-world application, namely identity testing of neural networks formalized by ``circuits with exponentiation gates'' \cite{PET21,Mirage24}. 
    \item It is also not clear how the function $P_{a}(\vec u,\vec v)$ in \Cref{thm: algebraic model} relates to the circuits, and why it can be computed efficiently when $P(\vec x)$ is explicitly given by a ``circuit with exponentiation gates''. This is important as in real-world applications, see \cite{Mirage24}, the function $P_{a}(\vec u,\vec v)$ is implemented on specialized devices such as GPU or TPU, which are optimized for specific computation patterns. Moreover, it is unclear where the restrictions in the algebraic query model (e.g., $a$ must be in $G$) come from. 
\end{itemize}

\paragraph{Circuit Model.} We briefly explain our circuit model (called $\AExp^1$ circuits) and refer readers to \Cref{sec: models,sec: exp polynomials} for more details. We work with (arithmetic) circuits with integer coefficients, unbounded fan-in addition and multiplication gates, fan-in two division gates, and fan-in one exponentiation gates. In addition, we impose the restriction that on each path from input variables to output gates, there is at most \emph{one} exponentiation gates --- the circuit cannot compute double exponential $x\mapsto e^{e^x}$ by composing exponentiation gates. 

The evaluation of circuits over real numbers is defined by a standard gate-by-gate evaluation algorithm, where the exponentiation gate is interpreted as the function $x\mapsto e^x$. However, it is unclear how to define the evaluation of such circuits over finite fields, as there is no standard interpretation of the exponential function. 

Let $p$ be a prime number. To define the evaluation of an $\AExp^1$ circuit $C$ over $\F_p$, a natural idea is to replace the exponential function by $x\mapsto a^x$ for some element $a\in \F_p$. This is not ideal as such an interpretation is \emph{inconsistent} with the evaluation over $\R$: It could be the case that 
\[ 
a^{x_1+x_2\bmod p}\not\equiv a^{x_1} a^{x_2}\pmod p
\] 
for $x_1,x_2\in\F_p$, while $e^{x_1+x_2} = e^{x_1}e^{x_2}$ for any $x_1,x_2\in\R$. As a result, this definition cannot be used for the identity testing of $\AExp^1$ circuits. 

To address the issue, we exploit the algebraic structure of finite fields by using different moduli over and under the exponents. Let $p,q$ be prime numbers such that $q\mid p-1$, $G\subseteq\F_p^*$ be the unique order-$q$ multiplicative subgroup, and $a\in G$. It follows that 
\[
a^{x_1+x_2\bmod q} \equiv a^{x_1}\cdot a^{x_2}\pmod  p
\]
for any $x_1,x_2\in\F_q$. If we use $q$ as the modulus ``over the exponent'' and $p$ as the modulus ``under the exponent'', the function $x\mapsto a^x$ will be consistent with the arithmetic law $e^{x_1+x_2} = e^{x_1}\cdot e^{x_2}$. Following the intuition, we define the evaluation of $C$ with respect to $(p,q,a)$ using a gate-by-gate evaluation algorithm such that: 
\begin{compactitem}
\item Each input variable or wire carries a pair $(u,v)\in(\F_p\cup\{\bot\})\times(\F_q\cup\{\bot\})$. 
\item Addition, multiplication, and division gates are implemented coordinate-wisely. 
\item The exponentiation gate is implemented by $(u,v)\mapsto (a^{v}\bmod p, \bot)$. 
\end{compactitem}
This evaluation algorithm can be implemented efficiently given the description of the circuit $C$. We refer readers to \Cref{sec: models} for more details. 

\paragraph{A Structural Lemma.} Let $\vec u=(u_1,\dots,u_n)\in\F_p^n$, $\vec v=(v_1,\dots,v_n)\in\F_q^n$, we use $C_{a}(\vec u,\vec v)$ to denote the output of the evaluation algorithm where the $i$-th input variable is assigned to $(u_i,v_i)$. Similar to the standard results that algebraic circuits compute polynomials \cite[Section 16]{AB09}, we can then prove that $\AExp^1$ circuits can be converted to an equivalent \emph{fraction of exponential polynomials}: 

\begin{lemma}[\Cref{lmm: normal form}, informal]\label{lmm: structural informal}
    For every $n$-input $\AExp^1$ circuit $C$, there are $k,k'\in\N$ and $n$-variate integer-coefficient exponential polynomials $P(\vec x)$ and $P'(\vec x)$ such that the following holds: 
    \begin{compactitem}
    \item For each $\vec u\in\vec\R$, $C(\vec u) = P(\vec u)/P'(\vec u)$. 
    \item Let $p,q$ be prime numbers such that $q\mid p-1$, $G_{p,q}$ be the multiplicative subgroup of $\F_p^*$ of order $q$, and $a\in G_{p,q}$. Then for every $(\vec u,\vec v)\in\vec\F_p\times \vec \F_q$, $C_a(\vec u,\vec v)\equiv P_a(\vec u,\vec v)\cdot(P'_a(\vec u,\vec v))^{-1}\pmod p$, where $(a)^{-1}$ denotes the multiplicative inverse of $a$ modulo $p$. 
    \end{compactitem}
\end{lemma}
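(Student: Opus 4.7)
The plan is to proceed by a structural decomposition of the $\AExp^1$ circuit $C$, exploiting the at-most-one-exp-gate-per-path restriction. This restriction partitions $C$ into three parts: a \emph{lower} layer consisting of all gates that are descendants of some exponentiation gate and which, by the restriction, is itself exp-free; the exp gates themselves; and an \emph{upper} layer between the exp gates and the output that is also exp-free. I will treat the $k$ exp-gate outputs as formal auxiliary variables $y_1,\ldots,y_k$ and analyze each layer with standard tools for exp-free arithmetic circuits.

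First I would apply the textbook structural result for arithmetic circuits with $+,\times,\div$ gates: each wire computes a rational function, and under the two-coordinate finite-field evaluation of \Cref{sec: models} each wire simultaneously carries that same rational function evaluated coordinate-wise at $\vec u$ modulo $p$ and at $\vec v$ modulo $q$ (with $\bot$ propagating whenever an intermediate denominator vanishes). Applying this to the lower layer, the $i$-th exp gate receives an input whose two coordinates equal $g_i(\vec u)/h_i(\vec u)\bmod p$ and $g_i(\vec v)/h_i(\vec v)\bmod q$ for appropriate polynomials $g_i,h_i$. The exp rule $(u,v)\mapsto(a^v\bmod p,\bot)$ then makes this wire carry exactly the $i$-th exponential atom on both sides.

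Next I would apply the same structural result to the upper layer, viewed as an exp-free circuit in the $n+k$ formal variables $\vec x,y_1,\ldots,y_k$: it computes some rational function $N(\vec x,\vec y)/D(\vec x,\vec y)$. Expanding $N$ and $D$ as polynomials in $\vec y$ with rational-function coefficients in $\vec x$, every monomial $\prod_i y_i^{a_i}$ specializes, after substituting $y_i=\exp(g_i/h_i)$, to $\exp\bigl(\sum_i a_i g_i/h_i\bigr)$ over $\R$ and to $a^{\sum_i a_i g_i(\vec v) h_i(\vec v)^{-1}\bmod q}\bmod p$ on the algebraic side. The latter identity uses exactly that $a$ has order $q$ in $\F_p^*$, which is why the algebraic model insists $a\in G$. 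Merging exponent fractions over the common denominator $\prod_i h_i$ and then clearing the remaining $\vec x$-denominators in the coefficients turns $N$ and $D$ into polynomial-coefficient exponential polynomials $P$ and $P'$; one then reads off $C=P/P'$ over $\R^n$, and $C_a(\vec u,\vec v)\equiv P_a(\vec u,\vec v)\cdot P'_a(\vec u,\vec v)^{-1}\pmod p$ whenever both sides are defined.

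I expect the main obstacle to be careful handling of the partial, $\bot$-valued semantics rather than any algebraic subtlety. The symbolic identification of the circuit output with a formal rational/exponential expression is routine, but at the finite-field level one has to track exactly when intermediate wires or the final denominator $P'_a(\vec u,\vec v)$ happen to vanish modulo $p$ or modulo $q$, and to check that whenever $C_a(\vec u,\vec v)$ is defined the stated equality really holds (and not merely up to clearing factors that might themselves be $0$ or $\bot$). Once the wire-by-wire two-coordinate semantics are nailed down, the remainder is a bookkeeping exercise of the same flavor as the classical ``every arithmetic circuit computes a polynomial'' lemma cited in the excerpt.
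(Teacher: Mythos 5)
Your proposal is correct in substance but is organized quite differently from the paper's argument. The paper proves \Cref{lmm: normal form} by a single gate-by-gate induction on the depth of $C$, where the inductive objects are fractions $P/P'$ of exponential polynomials and each gate type (constant, addition, multiplication, division, exponentiation) is handled by the formal arithmetic of exponential polynomials defined in \Cref{sec: exp polynomials}; the only place the one-exp-per-path restriction enters is the exponentiation case, where the subcircuit feeding the gate is exp-free and hence (by the induction, which tracks this extra claim) yields honest polynomials $P_1,P'_1$, giving $P=\exp(P_1/P'_1)$, $P'=1$. You instead decompose the circuit globally into an exp-free lower layer, the exponentiation gates treated as formal variables $y_1,\dots,y_k$, and an exp-free upper layer, apply the classical rational-function normal form to each exp-free part, and then substitute and merge monomials in the $y_i$ using $\exp(A)\exp(B)=\exp(A+B)$ (which on the algebraic side is exactly where $a\in G_{p,q}$ of order $q$ is needed, as you note). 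This buys a more transparent ``normal form'' picture and makes the origin of the exponent fractions explicit, at the cost of having to justify the layer decomposition and redo, in one global step, the closure of exponential polynomials under $+,\times,\div$ that the paper gets locally. One concrete caution on the point you flagged: the lemma requires \emph{equality of domains}, not just agreement where both sides are defined, so your expansion and ``clearing of denominators'' must be purely formal --- you may not cancel common factors of $N$ and $D$, drop monomials with zero coefficients, or reduce exponent fractions --- since any such simplification can enlarge $\dom_\R(P/P')$ or $\dom_{p,q,a}(P/P')$ beyond the circuit's domain (e.g.\ an upper layer computing $0\cdot y_1$ over a lower layer $1/x$ must retain the term $0\cdot\exp(1/x)$ so that $x=0$, respectively $h_1(\vec v)\equiv 0 \pmod q$, stays outside the domain). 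The paper's inductive construction, which accumulates all divisors and exponent denominators into $P'$ and the exponent fractions without simplification, handles this bookkeeping automatically; if you make your merging step equally non-cancelling, your route goes through.
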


We stress that the structural lemma \emph{does not} provide a polynomial upper bound on the degree, width, or integer weight of the integer-coefficient exponential polynomials $P,P'$. Nevertheless, it can be verified that the exponential polynomials $P,P'$ for $\AExp^1$ circuits from real-world neural network applications, such as $\softmax(\cdot)$ and Attention with softmax activations \cite{attention}, tend to have relatively small degree, width, and weight; interested readers are referred to \Cref{sec: real world NN} for more discussion, and \cite{Mirage24} for experimental results.  

\paragraph{The Main Theorem.} Let $\dom_\R(C)$ be the domain of the circuit $C$, i.e., the set of $\vec x\in\R^n$ such that $C(\vec x)\ne \bot$. An $\AExp^1$ circuit $C$ is said to have degree $d$, width $k$, and weight $w$ if there are integer-coefficient degree-$d$ width-$k$ exponential polynomials $P(\vec x)$ and $P'(\vec x)$ with all integer coefficients within $[-w,w]$ that satisfy \Cref{lmm: structural informal}. 

By generalizing \Cref{thm: real model informal,thm: correctness algebraic informal} to fractions of exponential polynomials and combining them with \Cref{lmm: structural informal}, we can obtain the final result: 

\begin{restatable}{theorem}{MainTheorem}\label{thm: algebraic model}
    Let $C:\R^n\to\R$ be an $\AExp^1$ circuit of width $k$, degree $d$ and weight $w$, $p$ and $q$ be prime numbers such that $q\mid p-1$ and $q > 2(kw)^2$. Let $G_{p, q}$ be the unique multiplicative subgroup of $\F_p^*$ of order $q$. The following hold:
    \begin{compactitem}
    \item (Completeness). If $C$ is identically zero on $\dom_\R(C)$, then for every $a \in G_{p, q}$, $C_{a}(\vec u,\vec v)\in \{0,\bot\}$ for every $(\vec u,\vec v)\in\F_p^n\times \F_q^n$.
    \item (Soundness). If $C$ is not identically zero on $\dom_\R(C)$, then for uniformly random $\vec u\gets \F_p^n, \vec v\gets\F_q^n,a\gets G_{p,q}$, $\Pr[C_{a}(\vec u,\vec v)\notin \{0, \bot\}]\ge 1 - 8dk^4\cdot q^{-1} - q^{-1/(k^2-1)}$.
    \end{compactitem}
\end{restatable}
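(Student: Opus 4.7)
The plan is to reduce \Cref{thm: algebraic model} to the identity-testing theorem for a \emph{single} exponential polynomial (the generalization of \Cref{thm: correctness algebraic informal}) via \Cref{lmm: structural informal}, applied to a suitable product polynomial.

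\textbf{Structural reduction and completeness.} \Cref{lmm: structural informal} yields width-$k$, degree-$d$ exponential polynomials $P, P'$ with integer coefficients in $[-w, w]$ such that $C = P/P'$ over $\R^n$, and $C_a(\vec u, \vec v) \equiv P_a(\vec u, \vec v) \cdot (P'_a(\vec u, \vec v))^{-1} \pmod p$ in the algebraic model; moreover $C_a = \bot$ exactly when $P_a = \bot$, $P'_a = \bot$, or $P'_a = 0$. If $C \equiv 0$ on $\dom_\R(C)$, then $P \equiv 0$ as an analytic function (it vanishes on the dense open set $\dom_\R(C) \cap \{\vec x : P'(\vec x) \ne 0\}$ and is real-analytic on its natural domain). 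Applying the completeness of the generalized single-polynomial theorem to $P$ then forces $P_a \in \{0, \bot\}$ for every query, whence $C_a \in \{0, \bot\}$.

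\textbf{Soundness via the product $Q = P \cdot P'$.} When $C \not\equiv 0$, both $P$ and $P'$ are nonzero analytic functions, so $Q \eqdef P \cdot P'$ is a nonzero exponential polynomial. Distributing the double sum and combining exponents via $g/h + g'/h' = (gh' + g'h)/(hh')$ writes $Q$ in standard form with width at most $k^2$, degree $O(d)$, and integer-coefficient weight $O((kw)^2)$. In the algebraic model one has $Q_a(\vec u, \vec v) \equiv P_a(\vec u, \vec v) \cdot P'_a(\vec u, \vec v) \pmod p$ whenever $Q_a \ne \bot$, and $Q_a = \bot$ iff some $h_i(\vec v) \cdot h_j'(\vec v) \equiv 0 \pmod q$. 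A short case analysis on the three sources of $\bot$ in $C_a$ then yields
\[
\{C_a(\vec u, \vec v) \in \{0, \bot\}\} \subseteq \{Q_a(\vec u, \vec v) \in \{0, \bot\}\}.
\]
The hypothesis $q > 2(kw)^2$ covers the weight of $Q$, so applying the generalization of \Cref{thm: correctness algebraic informal} to $Q$ (of width $k^2$) gives $\Pr[Q_a \in \{0, \bot\}] \le q^{-1/(k^2 - 1)} + O(d k^4 / q)$, which in turn bounds $\Pr[C_a \in \{0, \bot\}]$ by the same quantity.

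\textbf{Main obstacle.} The most delicate point is the event inclusion just invoked. The three sources of $\bot$ in $C_a$ behave differently: the cases $P_a = \bot$ and $P'_a = \bot$ immediately force $Q_a = \bot$ because the same $h_i$ or $h_j'$ vanishes mod $q$, whereas the case $P'_a = 0$ with $P_a \notin \{0, \bot\}$ must instead be handled via $Q_a \equiv P_a \cdot P'_a \equiv 0 \pmod p$. A companion technical issue is bookkeeping the coefficient blow-up in writing $Q = P \cdot P'$ in standard exponential-polynomial form so that its weight remains within $O((kw)^2)$, which is exactly the bound for which the hypothesis $q > 2(kw)^2$ is designed.
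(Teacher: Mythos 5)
Your soundness argument is essentially the paper's: form the product $R=P\cdot P'$ of width $k^2$, establish the event inclusion $\{C_a(\vec u,\vec v)\in\{0,\bot\}\}\subseteq\{R_a(\vec u,\vec v)\in\{0,\bot\}\}$, and invoke the single-exponential-polynomial bound with width $k^2$, which gives the exponent $-1/(k^2-1)$. The ``generalization of \Cref{thm: correctness algebraic informal}'' that you black-box is implemented in the paper by condensing $R$, checking that the condensed coefficients lie in $[-(kw)^2,(kw)^2]$ so that $q>2(kw)^2$ keeps them nonzero modulo $p$ and $q$, applying \Cref{lmm: zero poly ff}, and paying an extra $2dk^2/q$ to pass from the condensation $\hat R$ back to $R$; your sketch defers exactly these steps, so this half is the same route. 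One local fix: non-vanishing of $R$ on $\dom_\R(R)$ should be argued, as the paper does, from a single witness $\vec x\in\dom_\R(C)$ with $C(\vec x)\ne 0$ (at which both $P$ and $P'$ are defined and nonzero), not from ``$P$ and $P'$ are nonzero analytic functions,'' which by itself does not provide a common nonvanishing point.

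The genuine gap is in completeness. Your key claim ``if $C\equiv 0$ on $\dom_\R(C)$ then $P\equiv 0$'' is false in general, because $\dom_\R(C)$ need not be a dense (or even nonempty) open set. Take $C$ computing $1/(e^x-e^x)$: \Cref{lmm: normal form} yields $P=1$ and $P'=1\cdot\exp(x)+(-1)\cdot\exp(x)$, so $\dom_\R(C)=\varnothing$, $C$ is vacuously identically zero on its domain, yet $P\not\equiv 0$; completeness still holds, but only because $P'_a(\vec u,\vec v)=0$ for every query forces $C_a(\vec u,\vec v)=\bot$, a case your route never reaches. Even when $\dom_\R(C)\ne\varnothing$, density of $\{\vec x: P'(\vec x)\ne 0\}$ is not automatic for a real-analytic function on a disconnected domain (it could a priori vanish on a whole component); for exponential polynomials it is true, but proving it is precisely the content of the Lindemann--Weierstrass machinery (\Cref{lmm: equiv pexp}, \Cref{thm: soundness}), which you would need to invoke rather than assert. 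The paper avoids all of this by first proving a dichotomy via \Cref{thm: soundness} over a finite grid --- either $P$ is identically zero on $\dom_\R(P)$ or $P'$ is identically zero on $\dom_\R(P')$ --- then handling the $P'$ case by $C_a=\bot$, and the $P$ case through \Cref{lmm: for completeness}, \Cref{lmm: equiv pexp}, and \Cref{prop: condensation}. Your completeness argument needs this dichotomy (or an equivalent treatment of the degenerate denominator case) to go through.
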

  
\subsection{Technical Overview}

We briefly explain the proof of our simplified technical results \Cref{thm: real model informal} and \Cref{thm: correctness algebraic informal} in the simplified setting, as well as the main theorem (see \Cref{thm: algebraic model}) that generalizes the results to the circuit setting.

\paragraph{Real Queries.} The idea behind \Cref{thm: real model informal} is very simple: We manage to combine the Schwartz-Zippel lemma and Lindemann-Weierstrass theorem, i.e., $e$ is \emph{transcendental} (see \Cref{thm: LW}). 

To avoid technical subtlety, we assume that $P(\vec x)$ is a degree-$d$ exponential polynomial where the exponent fractions $g_i(\vec x)/h_i(\vec x)$ are pairwisely distinct; that is, for every pair of indices $i,j\in[k]$, $i\ne j$, $g_i(\vec x)h_j(\vec x) - g_j(\vec x)h_i(\vec x)$ is not a zero polynomial. Such an exponential polynomial is said to be \emph{condensed}, and the general case can be reduced to the condensed case by considering another exponential polynomial $\hat P$ that ``groups'' coefficients based on the exponent fraction; see \Cref{sec: condensation}. Moreover, we assume without loss of generality that $f_i\ne 0$ and $h_i\ne 0$ for each $i\in[k]$. 

Let $S\subseteq\Q$ be a non-empty finite set, we know by the Schwartz-Zippel lemma that for any $i\in[k]$, $j\in[k]$, $i\ne j$, each of the following non-zero polynomials of degree at least $2d$ 
\[
f_i(\vec x), \quad h_i(\vec x), \quad g_i(\vec x) h_j(\vec x) - g_j(\vec x) h_i(\vec x) 
\]
evaluates to zero with probability at most $2d / |S|$ for a uniformly random $\vec x\in S^n$. By the union bound, we can further prove that with probability at least $1-3dk^2/|S|$, none of the polynomials evaluate to $0$. For each of such $\vec x$, we can see that for the rational numbers $\alpha_i \eqdef g_i(\vec x)/h_i(\vec x)$ and $\beta_i\eqdef f_i(\vec x)$, we have 
\[
P(\vec x) = \beta_1 e^{\alpha_1} + \beta_2 e^{\alpha_2} + \dots + \beta_k e^{\alpha_k},
\]
which must be non-zero as $e$ is transcendental. 

\paragraph{Connection to Algebraic Queries.} Unwinding the proof of \Cref{thm: real model informal}, we could actually give a necessary and sufficient condition for an exponential polynomial with integer weights to be identically zero over real evaluations. Let $P(\vec x)$ be a condensed degree-$d$ exponential polynomial 
\[
P(\vec x) = \sum_{i=1}^k f_i(\vec x) \exp\left(\frac{g_i(\vec x)}{h_i(\vec x)}\right)
\]
with integer coefficients, then $P(\vec x)$ is identically zero on its domain (over real evaluation) if and only if for every $i\in[k]$, $f_i = 0$. Subsequently, if $p,q$ are chosen to be larger than the integer coefficients, the identity testing of a condensed exponential function $P(\vec x)$ is equivalent to testing whether $f_i = 0\pmod p$ for every $i\in[k]$. 

With the characterization above, the completeness of \Cref{thm: correctness algebraic informal} is relatively simple, so we will focus on the soundness property. 

We try to mimic the strategy in the proof of \Cref{thm: real model informal}. Suppose that $P(\vec x)$ is not identically zero on real evaluations, we know by the discussion above that when $p$ are sufficiently large, there is at least one $f_i\ne 0\pmod p$ for any $i\in[k]$. Moreover, if $q$ is sufficiently large, we know that for any $i\ne j$, $g_ih_j - g_jh_i\ne 0\pmod q$. Assume without loss of generality that $f_i\ne 0$ for all $i\in[k]$. For $\vec u\in\F_p^n$ and $\vec v\in\F_q^n$ sampled uniformly at random, we still know by the Schwartz-Zippel lemma and the union bound that with probability at least $1-3d k^2 / q$, none of 
\[
f_i(\vec x)\bmod p, \quad h_i(\vec x)\bmod q, \quad g_i(\vec x) h_j(\vec x) - g_j(\vec x) h_i(\vec x) \bmod q
\]
evaluates to zero for $i,j\in[k]$, $i\ne j$. We can then define $\alpha_i \eqdef g_i(\vec x)/h_i(\vec x)\bmod q$ and $\beta_i\eqdef f_i(\vec x)\bmod p$ for $i\in[k]$ such that 
\[
P_a(\vec u,\vec v) = \beta_1 a^{\alpha_1} + \dots \beta_k a^{\alpha_k} \bmod p, 
\]
where $\beta_1,\dots,\beta_k$ are non-zero and $\alpha_1,\dots,\alpha_k$ are pairwisely distinct. 

\paragraph{A Weak Descartes' Rule in Finite Fields.} It then suffices to prove that for any non-zero $\beta_1,\dots,\beta_k\in\F_p$ and distinct $\alpha_1,\dots,\alpha_k\in\F_q$, $k\ge 1$, if we sample $a\in G$ uniformly at random, the probability that $\beta_1 a^{\alpha_1} + \dots \beta_k a^{\alpha_k} \bmod p =0$ is at most $q^{-1/(k-1)}$. We note that this can be viewed as a generalization of the \emph{Descartes' Rule} (see also \cite{Kelley16}), which asserts that any univariate polynomial with $k$ monomials has at most $2k$ roots. 

We provide an elementary proof of a weaker result: The probability that $\beta_1 a^{\alpha_1} + \dots \beta_k a^{\alpha_k} \bmod p =0$ is at most $1-1/k$. \Cref{thm: correctness algebraic informal} is proved using a lemma from \cite{Kelley16}, which employs a similar but slightly more complicated argument. 

Recall that the order-$q$ multiplicative subgroup $G\subseteq\F_p^*$ is a cyclic group. For any fixed $i\in\F_q$, a random element $a\in G$ can be viewed as generated from randomly sampling $g\in G$ and outputting $g^i$. Our idea is to choose a good $i\in\F_q$ such that at least a $1/k$ fraction of $g$ satisfies that $\beta_1 g^{i\cdot \alpha_1} + \dots \beta_k g^{i\cdot \alpha_k} \bmod p \ne 0$. 

We say that an index $i\in\F_q$ is \emph{good} for $\alpha\in\F_q$ if $i\cdot\alpha \bmod q \le (1-1/k) \cdot q$, and is said to be \emph{good} if for every $j\in[k]$, $i$ is good for $\alpha_i$. It turns out that if there is a good $i\in\F_q$, 
\[
\beta_1 g^{i\cdot \alpha_1} + \dots + \beta_k g^{i\cdot \alpha_k}
\]
can be viewed as a polynomial in $\F_p[g]$ of degree at most $(1-1/k)\cdot q$; in that case, there are at most $(1-1/k)\cdot q$ roots in $\F_p$ (and also in $G$). 

Therefore, it remains to prove the existence of a good $i$. Since $i\mapsto i\cdot\alpha$ is a bijection in $\F_q$, we know that for any fixed $\alpha$, the probability that a random $i\in\F_q$ is good for $\alpha$ is at least $1-1/k$. By the union bound, we know that a random $i\in\F_q$ is good for $\alpha_1,\dots,\alpha_k$ with non-zero probability. This implies that there must be a good $i\in\F_q$, which completes the proof.  

\begin{remark}[Related results]
The analogy of Descartes' Rule over finite fields has been studied prior to our work. Motivated by understanding the security of the Diffie-Hellman cryptosystem, Canettie et al.~\cite{Canetti00} proved an upper bound on the number of roots of sparse univariate polynomials. This upper bound was later improved by Kelley \cite{Kelley16}. We use the techniques from \cite{Canetti00,Kelley16} and obtain similar upper bounds. Note that \cite{Canetti00,Kelley16} considers the number of roots in $\F_p$, while we consider the number of roots in a multiplicative subgroup $G\subseteq\F_P^*$; as a result, our upper bound is cleaner and easier to prove. 
\end{remark}

\paragraph{Generalization to the Circuit Setting.} There are a few technical issues to obtain the main theorem (see \Cref{thm: algebraic model}). 

First, the proof overview above assumes that the exponential polynomial is \emph{condensed}, i.e., its exponent fractions are pairwisely distinct. For exponential polynomials that are not condensed, we need to first \emph{condense} the polynomial by merging terms with equivalent exponent fractions. For instance, the following exponential polynomial 
\[
P(\vec x) = \exp\left(\frac{x^2-3x}{x-3}\right) + \exp\left(\frac{x^2-2x}{x-2}\right)
\]
may be condensed to 
\[
\hat P(\vec x) = 2\exp\left(\frac{x^2-3x}{x-3}\right). 
\]
In general, such condensation procedure results into another exponential polynomial $\hat P$ that has larger domain and agree with $P$ on the domain of $P$ (see \Cref{prop: condensation}). We need to bridge the gap between $P$ and $\hat P$ with standard probability analysis. 

Second, as shown in \Cref{sec: intro circuits}, $\AExp^1$ circuits are converted to fractions $P/P'$ of exponential polynomials rather than exponential polynomials. This requires a careful (but straightforward) adaption of the techniques above. In particular, in the soundness analysis, we use the observation that $P/P'$ is identically zero on its domain (over $\R$) if and only if the exponential polynomial $P\cdot P'$ is identically zero on its domain (over $\R$). This leads to a quadratic overhead (in $k$) in the soundness error of \Cref{thm: algebraic model} compared to \Cref{thm: correctness algebraic informal}, as $P\cdot P'$ may have width up to $k^2$ when both $P$ and $P'$ are of width $k$.  

\paragraph{Organization of the Paper.} We review basic definitions and classical results that we will need in \Cref{sec: preliminaries}. In \Cref{sec: models}, we formally describe our circuit model as well as the query models we considered in our paper --- the idealized real query model and the algebraic query model; we also briefly explain why the algebraic query model is a better abstraction in the application of optimization compilers for neural networks. In \Cref{sec: exp polynomials}, we define exponential polynomials, discuss its basic properties, and prove the structural lemma that converts circuits to fractions of exponential polynomials. In \Cref{sec: algorithms}, we prove the correctness of our probabilistic testing algorithm in the real and algebraic query models.

\subsection{Discussion and Open Problems}

The most interesting open problem is to improve the soundness error of \Cref{thm: correctness algebraic informal} and \Cref{thm: algebraic model}. We conjecture that the actual soundness error is $1-\Omega(1)$. In particular, we propose the following number-theoretic conjecture under which the soundness error is indeed $1-\Omega(1)$:

\begin{conjecture}[Strong Descartes' Rule Conjecture over Finite Fields]
    There are constants $\eps,\delta<1$ such that the following holds. Let $p$, $q$ be sufficiently large prime numbers such that $q\mid p-1$, $g\in\F_q$ be an element of order $g$, and $G\eqdef\langle g \rangle$ be the unique multiplicative subgroup of $\F_p^*$ of order $q$. Let $k\le q^\delta$, $\alpha_1,\dots,\alpha_k\in \F_q$ be distinct, and $\beta_1,\dots,\beta_k\in\F_p$. Then the univariate polynomial
    \[
    f(z)\eqdef \beta_1 z^{\alpha_1} + \beta_2 z^{\alpha_2} + \dots + \beta_k z^{\alpha_k}
    \]
    has at most $\eps \cdot q$ roots in $G$.
\end{conjecture}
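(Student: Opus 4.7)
The plan is to reformulate the problem in additive terms and then attempt a dichotomy between structured and unstructured exponent sets, combined with character sum estimates. Identifying $G = \langle g\rangle$ with $\Z/q\Z$ via $s \mapsto g^s$, the roots of $f$ in $G$ biject with
\[
T \eqdef \bigl\{s \in \Z/q\Z : F(s) = 0\bigr\}, \qquad F(s) \eqdef \sum_{i=1}^k \beta_i \, g^{s\alpha_i \bmod q},
\]
and the conjecture becomes $|T| \le \eps q$. In this language $F$ is a sparse ``trigonometric polynomial'' on $\Z/q\Z$ with frequencies $\alpha_1,\dots,\alpha_k \in \Z/q\Z$ and $\F_p$-valued coefficients $\beta_i$. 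The reformulation also clarifies why the weak Descartes rule from \Cref{sec: algorithms} saves only a factor $1/k$: its rescaling trick shrinks the effective degree of a polynomial in $z$ but ignores both the sparsity and the arithmetic structure of the $\alpha_i$.

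I would then split into cases according to the additive structure of $\{\alpha_i\}$ inside $\Z/q\Z$. In the \emph{unstructured} case, I would exploit square-root cancellation in character sums of the form $\sum_{s \in \Z/q\Z} \psi\bigl(F(s)\bigr)$ over nontrivial additive characters $\psi$ of $\F_p$, aiming for a Bombieri--Weil or Deligne-type bound of shape $|T| \le q/2 + O_k(\sqrt{q})$, which already suffices once $q \gg k^{O(1)}$. In the \emph{structured} case, where $\{\alpha_i\}$ lives in a short generalized arithmetic progression in $\Z/q\Z$, I would try to transfer additive structure from the frequencies to $T$ via an energy or Pl\"unnecke-type argument on $F^{-1}(0)$, and then combine this inherited additive structure on $T$ with the multiplicative structure coming from the action of $g$ on $G$ to derive a contradiction when $|T|$ is close to $q$, using the sum-product philosophy of Bourgain--Glibichuk--Konyagin in $\F_p$.

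The main obstacle, I expect, is the structured case. The polynomial method and Bezout give nothing past the trivial bound $q-1$, the weak Descartes rule tops out at $(1-1/k)q$, and square-root cancellation in character sums typically fails precisely when the frequencies are arithmetically structured, so any absolute-constant improvement demands a genuinely new input. A complementary direction worth pursuing is geometric: the map $s \mapsto (g^{s\alpha_1},\dots,g^{s\alpha_k})$ parametrizes a cyclic subgroup of the algebraic torus $(\F_p^*)^k$, and roots of $F$ correspond to intersections of this subgroup with the hyperplane $\sum_i \beta_i x_i = 0$, which invites effective Mordell--Lang or subspace-theorem-style bounds. Turning such heuristics into quantitative statements with the correct dependence on $k$ and $q$ appears to require substantial new machinery, which is why we state the result as a conjecture rather than a theorem.
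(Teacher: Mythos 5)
The statement you are addressing is stated in the paper as a \emph{conjecture}: the authors give no proof of it (they only prove the much weaker bound $q^{1-1/(k-1)}$ in \Cref{thm: week lw} via the rescaling argument of Canetti et al.\ and Kelley), so there is no proof of the paper's to compare yours against. More importantly, your proposal is not a proof either; it is a program whose decisive steps are unsubstantiated. In the ``unstructured'' case you invoke Bombieri--Weil/Deligne-type square-root cancellation for $\sum_{s}\psi(F(s))$, but the available complete-sum bounds of that type degrade with the degree of the polynomial in $z=g^s$, and here the exponents $\alpha_i$ may be as large as $q-1$, so Weil-type estimates are trivial in exactly the regime of interest; the genuinely sparse-polynomial exponential sum bounds that do exist (Bourgain, Cochrane--Pinner, and relatives) need gcd/spacing conditions on the differences $\alpha_i-\alpha_j$ and give savings nowhere near uniform in $k\le q^{\delta}$. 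Your stated target $|T|\le q/2+O_k(\sqrt q)$ also does not match the computation you would actually run: by orthogonality over additive characters of $\F_p$, uniform square-root cancellation would give $|T|\lesssim q/p+O_k(\sqrt q)$, which shows the claim as written has not been thought through quantitatively, and in any case no citable theorem covers the needed range of parameters.

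In the ``structured'' case you concede that the polynomial method, the weak Descartes rule, and character sums all fail and that ``a genuinely new input'' is required --- which is to say the case is open, and it is precisely the case any proof must handle, since nothing prevents the $\alpha_i$ from lying in a short generalized arithmetic progression. The dichotomy itself is also not made precise (no quantitative definition of structure, no argument that the two regimes are exhaustive with compatible parameters), and the Mordell--Lang/torus heuristic at the end is offered without any effective statement over $\F_p$. So what you have is a reasonable map of attack directions --- consistent with the paper's own decision to leave the statement as a conjecture and with its remark connecting it to an uncertainty-principle phenomenon for the isomorphism of \Cref{cor: isomorphism} --- but not a proof, and it should not be presented as one.
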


This conjecture can be interpreted as a property of the isomorphism mapping 
\[ 
I:[f] \mapsto (f(1), f(g), \dots, f(g^{q-1}))
\] 
from $\F_p[x]/(x^q-1)$ onto $\F_p^q$ (see \Cref{cor: isomorphism}). It states that there is no non-zero polynomial $[f]\in\F_p[x]/(x^q-1)$ such that both $f$ and $I([f])$ are very sparse. This seems to be an analogy of the ``uncertainty principle'' in Fourier analysis (see, e.g., \cite[Exercise 3.15]{Boolean-function}). We also note that the numerical experiments (see, e.g., \cite{Kelley16,CGRW17}) suggest that it is hard to construct sparse polynomials over $\F_p$ with many roots for prime $p$. 

\paragraph{Complexity-theoretic Perspectives.} It is also interesting to consider whether we can design better identity testing algorithms for some restricted classes of $\AExp^1$ circuits (e.g.~of small constant depth). This would potentially lead to real-world applications, as the structure of circuits from neural networks is relatively simple. To start with, one may consider adapting existing techniques for identity testing of algebraic circuits (see, e.g., \cite{Saxena09,Saxena14}) to $\AExp^1$ circuits.

On the other hand, it is interesting to consider whether there are conditional or unconditional \emph{lower bounds} for identity testing of $\AExp^1$ circuits in black-box models, such as the algebraic query models that we introduced in \Cref{sec: intro circuits}. 

\paragraph{Other Modeling of the Problem.} We note the the algebraic query model in \Cref{sec: intro circuits} is not necessarily the only reasonable formalization of the real world problem. Recall that in optimization compilers for neural networks \cite{PET21,Mirage24}, the neural network (modeled by an $\AExp^1$ circuit) is loaded into specialized devices (such as GPU or TPU) that are optimized for specific computation patterns and have high communication overhead with the CPU. Algorithms in other black-box query models are potentially useful for real-world applications if: 
\begin{compactenum} 
\item the queries can be efficiently implemented on those specialized devices; and
\item the communication overhead is small.
\end{compactenum} 
Note that it may make it possible to design better identity testing algorithms if we work with another black-box query model that has better mathematical properties. 

\section*{Acknowledgement}

We are grateful to Seyoon Ragavan for mentioning the connection of our results to \cite{Kelley16} and related works. We also thank Ryan O'Donnell and Ryan Williams for discussions about the identity testing algorithm and its analysis, and Zhihao Jia and Oded Padon for discussions about modeling the real-world problem. Jiatu Li received support from the National Science Foundation under Grant CCF-2127597. Mengdi Wu received support from NSF awards CNS-2147909, CNS-2211882, and CNS-2239351 and research awards from Amazon, Cisco, Google, Meta, NVIDIA, Oracle, Qualcomm, and Samsung.

\section{Preliminaries}
\label{sec: preliminaries}

\subsection{Abstract Algebra}

We assume basic familiarity to elementary ring and field theory (see, e.g., \cite{Abstract04}). We will use the standard notation: $R[x_1,\dots,x_m]$ denotes the ring of $m$-variate polynomials with coefficients in $R$; for any ideal $I$ in $R$, $R/I$ denotes the quotient ring of $R$ modulo $I$; $\Quot(R)$ denotes the quotient field (i.e.~the field of fraction) extending an integral domain $R$. For any prime $p$ and $u\in\F_p\setminus\{0\}$, we use $\Inv_p(u)$ to denote the multiplicative inverse of $u$ modulo $p$.

We will need a representation theorem for $\F_p[x]/(x^q-1)$ from the Chinese Remainder Theorem. 

\begin{theorem}[Chinese Remainder Theorem {\cite[Section 7.6]{Abstract04}}]
Let $R$ be a commutative ring with identity $1\ne 0$, $A_0,\dots,A_{k-1}$ be ideals in $R$. The map $R\to R/A_0\times \dots R/A_{k-1}$ defined by $r\mapsto (r+A_0,\dots,r+A_{k-1})$ is a ring homomorphism with kernel $A_0\cap \dots\cap A_{k-1}$. 
\end{theorem}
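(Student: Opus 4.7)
The plan is to unwind the definition of the map and verify the two properties directly; the statement as phrased is just the ``easy'' half of the Chinese Remainder Theorem (surjectivity and the comaximality computation $\bigcap_i A_i = \prod_i A_i$ are not asserted), so this is essentially a bookkeeping exercise rather than a substantive theorem. Let me write $\phi : R \to R/A_0 \times \dots \times R/A_{k-1}$ for the given map and $\pi_i : R \to R/A_i$ for the standard quotient projection, so that $\phi(r) = (\pi_0(r), \dots, \pi_{k-1}(r))$.

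First I would check that $\phi$ is a ring homomorphism. The ring operations on the product $\prod_i R/A_i$ are defined componentwise, so it suffices to verify the homomorphism identities coordinate by coordinate. In each coordinate, $\pi_i$ is the canonical quotient map, which is a ring homomorphism by the standard construction of the quotient ring $R/A_i$ from a (two-sided) ideal $A_i$. Hence $\phi(r+s)=\phi(r)+\phi(s)$, $\phi(rs)=\phi(r)\phi(s)$, and $\phi(1_R)$ is the tuple of multiplicative identities, which is the identity of the product ring.

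Second I would identify the kernel. By definition, $r \in \ker(\phi)$ iff $\pi_i(r) = 0_{R/A_i}$ for every $i$, which is equivalent to $r + A_i = A_i$ for every $i$, i.e., $r \in A_i$ for every $i$. This is exactly the condition $r \in A_0 \cap \dots \cap A_{k-1}$, so $\ker(\phi) = \bigcap_{i=0}^{k-1} A_i$, completing the proof.

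There is essentially no obstacle here: the only thing to be a little careful about is to keep the coset notation straight and to note that the commutativity/identity hypothesis on $R$ is used only to ensure that each $A_i$ is a two-sided ideal and that $R/A_i$ inherits a ring structure with identity. The nontrivial content of the CRT that the paper will actually need later (for the representation of $\F_p[x]/(x^q-1)$) is the further statement that $\phi$ is surjective when the ideals are pairwise comaximal, but that will presumably be invoked as a separate consequence rather than proved inside this theorem.
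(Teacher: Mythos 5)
Your proof is correct: the paper simply cites this statement from the textbook without giving a proof, and your direct componentwise verification (each coordinate map is the canonical quotient homomorphism, and the kernel is exactly $\bigcap_i A_i$) is the standard argument, with the surjectivity-under-comaximality part rightly noted as not being asserted here. No issues.
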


\begin{corollary}\label{cor: isomorphism}
Let $p,q$ be prime numbers such that $q\mid p-1$. Let $G = \langle g\rangle$ be the unique multiplicative subgroup of $\F_p^*$ of order $q$. Then $\F_p[x]/(x^q-1)\cong \F^q_p$, and the map $E : \F_p[x]/(x^q-1) \to \F^q_p$ defined by $f\mapsto (f(1), f(g), \dots, f(g^{q-1}))$ is an isomorphism.   
\end{corollary}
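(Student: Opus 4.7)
The plan is to factor $x^q-1$ completely over $\F_p$, apply the Chinese Remainder Theorem, and identify the resulting quotient rings with $\F_p$ via evaluation.

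First I would verify that $x^q-1$ splits into distinct linear factors over $\F_p$. Since $G=\langle g\rangle$ has order $q$, every element $g^i$ satisfies $(g^i)^q=1$, so each $g^i$ is a root of $x^q-1$. These $q$ elements are pairwise distinct (as $g$ has order $q$), so $x^q-1$ has at least $q$ distinct roots in $\F_p$; being of degree $q$, it therefore factors as $x^q-1=\prod_{i=0}^{q-1}(x-g^i)$ in $\F_p[x]$. Next I would note that for $i\ne j$ the ideals $(x-g^i)$ and $(x-g^j)$ of $\F_p[x]$ are comaximal, since $g^i-g^j\in\F_p^*$ is a unit and equals $(x-g^j)-(x-g^i)$; hence their sum is the whole ring. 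It follows that the ideal $(x^q-1)$ equals the intersection $\bigcap_{i=0}^{q-1}(x-g^i)$.

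Then I would apply the Chinese Remainder Theorem (stated just above the corollary) to the pairwise comaximal ideals $A_i=(x-g^i)$ in $R=\F_p[x]$. The kernel of the product map $r\mapsto(r+A_0,\dots,r+A_{q-1})$ is $\bigcap_i A_i=(x^q-1)$, and comaximality makes the induced map surjective. This yields a ring isomorphism
\[
\F_p[x]/(x^q-1)\;\xrightarrow{\ \sim\ }\;\prod_{i=0}^{q-1}\F_p[x]/(x-g^i).
\]
Finally, for each $i$ the evaluation map $f\mapsto f(g^i)$ is a surjective ring homomorphism $\F_p[x]\to\F_p$ with kernel $(x-g^i)$, so $\F_p[x]/(x-g^i)\cong\F_p$. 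Composing, the induced isomorphism $\F_p[x]/(x^q-1)\to\F_p^q$ sends $f$ to $(f(1),f(g),\dots,f(g^{q-1}))$, which is exactly the map $E$. As a consistency check, both sides are $\F_p$-vector spaces of dimension $q$ (the left side has basis $1,x,\dots,x^{q-1}$), which matches.

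The proof is essentially a direct invocation of the CRT, so there is no real obstacle; the only point that deserves care is justifying the complete splitting of $x^q-1$, which I would handle by the explicit root-counting argument above (relying on $q\mid p-1$ to guarantee the existence of the order-$q$ subgroup $G$).
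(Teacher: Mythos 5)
Your proof is correct and follows essentially the same route as the paper's own sketch: take the ideals $A_i = (x - g^i)$, apply the Chinese Remainder Theorem, and identify each quotient $\F_p[x]/(x-g^i)$ with $\F_p$ via evaluation at $g^i$. The only difference is that you spell out details the paper leaves implicit, namely the complete splitting $x^q-1=\prod_i(x-g^i)$, the identification of the kernel with $(x^q-1)$, and the comaximality argument for surjectivity.
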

\begin{proof}[Proof Sketch]
Let $A_i = \langle x-g^i\rangle$ for $i\in \{0,1,\dots,q-1\}$ be an ideal in $\F_p[x]$. Clearly, we have that $\F_p[x]/A_i\cong \F_p$ by the isomorphism $f+A_i\mapsto f(g^i)$. By the Chinese Remainder Theorem, we have that the map $R:\F_p[x]\to \F_p[x]/A_0\times \dots\times \F_p[x]/A_{q-1}$ defined by $f\mapsto (f+A_0,\dots,f+A_{q-1})$ is a homomorphism with kernel $A_0\cap \dots\cap A_{k-1} = \F_p/(x^q - 1)$. Therefore, $R$ is an isomorphism from $\F_p[x]/(x^q-1)$ onto $\F_p[x]/A_0\times \dots\times \F_p[x]/A_{q-1}$. The corollary then follows by composing $R$ and the isomorphism from $\F_p[x]/A_i$ onto $\F_p$. 
\end{proof}

Recall that an \emph{integral domain} is a non-zero commutative ring where the multiplication of two non-zero elements is non-zero. We will need the following result for polynomials. 

\begin{lemma}[{\cite[Proposition 1 of Section 9.1]{Abstract04}}]\label{lmm: polynomial integral domain}
    For any integral domain $R$, the ring of $R$-coefficient multi-variate polynomials $R[x_1,\dots,x_m]$ is also an integral domain. 
\end{lemma}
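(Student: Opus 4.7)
The plan is to reduce the multivariate statement to the univariate one by induction on the number of variables $m$, using the standard isomorphism
\[
R[x_1,\dots,x_m] \;\cong\; \bigl(R[x_1,\dots,x_{m-1}]\bigr)[x_m].
\]
Once the base case $m=1$ is established, the inductive step is immediate: if $R$ is an integral domain then by induction so is $R' \eqdef R[x_1,\dots,x_{m-1}]$, and applying the base case to $R'$ gives that $R'[x_m] = R[x_1,\dots,x_m]$ is an integral domain as well.

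For the base case, I would argue directly using leading coefficients. Let $f(x), g(x) \in R[x]$ be non-zero, write
\[
f(x) = a_n x^n + a_{n-1} x^{n-1} + \dots + a_0, \qquad g(x) = b_m x^m + b_{m-1} x^{m-1} + \dots + b_0,
\]
with $a_n, b_m \ne 0$ (choosing $n,m$ as the degrees). The coefficient of $x^{n+m}$ in the product $f(x) g(x)$ is exactly $a_n b_m$, because any other contribution to that coefficient would come from a pair $(a_i, b_j)$ with $i+j=n+m$ and $i \le n$, $j \le m$, forcing $i=n$ and $j=m$. Since $R$ is an integral domain and $a_n, b_m$ are both non-zero, $a_n b_m \ne 0$, hence $fg \ne 0$. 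Commutativity and the presence of identity $1 \ne 0$ in $R[x]$ are inherited from $R$ coordinate-wise, so $R[x]$ is an integral domain.

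I do not expect any real obstacle here; the main care is just to justify the ring isomorphism $R[x_1,\dots,x_m] \cong R[x_1,\dots,x_{m-1}][x_m]$ cleanly, which is a routine matter of matching monomials, and to verify that the definition of integral domain (commutative, $1 \ne 0$, no zero-divisors) is preserved at each step. The whole argument is a short induction plus the leading-coefficient calculation above.
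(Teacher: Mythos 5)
Your proof is correct and is essentially the standard argument used in the cited source (Dummit--Foote, Section 9.1): the paper itself gives no proof, citing the textbook, whose proof is exactly your leading-coefficient computation for the univariate case combined with induction via $R[x_1,\dots,x_m]\cong (R[x_1,\dots,x_{m-1}])[x_m]$. No gaps.
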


\subsection{Schwartz-Zippel Lemma}

\begin{lemma}[\cite{DL78,Zippel79,Schwartz80}]\label{lmm: Schwartz-Zippel}
Let $R$ be an integral domain and $S\subseteq R$ be a finite subset of $R$. For any $m,d\in\N$ and any non-zero $m$-variate polynomial $f:R^m\to\R$ of total degree $d$
\[
\Pr_{\vec x\in S^m}[f(\vec x) = 0] \le \frac{d}{|S|},
\]
where $\vec x=(x_1,\dots,x_m)$ is uniformly sampled from $S^m$. 
\end{lemma}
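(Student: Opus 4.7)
The plan is to prove the Schwartz--Zippel lemma by induction on the number of variables $m$, using the classical argument that exploits the integral-domain structure of $R$ (specifically, that a non-zero univariate polynomial of degree $d$ over an integral domain has at most $d$ roots).

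For the base case $m=1$, I would argue that a non-zero univariate polynomial $f \in R[x]$ of degree $d$ has at most $d$ roots in $R$. This is standard: if $\alpha \in R$ is a root, then $(x-\alpha)$ divides $f$ in $\Quot(R)[x]$, and one peels off linear factors one by one; the degree bounds the number of iterations, and the integral-domain property ensures the factorization genuinely reduces the degree (using \Cref{lmm: polynomial integral domain} to keep us inside a nice ring). Hence $\Pr_{x \in S}[f(x)=0] \le d/|S|$.

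For the inductive step, I would write $f$ as a polynomial in $x_m$ with coefficients in $R[x_1,\dots,x_{m-1}]$:
\[
f(x_1,\dots,x_m) = \sum_{i=0}^{d'} f_i(x_1,\dots,x_{m-1})\, x_m^i,
\]
where $d'$ is chosen so that $f_{d'}$ is the highest non-zero coefficient; note $f_{d'}$ has total degree at most $d - d'$ in its $m-1$ variables. I would then decompose the event $\{f(\vec x) = 0\}$ according to whether $f_{d'}(x_1,\dots,x_{m-1})$ vanishes. By the inductive hypothesis applied to $f_{d'}$, the probability of the first sub-event is at most $(d-d')/|S|$. Conditioning on $f_{d'}(x_1,\dots,x_{m-1}) \ne 0$, the polynomial $f(x_1,\dots,x_{m-1}, x_m)$ viewed as a polynomial in $x_m$ alone is a non-zero polynomial of degree exactly $d'$ over the integral domain $R$, so the base case gives conditional probability at most $d'/|S|$. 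A union bound (or total probability computation) then yields the desired $d/|S|$ bound.

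The main obstacle, such as it is, lies in being careful about two points: (i) ensuring that the ``leading coefficient'' $f_{d'}$ is genuinely non-zero as an element of $R[x_1,\dots,x_{m-1}]$, which requires that $R[x_1,\dots,x_{m-1}]$ be an integral domain so that non-vanishing of the polynomial $f$ implies some coefficient $f_i$ is non-zero; and (ii) handling the conditional probability correctly, which I would phrase via the law of total probability by first sampling $(x_1,\dots,x_{m-1})$ and then $x_m$ independently, so that conditioning on $(x_1,\dots,x_{m-1})$ makes the residual polynomial in $x_m$ a fixed univariate polynomial to which the base case applies directly.
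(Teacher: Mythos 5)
Your proposal is correct: this is the standard proof of the Schwartz--Zippel lemma, and the paper itself does not prove the statement but simply cites \cite{DL78,Zippel79,Schwartz80}, where essentially this induction-on-variables argument appears. The two points you flag are exactly the right ones to be careful about, and both are handled properly: the base case needs only that a non-zero degree-$d$ univariate polynomial over an integral domain has at most $d$ roots (note you can even divide by $x-\alpha$ inside $R[x]$ itself, since $x-\alpha$ is monic, so passing to $\Quot(R)[x]$ is optional), and in the inductive step, conditioning on $f_{d'}(x_1,\dots,x_{m-1})\ne 0$ guarantees the specialized univariate polynomial in $x_m$ has non-zero leading coefficient, hence is non-zero of degree exactly $d'$, so the total-probability/union-bound computation gives $(d-d')/|S| + d'/|S| = d/|S|$ as claimed.
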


\subsection{Transcendental Number Theory}

\begin{definition}
A complex number $\alpha$ is called \emph{algebraic} if it is the root of a non-zero integer-coefficient polynomial of finite degree, and called \emph{transcendental} if it is not algebraic.
\end{definition}

In particular, any rational number $\alpha=p/q$ is algebraic as it is the root of the degree-$1$ integer-coefficient polynomial $qx-p$.   

\begin{theorem}[Lindemann–Weierstrass Theorem {\cite[Theorem 1.4]{Baker-book}}]\label{thm: LW}
    If $\alpha_1,\dots,\alpha_n$ are distinct algebraic numbers, $e^{\alpha_1},\dots,e^{\alpha_n}$ are linearly independent over algebraic numbers. In particular, $e$ is transcendental.
\end{theorem}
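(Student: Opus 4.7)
The plan is to follow the classical Hermite--Lindemann--Weierstrass strategy: assume a putative linear dependence, produce from it an algebraic integer $J$ that is simultaneously (i) a non-zero multiple of $(p-1)!$ for every sufficiently large prime $p$ and (ii) analytically small, and derive a contradiction. It suffices to prove the equivalent statement that if $\alpha_1,\dots,\alpha_n$ are distinct algebraic numbers and $\beta_1,\dots,\beta_n$ are algebraic numbers, not all zero, then $\sum_{i=1}^n \beta_i e^{\alpha_i}\neq 0$. Transcendence of $e$ then follows by taking $\alpha_i=i$ and $\beta_i$ rational.

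First, I would carry out a Galois-theoretic reduction. Given a hypothetical relation $\sum_i \beta_i e^{\alpha_i}=0$, I would form the product of all conjugate relations obtained by letting $\mathrm{Gal}(K/\Q)$ act on the $\alpha_i,\beta_i$, where $K$ is a normal closure. Expanding the product and grouping equal exponents yields a new relation $\sum_{j=1}^m b_j e^{\gamma_j}=0$ in which the $\gamma_j$ are distinct algebraic numbers and the coefficients $b_j$ are rational integers, not all zero, with the additional property that the full set $\{\gamma_j\}$ is stable (with multiplicity) under the Galois action. A standard ``grouping by Galois orbits'' step then lets me further arrange that within each orbit the $b_j$ are the same rational integer, which is what makes the later symmetry arguments succeed.

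Next, for a large prime $p$ to be chosen, I would introduce the auxiliary polynomial
\[
f_p(x)\;=\;\ell^{np}\,\frac{x^{p-1}\prod_{j=1}^m (x-\gamma_j)^p}{(p-1)!},
\]
where $\ell\in\Z_{>0}$ is chosen so that $\ell\gamma_j$ are all algebraic integers, and the associated entire function $I(t)=\int_0^t e^{t-u}f_p(u)\,du$. Integration by parts gives the identity $I(t)=e^t\sum_{k\ge 0}f_p^{(k)}(0)-\sum_{k\ge 0}f_p^{(k)}(t)$, and I would set
\[
J\;\eqdef\;\sum_{j=1}^m b_j\,I(\gamma_j).
\]
Using the supposed relation $\sum_j b_j e^{\gamma_j}=0$, the $e^{t}$-terms collapse, leaving $J=-\sum_{j,k} b_j f_p^{(k)}(\gamma_j)$. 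A careful inspection of the derivatives of $f_p$ shows that each $f_p^{(k)}(\gamma_j)$ is an algebraic integer divisible by $p$, except for the unique term $k=p-1$ at the zero $\gamma_j$ of multiplicity exactly $p-1$ in $x^{p-1}$ (and similarly for the $(x-\gamma_j)^p$ factors), which contributes a Galois-invariant algebraic integer not divisible by $p$ for $p$ large. Summing over Galois orbits with the orbit-constant coefficients $b_j$, I obtain that $J$ is a rational integer, non-zero modulo $p$, so $|J|\ge 1$.

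Finally I would bound $|J|$ analytically from above. The trivial estimate $|I(t)|\le |t|e^{|t|}\max_{[0,t]}|f_p(u)|$ gives $|J|\le C^p/(p-1)!$ for an absolute constant $C$ depending on $n$, the $\gamma_j$, the $b_j$, and $\ell$, but not on $p$. Choosing $p$ large enough makes the right side strictly less than $1$, contradicting $|J|\ge 1$. The main obstacle, and the part that needs the most care, is the arithmetic step: showing that $J$ is a non-zero rational integer with the correct divisibility properties. This requires simultaneously (a) handling the high-order vanishing of $f_p$ at each $\gamma_j$ so that all $f_p^{(k)}(\gamma_j)$ with $k\le p-1$ vanish except the ``diagonal'' one, (b) verifying that the non-vanishing diagonal term is not killed modulo $p$, and (c) using the Galois-orbit structure of the $\gamma_j$ and uniform coefficients $b_j$ to certify that the sum over orbits lies in $\Z$ rather than just in the ring of integers of $K$. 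Schwartz--Zippel, finite fields, and the rest of the machinery built earlier in the paper play no role here; the argument is purely analytic-arithmetic and the result is invoked only as a black box in Theorem~\ref{thm: real model informal}.
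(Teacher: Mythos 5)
The paper does not prove this statement at all---it is quoted verbatim from Baker's book as a classical black box---so the only question is whether your sketch would stand on its own, and as written it has a genuine gap at the heart of the arithmetic step. Your single auxiliary polynomial $f_p(x)=\ell^{np}\,x^{p-1}\prod_{j=1}^m(x-\gamma_j)^p/(p-1)!$ has multiplicity $p$ at every exponent $\gamma_j$ and multiplicity $p-1$ only at $x=0$. But $J=\sum_j b_j I(\gamma_j)$ only ever evaluates derivatives of $f_p$ at the points $\gamma_j$, never at $0$ (unless $0$ happens to be one of the exponents, which you cannot assume in the general theorem). Hence every term $f_p^{(k)}(\gamma_j)$ with $k<p$ vanishes and every term with $k\ge p$ carries a factor $p!/(p-1)!=p$, so $J\equiv 0\pmod p$; this is perfectly consistent with $J=0$, and the analytic upper bound then yields no contradiction. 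The ``diagonal $k=p-1$ term'' you invoke simply is not present. This choice of $f_p$ is the one tailored to the transcendence of $e$ (where the relation $a_0+a_1e+\dots+a_ne^n=0$ has the exponent $0$ with $a_0\neq 0$); it does not prove Lindemann--Weierstrass. The standard repair (this is what Baker actually does) is to use $m$ auxiliary polynomials $f_i(x)=\ell^{np}\prod_{j}(x-\gamma_j)^p/(x-\gamma_i)$, i.e.\ multiplicity $p-1$ at $\gamma_i$ and $p$ elsewhere, form $J_i=\sum_j b_j I_i(\gamma_j)$ so that the non-$p$-divisible term $b_i f_i^{(p-1)}(\gamma_i)$ survives, and then take the product $J_1\cdots J_m$, which is Galois-symmetric and hence a rational integer, divisible by $((p-1)!)^m$ but not by $p$ for large $p$, while still analytically bounded by $C^p$. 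Note that shifting exponents so that $\gamma_1=0$ does not rescue your version, since it destroys the Galois stability of the exponent set on which your rationality claim for $J$ depends.

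Two further points need care even in a sketch. First, after multiplying the conjugate relations you must argue that the expanded relation $\sum_j b_j e^{\gamma_j}=0$ is not the trivial relation $0=0$; the usual device is to order exponents (say lexicographically by real and imaginary parts) and observe that the extremal exponent in the product has coefficient a product of nonzero numbers. Second, the statement ``within each orbit the $b_j$ are the same rational integer'' is exactly what the symmetrization must deliver and deserves justification, since the rationality of the final quantity (your $J$, or the product $\prod_i J_i$ in the corrected argument) rests on it. With the multi-polynomial/product structure in place, your analytic bound $|J_i|\le C^p/(p-1)!$ and the choice of large $p$ go through as you describe.
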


\section{Circuit and Query Models}
\label{sec: models}

Now we formally define our circuit and query models for the identity testing problem. 

\subsection{Definition of the Circuit Model}
\label{sec:def_circuit_model}

\newcommand{\C}{\mathsf{C}}
\newcommand{\Frac}{\mathsf{Frac}}
\newcommand{\Exp}{\mathsf{Exp}}

The circuit model we introduce next extends the standard \emph{arithmetic circuit model} by \emph{exponentiation gates} that is intended to model the exponential function $\exp(x)\eqdef e^x$ over real numbers. Formally, an $\AExp^1$ circuit is a DAG consisting of vertices that are either an input variable or a gate of the following types (all gates are of fan-out $1$): 
\begin{compactenum}
    \item constant gates $\C_a$ of fan-in $0$ intended to model a fixed integer $a$; 
    \item addition gates $\Sigma(x_1,\dots,x_m)$ of unbounded fan-in that are intended to model the addition of real numbers, i.e., $x_1+\dots+x_m$; 
    \item multiplication gates $\Pi(x_1,\dots,x_m)$ of unbounded fan-in that are intended to model the multiplication of real numbers, i.e., $x_1x_2\dots x_n$. 
    \item division gates $\Frac(x,y)$ of fan-in $2$ intended to model the division of real numbers, i.e., $x/y$; 
    \item exponentiation gates $\Exp(x)$ of fan-in $1$ intended to model the exponential function, i.e., $e^x$. 
\end{compactenum}
In addition, we do not allow compositions of exponentiation gates in $\AExp^1$ circuits; that is, for any $\AExp^1$ circuit, there is at most \emph{one} exponentiation gate along any path from an input variable to an output gate.\footnote{Similarly, one can define $\AExp^k$ circuits where there are at most $k$ exponential gates along any such path. In this work, we focus on $\AExp^1$ circuits as it is natural and is more relevant to the practical motivation of this work.} We assume by default that an $\AExp^1$ circuit has only one output gate, though the definitions and our results can be easily generalized to multi-output circuits.  

\paragraph{Evaluation of $\AExp^1$ Circuits over Real Numbers.} The evaluation of $\AExp^1$ circuits over real numbers should be clear through the definition. Let $C$ be any $n$-input $\AExp^1$ circuit and $\vec x\in\R^n$, the evaluation of $C(\vec x)$ is defined as the output of the output gate, where the output values of gates are defined by gate-by-gate evaluation following a topological order. 

In particular, if the divisor of a division gate is zero, $C(\vec x)$ is undefined. Therefore, an $\AExp^1$ circuit may compute a partial function $C:\R^n\to\R$. We define the \emph{domain} of a $C(\vec x)$ circuit on $\R$, denoted by $\dom_\R(C)$, as the set of $\vec x\in\R^n$ such that $C(\vec x)$ is defined. For any $\vec x\notin\dom_\R(C)$, we may also say $C(\vec x)=\bot$.

\paragraph{Practical Motivation: Modeling Components in Neural Networks.} Our circuit model is a straightforward formalization of the program representation in Mirage \cite{Mirage24}. Intuitively, the motivation to introduce $\AExp^1$ circuit is to model components that are widely used in model artificial neural networks that consist of non-linear activation functions. For instance, the softmax function 
\[ 
\softmax(\vec x) \eqdef \frac{\left(e^{x_1},\dots,e^{x_n}\right) }{\sum_{i=1}^n e^{x_i}} 
\] 
can be formalized as an $n$-output depth-$3$ $\AExp^1$ circuit. As a slightly more complicated example, the attention function \cite{attention} that is widely used in large language models (LLMs):
\[ 
\mathsf{Attention}(Q, K, V) = \softmax\left(\frac{QK^\top}{\sqrt{d_k}}\right)V,
\]
where $d_k$ is a constant and $Q$, $K$, $V$ are matrices, can be implemented by a depth-$8$ $\AExp^1$ circuit. To see this, notice that matrix multiplication can be implemented by a multi-output depth-$2$ arithmetic circuit (i.e.~a summation of multiplications), $\softmax$ can be implemented as above in depth $3$, and the division (by scalar) can be implemented in depth $1$.\footnote{Here, we assume for simplicity that $\sqrt{d_k}$ is an integer. If it is not the case, one can also estimate $\sqrt{d_k}$ by a rational number $p/q$, which can be implemented by a depth-$2$ circuit.}

\subsection{Query Model over Reals, and Why it is not Satisfactory}

Since $\AExp^1$ circuits compute functions over $\R$, a natural idealized model for the identity testing problem is that the algorithm can evaluate the circuit over any real input. Formally, a identity testing algorithm for $\AExp^1$ circuits $C$ in \emph{real query model} is allowed to query the following oracle: 
\begin{compactitem}
\item (\emph{Evaluation}). Given an input $\vec x\in\Q^n$, the evaluation oracle reports whether $C(\vec x)$ is undefined, $C(\vec x) = 0$, or $C(\vec x) \in \R\setminus\{0\}$.  
\end{compactitem}

The main problem of the real query model is that computers cannot deal with real numbers. For real-world applications, real numbers are usually approximated by \emph{floating-point numbers}, which does not satisfy arithmetic laws such as the commutativity and associativity of addition. This leads \emph{two-sided errors} in the implementation of a testing algorithm in real query model: 
\begin{compactitem}
\item (\emph{Completeness}). Even if $C$ is identically zero over $\R$, it may evaluate to a non-zero value on some inputs when the evaluation queries are implemented in float-point numbers.  
\item (\emph{Soundness}). Even if $C$ is not identically zero, it may evaluate to $0$ on any input when evaluation queries are implemented in float-point numbers, either because of the precision issue or because of the failure of arithmetic laws. 
\end{compactitem}
The occurrences of errors on both sides reduce the consistency and reliability of the testing algorithms. Take the example of optimization compilers PET \cite{PET21} or Mirage \cite{Mirage24} that aim to detect redundancy in neural networks modeled as $\AExp^1$ circuits. Two sided error will make the algorithms less predictable and reliable to users; indeed, it could be possible that the oracle will hardly ever report $C(\vec x)\in\R\setminus\{0\}$ even if $C(\vec x)$ is defined and non-zero for most or all $\vec x$ due to the floating-point issue. In that case, the optimization compiler will barely find any redundancy. It is worth noting that the error is \emph{on top of} the completeness and soundness error of the identity testing algorithm, so it cannot be resolved by a better (e.g., deterministic) testing algorithm.

\subsection{Query Model over Finite Fields}
\label{sec: query over field}

Next, we introduce a natural query model for the identity testing problem of $\AExp^1$ circuits over finite fields that can be implemented in real-world applications \emph{without} the precision issue.

\paragraph{Evaluation over Finite Fields.} Let $p,q$ be two primes such that $q\mid p-1$, and $G_{p,q}$ be the (unique) multiplicative subgroup of $\F_p^*$ of order $q$. Equivalently, $G_{p,q}$ contains the roots of the univariate polynomial $z^q-1$ over $\F_p$. Loosely speaking, we will define the evaluation of $C$ on the input $(\vec u,\vec v)\in\F_p^n\times \F_q^n$ and $a\in G_{p,q}$ (modulo $(p,q)$) by 
\begin{compactitem}
\item interpreting the exponentiation gate by the function $x\mapsto a^x\bmod p$,
\item implementing all computation ``on the exponent'' in $\F_q$, 
\item implementing all computation ``under the exponent'' in $\F_p$, 
\end{compactitem}
and evaluating the circuit gate by gate. 

Specifically, given an $\AExp^1$ circuit $C$, $\vec u=(u_1,\dots,u_n)\in\F_p^n$, $\vec v=(v_1,\dots,v_n)\in\F_q^n$, and $a\in G_{p,q}$, we can implement the evaluation $C_a(\vec u,\vec v)$ as follows. For every $i\in[n]$, we assign a pair $(u_i,v_i)\in(\F_p\cup\{\perp\})\times(\F_q\cup\{\perp\})$ to the $i$-th input variable $x_i$ and evaluate gate by gate following the rules below: 
\begin{compactenum}
\item $\C_b$ is assigned $(b\bmod p, b\bmod q)$;
\item $\Sigma((\tau_1,\alpha_1),\dots,(\tau_m,\alpha_m))$ is assigned $(\sum_{i\in[m]}\tau_i\bmod p,\sum_{i\in[m]}\alpha_i\bmod q)$;
\item $\Pi((\tau_1,\alpha_1),\dots,(\tau_m,\alpha_m))$ is assigned $(\tau_1\tau_2\dots\tau_m\bmod p,\alpha_1\alpha_2\dots\alpha_m\bmod q)$.
\item $\Frac((\tau,\alpha),(\tau',\alpha'))$ is assigned $(\tau\cdot \Inv_p(\tau')\bmod p,\alpha\cdot \Inv_q(\alpha')\bmod q)$ if $\tau'\ne 0$ and $\alpha'\ne 0$, where $\Inv_p(u)$ denotes the multiplicative inverse of $u$ modulo $p$. If $\tau'=0$ and/or $\alpha'=0$, the first and/or the second coordinate are replaced by $\bot$. 
\item $\Exp((\tau,\alpha))$ is assigned $(a^{\alpha}\bmod p,\perp)$. 
\end{compactenum}
Note that summation, division, and multiplication denote the corresponding arithmetic operations in $\F_p\cup\{\bot\}$ (in the first coordinate) and $\F_q\cup\{\bot\}$ (in the second coordinate). In particular, for the evaluation of the exponentiation gate, $a^\alpha$ is defined by identifying $\alpha\in\F_q$ as an integer in $\Z_p=\{0,1,\dots,q-1\}$. 

Similar to the evaluation of $\AExp^1$ circuits on $\R$, $C_a(\vec u,\vec v)$ may be undefined, denoted by $C_a(\vec u,\vec v) = \bot$, due to division-by-zero. For each $a\in G_{p,q}$, we define the domain of $C$ with respect to $a$ modulo $(p,q)$, denoted by $\dom_{p,q,a}(C)$, as $\{(\vec u,\vec v)\in\F_p^n\times \F_q^n\mid C_a(\vec u,\vec v)\ne \bot\}$. 

\paragraph{Algebraic Query Model.} Subsequently, we define the identity testing problem for an $\AExp^1$ circuit $C$ over the algebraic query model as follows. Let $p,q,G_{p,q}$ be specified above. The algorithm for identity testing is allowed to query the following oracle: 
\begin{compactitem}
\item (\emph{Evaluation}). Given $(\vec u,\vec v)\in\F_p^n\times \F_q^n$ and $a\in G_{p,q}$, the oracle reports $C_a(\vec u,\vec v)\in\F_p\cup\{\bot\}$. 
\end{compactitem}

This query model is efficient both theoretically and practically. It can be observed, for instance, that the oracle in the algebraic query model can be implemented by algorithms in $O(s\cdot (\log p+\log q))$ space and polynomial time. In practice, the simplicity of the evaluation model makes it possible to implement it in the neural network compiler scenario (see \cite{PET21,Mirage24} for more details).  

\section{Exponential Polynomials}
\label{sec: exp polynomials}

To understand the functions computed by $\AExp^1$ circuits, we will define \emph{exponential polynomials} that, intuitively, generalize (multi-variate) polynomials by allowing terms of form $\exp(\cdot)$. Formally: 

\begin{definition}[Exponential polynomial]\label{def: exp polynomials}
    Let $R$ be an integral domain and $d,m$ be integers, $d\ge 0$, $m\ge 1$. An $m$-variate $R$-coefficient exponential polynomial $P(\vec x)$ of degree $d$ is defined as 
    \[
        P(\vec x) \eqdef f_1(\vec x) \exp\left(\frac{g_1(\vec x)}{h_1(\vec x)}\right) + f_2(\vec x) \exp\left(\frac{g_2(\vec x)}{h_2(\vec x)}\right) + \dots + f_k(\vec x) \exp\left(\frac{g_k(\vec x)}{h_k(\vec x)}\right), 
    \]
    where $\vec x=(x_1,\dots,x_m)$ denotes the indeterminates, $f_i$, $g_i$ and $h_i$ are $m$-variate polynomials of degree $d$ with coefficients from $R$. 
    
    The number of terms $k$ is said to be the \emph{width} of $P(\vec x)$, $\{f_i(\vec x)\}_{i \in [k]}$ are said to be the \emph{coefficient polynomials} of $P(\vec x)$, and $\{g_i(\vec x)/h_i(\vec x)\}_{i \in [k]}$ are said to be the \emph{exponent fractions} of $P(\vec x)$. 
\end{definition} 

In this paper, we only use the special case where $R=\Z$, i.e., integer-coefficient exponential polynomials. Nevertheless, we will develop the elementary arithmetic of exponential polynomials with respect to an arbitrary integral domain $R$ for coefficients. 

\subsection{Basic Arithmetic Properties}

We stress that an exponential polynomial should be considered as an abstract expression rather than a function. In particular, $\exp(\cdot)$ should be considered as a symbol instead of the exponential function over $\R$. For simplicity, we will also use the summation symbol $\sum$ to define an exponential polynomial, i.e., 
\[
P(\vec x) \eqdef \sum_{i=1}^{k} f_i(\vec x) \exp\left(\frac{g_i(\vec x)}{h_i(\vec x)}\right),
\]
where the summation symbol is a shorthand of the $k$-term summation in \Cref{def: exp polynomials}. 

We first define the addition and multiplication of exponential polynomials. Let $R$ be a ring and $P(\vec x), P'(\vec x)$ be $R$-coefficient exponential polynomials defined as 
\begin{align}
    P(\vec x) \eqdef \sum_{i=1}^{k} f_i(\vec x) \exp\left(\frac{g_i(\vec x)}{h_i(\vec x)}\right),  \quad P'(\vec x) \eqdef \sum_{i=1}^{k'} f'_i(\vec x) \exp\left(\frac{g'_i(\vec x)}{h'_i(\vec x)}\right).  
\end{align}
We can naturally define the addition and multiplication of $P(\vec x)$ and $P'(\vec x)$ as: 
\begin{align}
& P(\vec x) + P'(\vec x) \eqdef \sum_{i=1}^{k} f_i(\vec x) \exp\left(\frac{g_i(\vec x)}{h_i(\vec x)}\right) + \sum_{i=1}^{k'} f'_i(\vec x) \exp\left(\frac{g'_i(\vec x)}{h'_i(\vec x)}\right). \\ 
& P(\vec x) \cdot P'(\vec x) \eqdef \sum_{i=1}^k\sum_{j=1}^{k'} f_i(\vec x)f'_j(\vec x) \exp\left(\frac{g_i(\vec x)h'_j(\vec x) + g'_j(\vec x)h_i(\vec x)}{h_i(\vec x)h'_j(\vec x)}\right).
\end{align}

Next, we consider the arithmetic laws for exponential polynomials. As hinted at in the definitions of addition and multiplication, both operations are \emph{commutative} and \emph{associative}, and multiplication is \emph{distributive} over addition. Moreover, it is implicit in the definition of multiplication that exponentiation symbol satisfies
\begin{equation}
\exp\left(\frac{g(\vec x)}{h(\vec x)}\right) \cdot \exp\left(\frac{g'(\vec x)}{h'(\vec x)}\right) = \exp\left(\frac{g(\vec x)h'(\vec x) + g'(\vec x) h(\vec x)}{h(\vec x)h'(\vec x)}\right). 
\end{equation}
Furthermore, we impose the following axiom that allows merging terms with the same exponent fraction: 
\begin{align}
& f(\vec x) \exp\left(\frac{g(\vec x)}{h(\vec x)}\right) + f'(\vec x) \exp\left(\frac{g(\vec x)}{h(\vec x)}\right) = (f(\vec x) + f'(\vec x))\exp\left(\frac{g(\vec x)}{h(\vec x)}\right).
\end{align}
We say that $P(\vec x)$ and $P'(\vec x)$ are \emph{identical}, denoted by $P(\vec x) = P'(\vec x)$, if they can be transformed to each other using the arithmetic laws above. 

We note that the following two exponential polynomials 
\[
P_1(\vec x) = \exp\left(x\right);\quad P_2(\vec x) = \exp\left(\frac{x^2-2x}{x-2}\right)
\]
are not identical as the division law is \emph{not} allowed in the exponentiation symbol. This is intentional as in the evaluation of $\AExp^1$ circuits, we will only evaluate the circuit gate by gate without trying to simplify the circuit using the division law.

\subsection{Evaluation of Exponential Polynomials}

Similar to standard multivariate polynomials, we can define the evaluation of exponential polynomials that explains how to view such abstract expressions as functions. In particular, we will introduce two definitions corresponding to the \emph{real evaluation model} and \emph{algebraic evaluation model} of $\AExp^1$ circuits. 

Let $P(\vec x)$ be an $n$-variate integer-coefficient exponential polynomial defined by the coefficient polynomials $\{f_i\}_{i\in [k]}$ and exponent fractions $\{g_i/h_i\}_{i\in[k]}$. 

\paragraph{Evaluation of Exponential Polynomials on Real Numbers.} We can view $P(\vec x)$ as a function $P(\cdot):\R^n\to\R$ as follows. Given an input $\vec{u} \in \R^n$, the evaluation of $P(\vec x)$ on the input $\vec{u}$ is defined as 
\[ 
P(\vec{u}) \eqdef \sum_{i=1}^k f_i(\vec u) \cdot \exp\left(\frac{g_i(\vec u)}{h_i(\vec u)}\right)
\] 
where all the operators (i.e., additions, divisions, multiplications, and exponentiations) are interpreted as corresponding functions in $\R$. If for some $i\in [k]$, $h_i(\vec u) = 0$, the exponential polynomial is said to be undefined on $\vec u$. The domain of $P(\vec x)$ over $\R$, denoted by $\dom_\R(P)$, is defined as $\dom_\R(P)\eqdef \{\vec u\in\R^n\mid h_i(\vec u)\ne 0\; \forall i\in[k]\}$. 

The following proposition immediately follows from the definition and arithmetic laws.  

\begin{proposition}
Let $P(\vec x)$ and $P'(\vec x)$ be identical integer-coefficient exponential polynomials. Then $\dom_\R(P) = \dom_\R(P')$ and $P(\vec u) = P'(\vec u)$ for any $\vec u\in\R^n$. 
\end{proposition}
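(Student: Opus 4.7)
The plan is to reduce the statement to a case analysis over the arithmetic laws. By the definition preceding the proposition, $P(\vec x) = P'(\vec x)$ (as exponential polynomials) means that there is a finite sequence $P = P_0, P_1, \dots, P_m = P'$ in which each $P_{i+1}$ is obtained from $P_i$ by a single application of one of the stated rewrite rules — commutativity and associativity of $+$ and $\cdot$, distributivity of $\cdot$ over $+$, the multiplicative rule $\exp(g/h)\cdot\exp(g'/h') = \exp((gh'+g'h)/(hh'))$, or the merging rule $f\exp(g/h) + f'\exp(g/h) = (f+f')\exp(g/h)$. It therefore suffices to show that each single rewrite preserves both the domain $\dom_\R(\cdot)$ and the value of the real evaluation on that domain; the full statement then follows by induction on $m$.

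For the commutative, associative, distributive, and merging laws, the multiset of denominators $\{h_i(\vec x)\}$ appearing on the two sides is identical (possibly with duplicates removed in the case of merging), so the defined-set $\{\vec u \in \R^n : h_i(\vec u) \ne 0 \text{ for all } i\}$ is the same on both sides. On this common set, interpreting $\exp(\cdot)$ as the real exponential function turns each rewrite into a genuine identity of real numbers — namely the corresponding ring law in $\R$ or the trivial equality $\beta e^{\alpha} + \beta' e^{\alpha} = (\beta+\beta')e^{\alpha}$ — and so evaluations agree pointwise.

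The case that deserves care is the multiplicative rule, where the denominators $h, h'$ on the left are replaced by the single denominator $hh'$ on the right. Here one uses that $\R$ is a field (indeed an integral domain, as per \Cref{lmm: polynomial integral domain}): $h(\vec u)h'(\vec u) = 0$ if and only if $h(\vec u) = 0$ or $h'(\vec u) = 0$, so the two defined-sets coincide. On the common defined-set one has the identity $(g(\vec u)h'(\vec u) + g'(\vec u)h(\vec u))/(h(\vec u)h'(\vec u)) = g(\vec u)/h(\vec u) + g'(\vec u)/h'(\vec u)$ in $\R$, and then $e^{a+b} = e^a e^b$ yields equality of values.

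The only mildly subtle point is that each ``law'' must be read as a rewrite rule applicable inside any subexpression of an exponential polynomial, not merely at the top level; but since the rewrites act locally and the surrounding context contributes its own denominators unchanged, the local analysis above extends verbatim. With these observations the induction on $m$ is immediate, and both assertions $\dom_\R(P) = \dom_\R(P')$ and $P(\vec u) = P'(\vec u)$ for all $\vec u$ follow (with the convention that equality at $\vec u \notin \dom_\R(P) = \dom_\R(P')$ means both sides are undefined).
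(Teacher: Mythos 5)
Your argument is correct and is exactly the routine verification the paper has in mind: the paper states this proposition without proof (``immediately follows from the definition and arithmetic laws''), and your induction on a chain of law applications, checking that each law preserves the union of the denominators' zero sets and the real value on that set, is the intended argument. (One cosmetic point: for the multiplicative rule you only need that $\R$ is a field, so the appeal to \Cref{lmm: polynomial integral domain} is unnecessary.)
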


\paragraph{Evaluation of Exponential Polynomials on Finite Fields.} Similar to the algebraic query model for $\AExp^1$ circuits, we require two finite fields for the computation under and over exponents to define the evaluation of $P$ over finite fields. Let $p, q$ be two primes such that $q \mid p - 1$, and $G_{p, q}$ be the multiplicative subgroup of $\F_p^*$ of order $q$. Given inputs $\vec{u} \in \F_p^n, \vec{v} \in \F_q^n$, and $a \in G_{p, q}$, the evaluation of $P$ on $\vec{u}$ and $\vec{u}$, denoted by $P_a(\vec{u}, \vec{v})$, is defined as: 
\[
P_a(\vec u,\vec v) \eqdef \left(\sum_{i=1}^k f_i(\vec u) \cdot a^{g_i(\vec v)\cdot (h_i(\vec v))^{-1}\bmod q}\right) \bmod p.
\]
Note that $P_a(\vec u, \vec v)$ is undefined, denoted by $P_a(\vec u,\vec v)=\bot$, if $h_i(\vec v) \bmod q = 0$ for some $i \in [k]$. The domain of $P(\vec x)$ over $(p,q,a)$ is defined as 
\[ 
\dom_{p,q,a}(P(\vec x))\eqdef \{(\vec u,\vec v)\in\F_p^n\times \F_q^n \mid h_i(\vec u)\ne 0\bmod q\;\forall i\in[k]\}.
\] 

\begin{proposition}
Let $P(\vec x)$ and $P'(\vec x)$ be identical integer-coefficient exponential polynomials, $p,q$ be two primes such that $q\mid p-1$. For any $a$ in the multiplicative subgroup of $\F_p^*$ of order $q$, we have that $\dom_{p,q,a}(P) = \dom_{p,q,a}(P')$ and $P_a(\vec u,\vec v) = P'_a(\vec u,\vec v)$ for any $\vec u\in\F_p^n,\vec v\in\F_q^n$.  
\end{proposition}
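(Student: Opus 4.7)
The plan is to reduce the proposition to a case analysis over the generating arithmetic laws defining the identity relation. Since $P(\vec x) = P'(\vec x)$ means that $P$ and $P'$ are connected by a finite sequence of applications of the commutativity/associativity of $+$ and $\cdot$, distributivity, the exponent multiplication rule, and the merging axiom, it suffices to show that each single-step transformation preserves both the domain $\dom_{p,q,a}(\cdot)$ and the evaluation $(\cdot)_a(\vec u,\vec v)$. A straightforward induction on the length of such a sequence then yields the proposition.

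For the commutativity, associativity, and distributivity laws, the verification is immediate: these laws are inherited from the commutativity, associativity, and distributivity in $\F_p$ (for the coefficient polynomials and the outer arithmetic) and in $\F_q$ (for the exponents), and rearranging or re-associating does not change the set of denominators $h_i(\vec v)$ that need to be non-zero. For the merging axiom $f\exp(g/h) + f'\exp(g/h) = (f+f')\exp(g/h)$, both sides impose the single non-vanishing condition $h(\vec v)\not\equiv 0\pmod q$, and the algebraic evaluation of both sides equals $(f(\vec u)+f'(\vec u))\cdot a^{g(\vec v)\cdot\Inv_q(h(\vec v))}\bmod p$ by distributivity in $\F_p$.

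The main technical step, and the only one where the structure of the subgroup $G_{p,q}$ enters non-trivially, is the exponent multiplication rule
\[
\exp\!\left(\frac{g}{h}\right)\cdot \exp\!\left(\frac{g'}{h'}\right) \;=\; \exp\!\left(\frac{gh'+g'h}{hh'}\right).
\]
For the domain, the left-hand side requires $h(\vec v),h'(\vec v)\not\equiv 0\pmod q$, while the right-hand side requires $h(\vec v)h'(\vec v)\not\equiv 0\pmod q$; since $\F_q$ is a field these two conditions coincide. For the evaluation, on the domain one has in $\F_q$ the identity $g(\vec v)\Inv_q(h(\vec v)) + g'(\vec v)\Inv_q(h'(\vec v)) = (g(\vec v)h'(\vec v)+g'(\vec v)h(\vec v))\cdot\Inv_q(h(\vec v)h'(\vec v))$, so the two sides contribute exponents that agree in $\F_q$. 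Using the crucial property that $a\in G_{p,q}$ has order dividing $q$, so that $a^x \bmod p$ is well-defined as a function of $x\in\F_q$ and satisfies $a^{x+y}\equiv a^x\cdot a^y\pmod p$ for all $x,y\in\F_q$, the evaluations of both sides agree modulo $p$.

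The only real obstacle is ensuring that this last step is formally justified; this is exactly the reason the algebraic model restricts $a$ to the order-$q$ subgroup $G_{p,q}$, which is precisely the setting in which $x\mapsto a^x$ descends to a well-defined group homomorphism $(\F_q,+)\to(\F_p^*,\cdot)$. Once this is observed, each axiom check is routine, and the induction closes the proof.
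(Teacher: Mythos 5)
Your proof is correct and matches the reasoning the paper leaves implicit: the paper states this proposition without proof, treating it as immediate from the definitions and the arithmetic laws, which is precisely the axiom-by-axiom check you carry out. Your identification of the key point --- that $a\in G_{p,q}$ satisfies $a^q\equiv 1\pmod p$, so $x\mapsto a^x$ is a well-defined homomorphism $(\F_q,+)\to(\F_p^*,\cdot)$ making the exponent-multiplication law evaluation-preserving --- is exactly the consistency property the paper highlights when motivating the algebraic query model.
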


\subsection{Condensation of Exponential Polynomials}
\label{sec: condensation}

Recall that two exponential polynomials are said to be identical if they can be transformed to each other by arithmetic laws. As there is no division law for the exponent fractions, the exponential polynomials 
\[
P_1(x)\eqdef \exp\left(x\right); \quad P_2(x)\eqdef \exp\left(\frac{x^2-2x}{x-2}\right)
\]
are not considered to be the same polynomial. In particular, we can notice that $\dom_\R(P_1) \ne \dom_\R(P_2)$. Nevertheless, these two exponential polynomials as functions are essentially the same over all but the input $x=2$. 

We now introduce the \emph{condensation} of integer-coefficient exponential polynomials that simplifies an exponential polynomial by merging terms with ``essentially the same'' exponent fractions together, which will be useful in proving our main results. 

\paragraph{Equivalent Exponential Fractions.} Let $P(\vec x)$ be a multi-variate integer-coefficient exponential polynomial of width $k$, $\{g_i(\vec x)/h_i(\vec x)\}_{i \in [k]}$ be the exponent fractions of $P(\vec x)$. Suppose that for each $i\in[k]$, $h_i$ is not a zero polynomial. We define $\sim_P$ to be the relation over $[k]$ such that 
\[
i \sim_P j \quad \text{ iff } \quad g_i(\vec x) h_j(\vec x) - g_j(\vec x)h_i(\vec x) = 0.
\]
Note that $g_i(\vec x) h_j(\vec x) - g_j(\vec x)h_i(\vec x) = 0$ means that it is a zero integer-coefficient polynomial, or equivalently, $g_i(\vec u) h_j(\vec u) - g_j(\vec u)h_i(\vec u) = 0$ for any $\vec x\in\vec \R$. 

\begin{lemma}
    Suppose that $h_i(\vec x)$ is not a zero polynomial for any $i\in[k]$, then $\sim_P$ is an equivalence relation over $[k]$. 
\end{lemma}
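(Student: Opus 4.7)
The plan is to verify the three axioms of an equivalence relation — reflexivity, symmetry, and transitivity — where the first two are essentially immediate from the definition and only transitivity requires a short argument that leverages the hypothesis that each $h_i$ is a non-zero polynomial.

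\textbf{Reflexivity and symmetry.} For any $i \in [k]$, $g_i h_i - g_i h_i$ is the zero polynomial, so $i \sim_P i$. For symmetry, observe that $g_i h_j - g_j h_i = -(g_j h_i - g_i h_j)$, so one is the zero polynomial if and only if the other is; hence $i \sim_P j$ implies $j \sim_P i$.

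\textbf{Transitivity.} This is where the hypothesis is used. Suppose $i \sim_P j$ and $j \sim_P \ell$, i.e.,
\[
g_i h_j - g_j h_i = 0 \quad \text{and} \quad g_j h_\ell - g_\ell h_j = 0
\]
as polynomials in $\Z[x_1,\dots,x_m]$. Multiply the first identity by $h_\ell$ and the second by $h_i$ to obtain
\[
g_i h_j h_\ell = g_j h_i h_\ell \quad \text{and} \quad g_j h_\ell h_i = g_\ell h_j h_i.
\]
Chaining these yields $g_i h_\ell \cdot h_j = g_\ell h_i \cdot h_j$, that is, $(g_i h_\ell - g_\ell h_i)\cdot h_j = 0$ in $\Z[x_1,\dots,x_m]$. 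Since $\Z$ is an integral domain, \Cref{lmm: polynomial integral domain} tells us that $\Z[x_1,\dots,x_m]$ is also an integral domain, and by assumption $h_j$ is not the zero polynomial. Cancelling $h_j$ gives $g_i h_\ell - g_\ell h_i = 0$, so $i \sim_P \ell$.

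\textbf{Main obstacle (minor).} The only non-trivial point is the cancellation of $h_j$ in the transitivity step, which requires appealing to the integral domain structure of $\Z[x_1,\dots,x_m]$; this is precisely what \Cref{lmm: polynomial integral domain} supplies. The hypothesis that every $h_i$ is non-zero as a polynomial is essential — without it, for instance if $h_j = 0$, the relations defining $\sim_P$ would be trivially satisfied by every $i$ paired with $j$, and transitivity would fail. Everything else is routine manipulation.
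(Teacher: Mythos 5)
Your proof is correct and follows essentially the same route as the paper: after dispensing with reflexivity and symmetry, it derives $(g_i h_\ell - g_\ell h_i)\cdot h_j = 0$ from the two hypotheses and cancels $h_j$ using that $\Z[x_1,\dots,x_m]$ is an integral domain (\Cref{lmm: polynomial integral domain}) together with $h_j \neq 0$. No gaps.
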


\begin{proof}
    Clearly $\sim_P$ is reflexive and symmetric, so it suffices to prove that $\sim_P$ is transitive. Let $i_1,i_2,i_3\in [k]$ such that $i_1\sim_P i_2$ and $i_2\sim_P i_3$, we will prove that $i_1\sim_P i_3$. Without loss of generality, we assume that $i_1=1,i_2=2,i_3=3$. 
    
    By the assumption, we have that
    \[
        g_1(\vec x)h_2(\vec x) = g_2(\vec x)h_1(\vec x), \quad g_2(\vec x)h_3(\vec x) = g_3(\vec x)h_2(\vec x),
    \]
    which implies that 
    \[
        h_2(\vec x)\left(g_1(\vec x)h_3(\vec x) - g_3(\vec x)h_1(\vec x)\right) = g_2(\vec x)h_1(\vec x)h_3(\vec x) - g_2(\vec x)h_3(\vec x)h_1(\vec x) = 0.
    \]
    As $\Z[\vec x]$ is an integral domain (see \Cref{lmm: polynomial integral domain}) and $h_2(\vec x)$ is not a zero polynomial, we have that $g_1(\vec x)h_3(\vec x) - g_3(\vec x)h_1(\vec x) = 0$, i.e., $i_1 \sim_p i_3$.
\end{proof}

\paragraph{Condensation of Exponential Polynomials.} Subsequently, we define a condensation $\hat P$ of an exponential polynomial $P$ as obtained by grouping the coefficient polynomials according to the relation $\sim_P$. Formally: 

\begin{definition}[Condensation of exponential polynomials]
    Let $P(\vec x)$ be an integer-coefficient exponential polynomial of degree $d$ and width $k$, $\{f_i(\vec x)\}_{i \in [k]}$ be the coefficient polynomials of $P(\vec x)$, $\{g_i(\vec x) / h_i(\vec x)\}$ be the exponent fractions of $P(\vec x)$ such that $h_i(\vec x)$ is not a zero polynomial for each $i\in[k]$. Let $\pi=\{[i_1]_\pi,[i_2]_\pi,\dots,[i_t]_\pi\}$ be the partition of $[k]$ induced by $\sim_P$, and $i_1,\dots,i_t$ be arbitrary representation elements. We say that $\hat P(\vec x)$ is a condensation of $P(\vec x)$ if it is of form 
    \[
    \tilde P(\vec x) \eqdef F_1(\vec x)\cdot \exp\left(\frac{g_{i_1}(\vec x)}{h_{i_1}(\vec x)}\right) + \dots + F_t(\vec x)\cdot \exp\left(\frac{g_{i_t}(\vec x)}{h_{i_t}(\vec x)}\right) 
    \]
    where $F_j(\vec x)\eqdef \sum_{i\in [i_j]_\pi} f_i(\vec x)$ is an integer-coefficient polynomial of degree at most $\hat d$.
\end{definition}

\begin{definition}[Condensed exponential polynomials]
    Let $P(\vec x)$ be an exponential polynomial with exponent fractions $\{g_i(\vec x)/h_i(\vec x)\}_{i \in [k]}$. $P(\vec x)$ is a condensed exponential polynomial if $h_i$'s are not zero polynomials and $g_i(\vec x)h_j(\vec x) \neq h_i(\vec x)g_j(\vec x)$ for $i, j \in [k]$ and $i \neq j$.
\end{definition}

We stress that the condensation of an exponential polynomial is not unique as we can choose the representative elements $i_1,\dots,i_t$ arbitrarily from their equivalent classes. The following proposition shows that $\hat P$ may have a larger domain compared to $P$, but they agree on the domain of $\hat P$. 

\begin{proposition}\label{prop: condensation}
Let $\hat P$ be a condensation of an integer-coefficient exponential polynomial $P$. Then:
\begin{compactitem}
\item $\dom_\R(P)\subseteq\dom_\R(\hat P)$, and $P,\hat P$ agree on $\dom_\R(P)$. 
\item Let $p,q$ be prime numbers such that $q\mid p-1$, $G_{p,q}$ be the multiplicative subgroup of $\F_p^*$ of order $q$, and $a\in G_{p,q}$. Then $\dom_{p,q,a}(P)\subseteq \dom_{p,q,a}(\hat P)$, and $P,\hat P$ agree on $\dom_{p,q,a}(P)$. 
\end{compactitem}
\end{proposition}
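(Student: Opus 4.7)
The plan is to verify each statement directly from the definition of a condensation $\hat P$, using the fact that the representative set $\{i_1,\dots,i_t\}$ is a subset of $[k]$ (which immediately gives the domain inclusions) together with the defining equalities $g_i h_{i_j} - g_{i_j}h_i = 0$ in $\Z[\vec x]$ for every $i \in [i_j]_\pi$.

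\textbf{Real evaluations.} First I would note that the exponent fractions of $\hat P$ are the sub-collection $\{g_{i_j}/h_{i_j}\}_{j \in [t]}$ of the exponent fractions of $P$. Since $\dom_\R(P) = \{\vec u : h_i(\vec u) \neq 0 \text{ for all } i \in [k]\}$ and $\dom_\R(\hat P) = \{\vec u : h_{i_j}(\vec u) \neq 0 \text{ for all } j \in [t]\}$, the inclusion $\dom_\R(P) \subseteq \dom_\R(\hat P)$ is immediate. For agreement on $\dom_\R(P)$, fix $\vec u \in \dom_\R(P)$ and fix an equivalence class $[i_j]_\pi$. For every $i \in [i_j]_\pi$ the identity $g_i(\vec x)h_{i_j}(\vec x) = g_{i_j}(\vec x)h_i(\vec x)$ holds in $\Z[\vec x]$ and hence at $\vec u$; since $h_i(\vec u), h_{i_j}(\vec u) \neq 0$ in $\R$, this yields $g_i(\vec u)/h_i(\vec u) = g_{i_j}(\vec u)/h_{i_j}(\vec u)$. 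Factoring out the common exponential then gives
\[
\sum_{i \in [i_j]_\pi} f_i(\vec u)\exp\!\left(\tfrac{g_i(\vec u)}{h_i(\vec u)}\right) = F_j(\vec u)\exp\!\left(\tfrac{g_{i_j}(\vec u)}{h_{i_j}(\vec u)}\right).
\]
Summing over $j \in [t]$ produces $P(\vec u) = \hat P(\vec u)$.

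\textbf{Algebraic evaluations.} The inclusion $\dom_{p,q,a}(P) \subseteq \dom_{p,q,a}(\hat P)$ follows in the same way, using that the representatives' denominators are a subset of $\{h_i\}_{i \in [k]}$. For agreement, fix $(\vec u, \vec v) \in \dom_{p,q,a}(P)$, so $h_i(\vec v) \not\equiv 0 \pmod q$ for every $i$. The polynomial identity $g_i h_{i_j} - g_{i_j} h_i = 0$ reduces modulo $q$ to $g_i(\vec v) h_{i_j}(\vec v) \equiv g_{i_j}(\vec v) h_i(\vec v) \pmod q$, and since $h_i(\vec v), h_{i_j}(\vec v)$ are invertible in $\F_q$, we get $g_i(\vec v)\cdot \Inv_q(h_i(\vec v)) \equiv g_{i_j}(\vec v)\cdot \Inv_q(h_{i_j}(\vec v)) \pmod q$. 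Because $a \in G_{p,q}$ has order dividing $q$, the value $a^e \bmod p$ depends on $e$ only modulo $q$, so $a^{g_i(\vec v)\Inv_q(h_i(\vec v))} \equiv a^{g_{i_j}(\vec v)\Inv_q(h_{i_j}(\vec v))} \pmod p$ for every $i \in [i_j]_\pi$. Distributing and summing within each class as above yields $P_a(\vec u,\vec v) = \hat P_a(\vec u,\vec v)$.

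\textbf{Where the subtlety lies.} This is essentially a routine bookkeeping argument; the only place that requires genuine care is the transition from the polynomial identity $g_i h_{i_j} = g_{i_j} h_i$ to the equality of exponents in the algebraic setting, where one must invoke two facts simultaneously: (i) that $h_i(\vec v)$ is invertible in $\F_q$ exactly on $\dom_{p,q,a}(P)$, and (ii) that the exponent only matters modulo $q$ because $a$ lies in the order-$q$ subgroup $G_{p,q}$. Both are immediate from the hypotheses, but they are the only non-syntactic steps in the verification.
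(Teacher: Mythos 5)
Your proof is correct and follows essentially the same route as the paper's (the paper only sketches the first bullet and declares the second "similar," relying on the same observation that equivalent exponent fractions evaluate identically once all denominators are nonzero, resp.\ nonzero mod $q$). Your explicit treatment of the algebraic case, including the reduction of exponents modulo $q$ via $a\in G_{p,q}$, simply fills in details the paper leaves implicit.
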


\begin{proof}[Proof Sketch]
We only prove the first bullet; the proof for the second bullet is similar. It is clear that $\dom_\R(P)\subseteq\dom_\R(\hat P)$ as each exponent fraction of $\hat P$ is an exponential fraction of $P$, so it suffices to prove that $P,\hat P$ agree on $\dom_\R(\hat P)$. 

Let $\vec u\in \dom_\R(\hat P)$. We know by definition that $h_i(\vec u)\ne 0$ for every $i\in[k]$. In that case, for any $j_1,j_2\in [k]$ such that $j_1\sim_P j_2$, we know that the $j_1$-th and the $j_2$-th exponential fractions evaluate to the same value, which subsequently implies that $P(\vec u) = \hat P(\vec u)$. 
\end{proof}

\subsection{Structural Lemma for \texorpdfstring{$\AExp^1$}{AExp1} Circuits}

Now we are ready to prove a structural lemma showing that $\AExp^1$ circuits can be seen as fractions of exponential polynomials. 

For simplicity, we introduce the following notation. Let $P,P'$ be $n$-variate exponential polynomials, we define $\dom_\R(P/P')$ be the set $\{\vec u\in \dom_\R(P)\cap\dom_\R(P')\mid P'(\vec u)\ne 0\}$, i.e., the domain of the fraction $P(\vec x)/P'(\vec x)$. Similarly, let $p,q$ be prime numbers such that $q\mid p-1$, $G_{p,q}$ be the multiplicative subgroup of $\F_p^*$ of order $q$, and $a\in G_{p,q}$, we define $\dom_{p,q,a}(P/P')$ as the set $\{\vec u\in\dom_{p,q,a}(P)\cap \dom_{p,q,a}(P')\mid P'(\vec u) \ne 0\}$. 

\begin{restatable}{lemma}{NormalLemma}\label{lmm: normal form}
    For every $n$-input $\AExp^1$ circuit $C$, there are $n$-variate integer-coefficient exponential polynomials $P(\vec x)$ and $P'(\vec x)$ such that the following holds: 
    \begin{compactitem}
    \item $\dom_\R(C) = \dom_\R(P/P')$, and for each $\vec u\in\dom_\R(C)$, $C(\vec u) = P(\vec u)/P'(\vec u)$. 
    \item Let $p,q$ be prime numbers such that $q\mid p-1$, $G_{p,q}$ be the multiplicative subgroup of $\F_p^*$ of order $q$, and $a\in G_{p,q}$. Then $\dom_{p,q,a}(C)=\dom_{p,q,a}(P/P')$ and for every $(\vec u,\vec v)\in\dom_{p,q,a}(C)$, $C_a(\vec u,\vec v)\equiv P_a(\vec u,\vec v)\cdot\Inv_p(P'_a(\vec u,\vec v))\pmod p$. 
    \end{compactitem}
    In particular, if $C$ does not contain exponentiation gates, both $P$ and $P'$ do not contain the exponentiation symbol, i.e., $P$ and $P'$ are integer-coefficient polynomials. 
\end{restatable}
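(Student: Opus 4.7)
The plan is to proceed by structural induction on the $\AExp^1$ circuit $C$, establishing at every gate $v$ of $C$ that there exist integer-coefficient exponential polynomials $P_v, P_v'$ such that the value computed at $v$ equals $P_v/P_v'$ under both the real and the algebraic evaluation models, with domains matching $\dom(P_v/P_v')$ in each model. The base cases are immediate: for an input variable $x_i$ I take $P_v = x_i$, $P_v' = 1$, and for a constant gate $\C_a$ I take $P_v = a$, $P_v' = 1$. In both cases $P_v, P_v'$ are ordinary integer polynomials with no exponentiation symbol, which also establishes the base step of the ``in particular'' clause of the lemma.

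For the arithmetic gates I would use the ring operations on exponential polynomials defined in \Cref{sec: exp polynomials}. For an addition gate with input values $P_i/P_i'$ for $i\in[m]$, I output $\bigl(\sum_i P_i \prod_{j\ne i} P_j'\bigr) / \prod_i P_i'$; for a multiplication gate I output $\prod_i P_i / \prod_i P_i'$; and for a division gate $\Frac(x,y)$ with subvalues $P_1/P_1'$ and $P_2/P_2'$, I output $(P_1 P_2')/(P_1' P_2)$. Each such combination stays within integer-coefficient exponential polynomials by the constructions of \Cref{sec: exp polynomials}, and introduces no new $\exp$ symbol that was not already present in the subvalues. The two evaluation rules for exponential polynomials are compatible with the gate semantics because the gate-by-gate arithmetic in both the real and algebraic models satisfies exactly the identities that define the ring operations on $P$ and $P'$.

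The crucial case is the exponentiation gate $\Exp(x)$. Here the $\AExp^1$ restriction, that at most one exponentiation gate lies on any input-to-output path, ensures the subcircuit rooted at the input to this gate contains no $\exp$ gate. Hence, by the ``in particular'' clause of the inductive hypothesis applied to that subcircuit, its value is $P/P'$ where $P$ and $P'$ are ordinary integer polynomials. I then set $P_v \eqdef 1\cdot \exp(P/P')$, a width-one exponential polynomial with coefficient polynomial $1$, exponent numerator $P$, and exponent denominator $P'$, and $P_v' \eqdef 1$. Over $\R$ the evaluation $P_v(\vec u)/P_v'(\vec u) = \exp(P(\vec u)/P'(\vec u))$ matches the gate semantics of \Cref{sec: models}. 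Over the algebraic model, the second coordinate arriving at the gate equals $P(\vec v)\cdot \Inv_q(P'(\vec v)) \bmod q$ since the subcircuit has no exponentiation gate, so the gate outputs $a^{P(\vec v)\cdot \Inv_q(P'(\vec v))\bmod q} \bmod p$, which is exactly $(P_v)_a(\vec u,\vec v)$ by the definition of algebraic evaluation of exponential polynomials.

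The main obstacle is the bookkeeping needed to prove equality (and not merely inclusion) of domains. I would verify at each inductive step that the nonvanishing of the new denominator is equivalent to the conjunction of the nonvanishing of the subvalues' denominators together with any extra division-by-zero condition from the current gate. This works because the denominators produced above are literally products of the denominators of subvalues (with one extra factor of a subvalue's numerator in the $\Frac$ case), and because the multiplication rule for exponential polynomials multiplies the $h_i$'s of both factors, so $\dom(P\cdot P') = \dom(P)\cap \dom(P')$ in both models. Since none of the inductive rules ever cancel or simplify denominators, spurious removable singularities never appear, and the domains on the two sides agree exactly at every gate.
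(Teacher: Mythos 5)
Your proposal is correct and follows essentially the same route as the paper's proof: induction over the circuit structure with the identical fraction constructions for addition, multiplication, and division gates, and the same use of the $\AExp^1$ single-exponentiation restriction together with the ``in particular'' clause of the inductive hypothesis to handle $\Exp$ gates via a width-one exponential polynomial $\exp(P/P')$ over $1$. Your explicit remark that denominators are never cancelled (so the domain equalities, not just inclusions, hold gate by gate) is exactly the verification the paper sketches for the first bullet of each case.
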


\begin{proof}
    We prove the lemma by induction on the depth $d$ of $C$. Note that the base case is trivial as an $\AExp^1$ circuit of depth $0$ can only output one of its input $x_i$ and can thus be converted into the fraction of $P(\vec x)\eqdef x_i$ and $P'(\vec x) \eqdef 1$. 

    In the induction case, we know by the induction hypothesis the the lemma holds for circuits of depth at most $d-1$. We consider the type of the output gate $G_o$ of $C$. 

    (\emph{Constant}). Suppose the output gate $G_o$ is a constant gate $\C_a$ always outputs $a$. Similarly to the base case, $C(\vec x)$ can be written as the fraction of $P(\vec x) \eqdef a$ and $P'(\vec x) \eqdef 1$. It is clear that neither of $P$ and $P'$ contains the exponential symbol.

    (\emph{Addition}). Suppose that the output gate $G_o$ is an addition gate whose input wires are connected to the gates $G_1,G_2,\dots,G_\ell$. Let $C_1,C_2,\dots,C_\ell$ be the sub-circuits of $C$ with $G_1,G_2,\dots,G_\ell$ being the output gates, respectively. By the induction hypothesis, we know that for each $i\in[\ell]$, there are exponential polynomials $P_i(\vec x),P'_i(\vec x)$ that satisfy the lemma for $C_i$. Let $P(\vec x),P'(\vec x)$ be the following exponential polynomials
    \begin{align*}
    P(\vec x) &\eqdef \sum_{i\in[\ell]} P_i(\vec x) \cdot \prod_{j\in[\ell]\setminus\{i\}} P'_j(\vec x) \\ 
    P'(\vec x) &\eqdef \prod_{i\in[\ell]} P_i'(\vec x)
    \end{align*}
    The two bullets in the lemma can be verified straightforwardly, and we will only verify the first bullet. It is clear that both $\dom_\R(C)$ and $\dom_\R(P/P')$ are the intersection of $\dom_\R(P_i),\dom_\R(P'_i)$, and $D_i\eqdef \{\vec x\in\R^n\mid P'_i(\vec x)\ne 0\}$ for $i\in[\ell]$; for each $\vec x\in\dom_\R(P/P')$, we know by the definition of evaluation of $C(\vec x)$, $P(\vec x)$, and $P'(\vec x)$ that 
    \[
    C(\vec x) = \sum_{i\in[\ell]} C_i(\vec x) = \sum_{i\in[\ell]} \frac{P_i(\vec x)}{P'_i(\vec x)} = \frac{P(\vec x)}{P'(\vec x)}, 
    \]
    where the second equality follows from the induction hypothesis, and the last equality follows from $\vec x\in \dom_\R(P/P')\subseteq D_i$ for each $i\in[\ell]$.

    If $C$ does not contain exponentiation gates, then none of $\{C_i\}_{i \in \ell}$ contains exponentiation gates, as they are sub-circuits of $C$. Since the construction of $P$ and $P'$ does not introduce the exponentiation symbol, by the induction hypothesis, neither $P$ nor $P'$ contains it.

    (\emph{Multiplication}). Similarly to the addition case, suppose that the output gate $G_o$ is a multiplication gate whose input wires are connected to the gates $G_1, G_2, \ldots, G_l$. Let $C_1, C_2, \ldots, C_\ell$ be the sub-circuits of $C$ with $G_1, G_2, \ldots, G_\ell$ being the output gates, respectively. By the induction hypothesis, we know that for each $i \in [\ell]$, there are exponential polynomials $P_i(\vec x)$, $P'_i(\vec x)$ that satisfy the lemma for $C_i$. Let $P(\vec x)$, $P'_i(\vec x)$ be the following exponential polynomials
    \begin{align*}
        P(\vec x) &\eqdef \prod_{i \in [\ell]} P_i(\vec x), \\ 
        P'(\vec x) &\eqdef \prod_{i \in [\ell]} P'_i(\vec x).
    \end{align*}
    We only verify the first bullet. It is clear that both $\dom_\R(C)$ and $\dom_\R(P/P')$ are the intersection of $\dom_\R(P_i),\dom_\R(P'_i)$, and $D_i\eqdef \{\vec x\in\R^n\mid P'_i(\vec x)\ne 0\}$ for $i\in[\ell]$; for each $\vec x\in\dom_\R(P/P')$, we know by the definition of evaluation of $C(\vec x)$, $P(\vec x)$, and $P'(\vec x)$ that 
    \[
    C(\vec x) = \prod_{i\in[\ell]} C_i(\vec x) = \prod_{i\in[\ell]} \frac{P_i(\vec x)}{P'_i(\vec x)} = \frac{\prod_{i\in[\ell]} P_i(\vec x)}{\prod_{i\in[\ell]} P'_i(\vec x)} = \frac{P(\vec x)}{P'(\vec x)}, 
    \]
    where the second equality follows from the induction hypothesis.

    Through a similar argument to in the addition case, $P$ and $P'$ do not contain any exponentiation symbol if $C$ does not contain any exponentiation gate. 
    
    (\emph{Division}). Suppose that the output gate $G_o$ is a division gate whose input wires are connected to the gates $G_1$ and $G_2$. Let $C_1$ and $C_2$ be the sub-circuites of $C$ with $G_1$ and $G_2$ being the output gates, respectively. By the induction hypothesis, we know that there are polynomials $P_1(\vec x)$, $P'_1(\vec x)$, $P_2(\vec x)$, $P'_2(\vec x)$ that satisfy the lemma for $C_1$ and $C_2$, respectively. Let $P(\vec x)$, $P'(\vec x)$ be the following exponential polynomials
    \begin{align*}
        P(\vec x) &\eqdef P_1(\vec x) \cdot P'_2(\vec x), \\
        P'(\vec x) &\eqdef P_2(\vec x) \cdot P'_1(\vec x).
    \end{align*}
    We only verify the first bullet. Both $\dom_\R(C)$ and $\dom_\R(P/P')$ are the intersection of $\dom_\R(P_i)$, $\dom_\R(P'_i)$, $D \eqdef \{\vec x \in \R^n \mid P'_i(\vec x)\neq 0\}$ for $i \in [2]$, and $D_i \eqdef \{\vec x \in \R^n \mid P_i(\vec x)\neq 0\}$; for each $\vec x \in \dom_\R(P/P')$, we know by the definition of evaluation of $C(\vec x), P(\vec x)$, and $P'(\vec x)$ that
    \[
    C(\vec x) = \frac{C_1(\vec x)}{C_2(\vec x)} = \frac{P_1(\vec x) / P'_1(\vec x)}{P_2(\vec x) / P'_2(\vec x)} = \frac{P_1(\vec x) \cdot P'_2(\vec x)}{P_2(\vec x) \cdot P'_1(\vec x)} = \frac{P(\vec x)}{P'(\vec x)},
    \]
    where the second equality follows from the induction hypothesis, and the third equality follows from $\vec x \in \dom_\R(C) \subseteq D_2$.

    Similarly, both $P$ and $P'$ do not contain any exponentiation symbol if $C$ does not contain any exponentiation gate. 

    (\emph{Exponentiation}). Suppose that the output gate $G_o$ is an exponentiation gate whose input wire is connected to the gate $C_1$ with an output gate $G_1$. Note that in $\AExp^1$ circuits, there is at most one exponentiation gate along each path, so $C_1$ does not contain any exponentiation gates. By the induction hypothesis, there are integer-coefficient multi-variate polynomials $P_1$ and $P'_1$ satisfying the lemma for $C_1$. We can write $C$ as the fraction of $P(\vec x) = \exp(P_1/P'_1)$ and $P'(\vec x) = 1$.
\end{proof}

\subsection{Width, Degree, and Weight of Concrete Neural Network Components}
\label{sec: real world NN}
It is proved in \Cref{lmm: normal form} that any $\AExp^1$ circuit can be transformed to an equivalent fraction of exponential polynomials. We can therefore define the \emph{width}, \emph{degree}, and \emph{weight} of an $\AExp^1$ circuit. 

\begin{definition}[width, degree, and weight of $\AExp^1$ circuits]
    An $\AExp^1$ circuit $C$ is said to have width $k$, degree $d$, and weight $w$ if there are integer-coefficient degree-$d$ width-$k$ exponential polynomials $P(\vec x)$ and $P'(\vec x)$ that satisfy both conditions in \Cref{lmm: normal form}.
\end{definition}

We note that the transformation in \Cref{lmm: normal form} may not be efficient --- the width, degree, and the bit-length of the integer coefficients may grow exponentially with respect to the size of the circuit. Nevertheless, it can be verified that many neural network components can be simulated by fractions of exponential polynomials with relatively small width, degree, and coefficients. 

\begin{example}[$\softmax$ function]
$\softmax$ is a widely used activation function in neural networks. Given as input a matrix $M \in \R^{m \times n}$, it is defined as
\[
\softmax(M)_{ij} \eqdef \frac{\exp(M_{ij})}{\sum_{k=1}^n \exp(M_{ik})}.
\]
By the definition, we can see that for each $i \in [m], j \in [n]$, $\softmax(M)_{ij}$ can be computed by an $\AExp^1$ circuit of width $n$, degree $1$ and weight $1$.
\end{example}

\begin{example}[Attention]
Attention \cite{attention} is a widely used building block in large language models defined as 
\[
\Attention(Q, K, V) \eqdef  \softmax\left(\frac{QK^\top}{\sqrt{d_k}}\right)V
\]
where $Q \in \R^{m \times d_k} , K \in \R^{n \times d_k}, V \in \R^{n \times d_v}$ are input matrices. Assume for simplicity that $\sqrt{d_k}\in\N$. Let $A = \frac{QK^\top}{\sqrt{d_k}}\in\R^{m\times n} $. For each $i \in [m], j \in [n]$, we have that 
\begin{align*}
\Attention(Q, K, V)_{ij} &= \sum_{k=1}^n \softmax(A)_{ik} V_{kj} \\
&= \sum_{k=1}^n \frac{\exp(A_{ik})V_{kj}}{\sum_{l=1}^n \exp(A_{il})} \\
&= \frac{\sum_{k=1}^n V_{kj}\exp(A_{ik})}{\sum_{l=1}^n \exp(A_{il})}.
\end{align*}
Each entry of $A$ can be computed by a fraction of two integer-coefficient polynomials of degree $2$ and weight $\sqrt{d_k}$: 
\[
A_{ik} = \frac{\sum_{l=1}^{d_k} Q_{il}\cdot K_{lk}}{\sqrt{d_k}}.
\]
Subsequently, $\Attention(Q, K, V)$ can be computed by an $\AExp^1$ circuit of width $n$, degree $2$ and weight $\sqrt{d_k}$.
\end{example}

As shown in the two examples above, one could expect that many neural network components can be formalized as $n$-input $\AExp^1$ circuits with constant degree, $\poly(n)$ weight, and $\poly(n)$ width. It is worth noting that some other neural network components may be even more efficient in terms of weight and width: 

\begin{example}
Besides $\softmax(\cdot)$, other activation functions are used for neural networks in general, and many of them can be efficiently implemented as $\AExp^1$ circuits. For instance, the GLU function \cite{glu} is defined as
\[
\GLU(\vec x, W, V) \eqdef \sigmoid(\vec x\cdot W) \otimes (\vec x\cdot V),
\]
where $x \in \R^n, W, V \in \R^{n \times m}$ are inputs, $\sigmoid(\vec z)_i \eqdef \frac{1}{1+\exp(-z_i)}$, and $\otimes$ denotes coordinate-wise multiplication. For every $i\in[m]$,  
\[
\GLU(\vec x,W,V)_i = \frac{(\vec x\cdot V)_i}{1+\exp(-(\vec x\cdot W)_i)},
\]
and thus it can be computed by an $\AExp^1$ circuit of width $2$, degree $2$ and weight $1$.
\end{example}

\section{Algorithms for Identity Testing}
\label{sec: algorithms}

Now we are ready to describe our identity testing algorithms in real and algebraic query models. Indeed, our algorithms are essentially the same one: Randomly sample an input $x$ (in corresponding models) and check whether the circuit evaluates to zero or $\bot$ on the input $\vec x$. We will first prove the correctness of the algorithm in real query model (see \Cref{sec: correctness real model}), and generalize the proof to the algebraic query models in subsequent subsections.  

\subsection{Identity Testing in Real Query Model}
\label{sec: correctness real model}

Formally, the identity testing algorithm in real query model works as follows: Suppose that $C:\R^n\to\R$ is an $\AExp^1$ circuit of width $k$ and degree $d$, let $B=20dk^2$ be sufficiently large and $S = \{1,2,\dots,B\}$, the algorithm uniformly sample $x_1,x_2,\dots,x_n\in S$, and accept if $C(x_1,\dots,x_n) \ne 0$. 

It is clear that the algorithm is perfectly complete, and the soundness can be formalized as the following theorem. 

\begin{theorem}\label{thm: soundness}
    Let $C:\R^n\to\R$ be an $\AExp^1$ circuit of width $k$ and degree $d$ that is not identically zero on $\dom_\R(C)$. Then for any non-empty finite subset $S\subseteq\Q$, if $\vec x\in S^n$ is sampled uniformly at random, $\Pr[C(\vec x) \in\{0,\bot\} ] \le 8dk^2/|S|$. 
\end{theorem}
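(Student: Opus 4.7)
The plan is to combine the structural lemma, the Schwartz-Zippel lemma, and the Lindemann--Weierstrass theorem into a one-shot union-bound argument. First I would invoke \Cref{lmm: normal form} to write $C = P/P'$, where $P$ and $P'$ are integer-coefficient exponential polynomials of width $k$ and degree $d$. Since the event $C(\vec x)\in\{0,\bot\}$ is contained in $\{P(\vec x)\in\{0,\bot\}\}\cup\{P'(\vec x)\in\{0,\bot\}\}$, a union bound reduces the statement to the following single-polynomial claim: for every integer-coefficient exponential polynomial $Q$ of width $k$ and degree $d$ that is not identically zero on $\dom_\R(Q)$, $\Pr_{\vec x\in S^n}[Q(\vec x)\in\{0,\bot\}]\le 4dk^2/|S|$. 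Both $P$ and $P'$ inherit the nonzero hypothesis from $C$: $\dom_\R(C)\ne\emptyset$ forces $P'$ to be nonzero somewhere in $\dom_\R(P')$, and $C\not\equiv 0$ on $\dom_\R(C)$ forces $P$ to be nonzero somewhere in $\dom_\R(P)$.

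Given $Q(\vec x)=\sum_{i=1}^k f_i(\vec x)\exp(g_i(\vec x)/h_i(\vec x))$, I would pass to a condensation $\hat Q(\vec x)=\sum_{j=1}^t F_j(\vec x)\exp(g_{i_j}/h_{i_j})$ via the construction preceding \Cref{prop: condensation}, so that the exponent fractions of $\hat Q$ are pairwise inequivalent. Without loss of generality each $h_i$ is a nonzero polynomial in $\Z[\vec x]$ (else $\dom_\R(Q)=\emptyset$), so $Q\equiv\hat Q$ on $\dom_\R(Q)$ by \Cref{prop: condensation}; consequently the hypothesis that $Q$ is not identically zero on $\dom_\R(Q)$ forces at least one coefficient polynomial $F_j$ to be nonzero in $\Z[\vec x]$.

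The core step is to apply \Cref{lmm: Schwartz-Zippel} to bound the probability of a bad event defined as the union of three families: (i) some $h_i(\vec x)=0$ (at most $k$ nonzero polynomials of degree $\le d$, contributing at most $kd/|S|$); (ii) for some pair of distinct representatives $i_j,i_{j'}$ indexing $\hat Q$, the difference polynomial $g_{i_j}h_{i_{j'}}-g_{i_{j'}}h_{i_j}$ vanishes at $\vec x$ (at most $\binom{k}{2}$ such polynomials, each of degree $\le 2d$ and nonzero by the condensation property, contributing at most $k^2d/|S|$); and (iii) some nonzero $F_j$ vanishes at $\vec x$ (at most $k$ polynomials of degree $\le d$, contributing at most $kd/|S|$). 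Summing yields bad probability at most $(k^2+2k)d/|S|\le 4dk^2/|S|$, so union-bounding across $P$ and $P'$ gives the desired $8dk^2/|S|$.

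Under the complementary good event, $\vec x\in\dom_\R(Q)$ and
\[
Q(\vec x)=\hat Q(\vec x)=\sum_{j:\,F_j\not\equiv 0}\beta_j\,e^{\alpha_j},
\]
where $\alpha_j\eqdef g_{i_j}(\vec x)/h_{i_j}(\vec x)\in\Q$ are pairwise distinct (guaranteed by the condensation combined with the failure of event (ii)) and $\beta_j\eqdef F_j(\vec x)\in\Q\setminus\{0\}$. The $\alpha_j$ are distinct algebraic numbers, so by \Cref{thm: LW} the values $e^{\alpha_j}$ are linearly independent over the algebraic numbers, which forces $Q(\vec x)\ne 0$ and in particular $Q(\vec x)\notin\{0,\bot\}$. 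The main obstacle I anticipate is the careful bookkeeping between $Q$, its condensation $\hat Q$, and their domains---in particular justifying non-circularly that at least one $F_j$ is nonzero, and ensuring the difference polynomials in event (ii) are genuinely nonzero in $\Z[\vec x]$---but \Cref{prop: condensation} together with the definition of $\sim_P$ makes these steps clean.
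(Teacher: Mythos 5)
Your proposal is correct and follows essentially the same route as the paper: write $C=P/P'$, union-bound over $P$ and $P'$, condense each exponential polynomial, apply \Cref{lmm: Schwartz-Zippel} to the $h_i$'s, the pairwise difference polynomials $g_ih_j-g_jh_i$, and the nonzero condensed coefficient polynomials, and conclude via \Cref{thm: LW}. The only difference is packaging: the paper factors the core step into \Cref{lmm: equiv pexp} and bridges the $\dom_\R(P)$ versus $\dom_\R(\hat P)$ gap with a separate event, whereas you inline everything into a single union bound, arriving at the same $8dk^2/|S|$ bound.
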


The key step for proving \Cref{thm: soundness} is a necessary and sufficient condition for an exponential polynomial $P$ to be identically zero. Intuitively, if the exponent fractions $g_1/h_1,\dots,g_k/h_k$ of $P$ are pairwisely distinct, then $P$ is identically zero on $\R^n$ if and only if all coefficient polynomials $f_1,\dots,f_k$ are all identically zero. Formally: 

\begin{lemma}\label{lmm: equiv pexp}
    Let $P:\R^n\to\R$ be a condensed exponential polynomial of form  
    \[
    P(\vec x) = \sum_{i = 1}^k f_i(\vec x)\cdot \exp\left(\frac{g_i(\vec x)}{h_i(\vec x)}\right),
    \]
    where $f_i,g_i,h_i$ are integer-coefficient polynomials of total degree $d$ and $h_i(\vec x)$ is not identically zero on $\R^n$ for every $i\in[k]$. Then the following statements are equivalent: 
    \begin{compactenum} 
    \item $P$ is identically zero on $\dom_\R(P)$; 
    \item $\Pr[P(\vec x) \in \{0,\bot\}] > 3dk^2/|S|$ for $\vec x\in S^n$ sampled uniformly at random for any non-empty finite subset $S\subseteq\Q$. 
    \item For every $i\in[k]$, $f_i$ is identically zero on $\R^n$. 
    \end{compactenum}
\end{lemma}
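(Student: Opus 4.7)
The plan is to prove the equivalences via the cycle $(3)\Rightarrow(1)\Rightarrow(2)\Rightarrow(3)$. The first two implications are essentially immediate: if every $f_i$ is the zero polynomial, then each term $f_i(\vec x)\exp(g_i(\vec x)/h_i(\vec x))$ vanishes on $\dom_\R(P)$, giving $(1)$; and if $P$ is identically zero on $\dom_\R(P)$, then for every $\vec x\in\R^n$ we have $P(\vec x)\in\{0,\bot\}$, so the probability in $(2)$ equals $1$, which exceeds $3dk^2/|S|$ as soon as $|S|$ is taken large enough (e.g.\ any $S$ with $|S|>3dk^2$).

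The substantive direction is $(2)\Rightarrow(3)$, which I will prove in contrapositive form: assuming some $f_{i^*}$ is not identically zero on $\R^n$, I will show $\Pr_{\vec x\in S^n}[P(\vec x)\in\{0,\bot\}]\le 3dk^2/|S|$ for every non-empty finite $S\subseteq\Q$. I combine the Schwartz-Zippel lemma with the Lindemann-Weierstrass theorem. Consider the following collection of nonzero integer-coefficient polynomials: (a)~$f_{i^*}$, of degree at most $d$; (b)~each $h_i$ for $i\in[k]$, of degree at most $d$, nonzero by hypothesis; and (c)~for each unordered pair $i\ne j$, the polynomial $g_i h_j-g_j h_i$, of degree at most $2d$, nonzero because $P$ is condensed. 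Applying \Cref{lmm: Schwartz-Zippel} to each and union-bounding, the probability that any of them vanishes at a uniform $\vec x\in S^n$ is at most
\[
\frac{d}{|S|}+\frac{kd}{|S|}+\frac{2d\binom{k}{2}}{|S|}=\frac{d(k^2+1)}{|S|}\le \frac{3dk^2}{|S|}.
\]

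Off this bad event, I claim $P(\vec x)\ne 0$. Indeed, since all $h_i(\vec x)\ne 0$, we have $\vec x\in\dom_\R(P)$ and the rationals $\alpha_i\eqdef g_i(\vec x)/h_i(\vec x)$ are well-defined; since $g_i(\vec x)h_j(\vec x)-g_j(\vec x)h_i(\vec x)\ne 0$ for every $i\ne j$, these $\alpha_i$ are pairwise distinct, hence form $k$ distinct algebraic numbers. Then $P(\vec x)=\sum_{i=1}^k f_i(\vec x)\,e^{\alpha_i}$ is a linear combination over the algebraic numbers of $e^{\alpha_1},\dots,e^{\alpha_k}$ with at least one nonzero coefficient, namely $f_{i^*}(\vec x)\in\Q\setminus\{0\}$. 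By \Cref{thm: LW}, $e^{\alpha_1},\dots,e^{\alpha_k}$ are linearly independent over the algebraic numbers, so $P(\vec x)\ne 0$, as desired.

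I do not anticipate a real obstacle; the argument is a textbook combination of Schwartz-Zippel with a transcendence theorem. The only point that requires a little care is to use the \emph{condensed} hypothesis precisely to upgrade ``pairwise distinct exponent fractions as formal objects'' to ``pairwise distinct rational exponents after a generic substitution,'' since Lindemann-Weierstrass applies only to collections of \emph{distinct} algebraic numbers; this is exactly what the polynomials $g_ih_j-g_jh_i$ in family (c) accomplish via Schwartz-Zippel.
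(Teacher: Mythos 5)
Your proof is correct and follows essentially the same route as the paper's: trivial implications plus a contrapositive for $(2)\Rightarrow(3)$ combining Schwartz--Zippel with a union bound and then Lindemann--Weierstrass at a generic point. The only (harmless) differences are cosmetic: you demand nonvanishing of a single $f_{i^*}$ rather than of all nonzero coefficient polynomials, and you include all $h_i$, $i\in[k]$, in the bad-event union bound, which if anything handles the ``$P(\vec x)=\bot$'' case slightly more carefully than the paper's argument.
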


\newcommand{\sE}{\mathcal{E}}
    
\begin{proof}
    Note that $(1)\Rightarrow (2)$ and $(3)\Rightarrow (1)$ are trivial, so it suffices to prove $(2)\Rightarrow (3)$. Let $S\subseteq \Q$ be a non-empty set. Let $\zeta\subseteq[k]$ be the set of indices such that $f_i$ is identically zero on $\R^n$. Suppose that $\zeta\ne\varnothing$, we need to prove that with probability at least $1-3dk^2/|S|$ on $S^n$, $P(\vec x)$ is defined and $P(\vec x) \ne 0$.
    
    Let $\vec x$ be a random variable sampled uniformly from $S^n$, we calculate the probability of the following events:
    \begin{compactitem}
    \item Let $\sE_h$ be the event that $h_i(\vec x) \ne 0$ for every $i\in\zeta$. Note that for every fixed $i\in\zeta$, we know by \Cref{lmm: Schwartz-Zippel} that $\Pr[h_i(\vec x)= 0]\le d / |S|$, and therefore by the union bound, $\Pr[\sE_h]=1-\Pr[\lnot\sE_h]\ge 1-dk/|S|$. 
    \item Let $\sE_{g}$ be the event that for every $i,j\in\zeta$, $i\ne j$, $g_i(\vec x)\cdot h_j(\vec x) \ne g_j(\vec x)\cdot h_i(\vec x)$. Note that for each pair $i,j\in\zeta$, $g_i(\vec x)\cdot h_j(\vec x)-g_j(\vec x)\cdot h_i(\vec x)$ is a non-zero integer-coefficient $n$-variate polynomials of total degree at most $2d$, thus by \Cref{lmm: Schwartz-Zippel}, we have that  
    \[
    \Pr[g_i(\vec x)\cdot h_j(\vec x)-g_j(\vec x)\cdot h_i(\vec x) = 0] \le \frac{2d}{|S|}. 
    \]
    Then by the union bound, we know that $\Pr[\sE_{g}] = 1-\Pr[\lnot\sE_g] \ge 1- dk^2/|S|$. 
    \item Let $\sE_f$ be the event that for every $i\in\zeta$, $f_i(\vec x) \ne 0$. Similar to $\sE_h$, we know that $\Pr[\sE_f]\ge 1-dk/|S|$.
    \end{compactitem}
    
    Subsequently, we know that $\Pr[\sE_h\land \sE_g\land \sE_f] \ge 1-3dk^2/|S|$. We now argue that $P(\vec x)$ is defined and $P(\vec x) \ne 0$ for any $\vec x\in S^n$ such that $\sE_h,\sE_g,\sE_f$ are true simultaneously. Fix any such $\vec x$. For every $i\in\zeta$, define $\beta_i \eqdef f_i(\vec x)$ and $\alpha_i \eqdef g_i(\vec x) / h_i(\vec x)$ (note that $h_i(\vec x)\ne 0$ by $\sE_h$), we know that:
    \begin{compactitem} 
    \item For each $i\in\zeta$, $\beta_i,\alpha_i\in\Q$, as $\vec x\in S^n\subseteq \Q^n$ and $\Q$ is close under addition and multiplication. 
    \item For each $i\in\zeta$, $\beta_i \ne 0$ by $\sE_f$. 
    \item For each $i,j\in\zeta$, $i\ne j$, $\alpha_i\ne \alpha_j$ by $\sE_g$.
    \end{compactitem}
    
    Since $P(\vec x) = \beta_1 e^{\alpha_1} + \dots + \beta_k e^{\alpha_k} = \sum_{i\in\zeta} \beta_i e^{\alpha_i}$ and $\zeta\ne\varnothing$, we know by the Lindemann-Weierstrass Theorem (see \Cref{thm: LW}) that $P(\vec x) \ne 0$, which completes the proof.
\end{proof}

We are now ready to prove \Cref{thm: soundness}. 

\begin{proof}[Proof of \Cref{thm: soundness}]
    Let $C:\R^n\to\R$ be an $\AExp^1$ circuit of width $k$ and degree $d$ that is not identically zero on $\dom_\R(C)$. Let $S\subseteq\Q$ be a non-empty finite set. Then by the definition of degree and width of $C$, there are exponential polynomials $P,P'$ defined by 
    \begin{align*}
    P(\vec x) &\eqdef \sum_{i\in[k]} f_i(\vec x)\cdot\exp\left(\frac{g_i(\vec x)}{h_i(\vec x)}\right),\\ 
    P'(\vec x) &\eqdef \sum_{j\in[k]} f_j'(\vec x)\cdot\exp\left(\frac{g_j'(\vec x)}{h_j'(\vec x)}\right).
    \end{align*}
    such that $C(\vec x) = P(\vec x)/P'(\vec x)$ for every $\vec x\in\R^n$, and for each $i\in[k],j\in[k]$, $f_i,g_i,h_i,f'_j,g'_j,h'_j$ are integer-coefficient $n$-variate polynomials of total degree at most $d$.

    Note that for each $i\in[k], j\in[k]$, we have that $h_i(\vec x),h_j'(\vec x)$ are not identically zero on $\R^n$. This is because otherwise at least one of $P(\vec x),P'(\vec x)$ is undefined on every $\vec x\in \R^n$, which implies that $\dom_\R(C) = \varnothing$. Moreover, we know that $P$ and $P'$ are not identically zero on their domains, respectively, as otherwise $C$ must be identically zero on its domain. 

    Let $\hat P(\vec x)$ be a condensation of $P(\vec x)$. As $P$ is not identically zero on its domain, we know by \Cref{prop: condensation} that $\hat P(\vec x)$ is not identically zero on its domain. Consider the random variable $\vec x\in S^n$ sampled uniformly at random. We calculate the probability of the following events: 
    \begin{compactitem}
    \item Let $\sE_{\hat P}$ be the event that $\hat P(\vec x)\notin\{0,\bot\}$, i.e., $\hat P(\vec x)$ is defined and $\hat P(\vec x)\ne 0$. By \Cref{lmm: equiv pexp}, we can see that $\Pr[\sE_{\hat P}] \ge 1-3dk^2/|S|$. 
    \item Let $\sE_{\bot}$ be the event that $P(\vec x)$ is defined, i.e., $h_i(\vec x) \ne 0$ for every $i\in[k]$. Note that $\Pr[h_i(\vec x) = 0] \le d/|S|$ for every fixed $i\in[k]$ by \Cref{lmm: Schwartz-Zippel}. By the union bound, we know that $\Pr[\sE_\bot] = 1-\Pr[\lnot \sE_\bot] \ge 1-dk/|S|$. 
    \item Let $\sE_{P}$ be the event that $P(\vec x)\notin\{0,\bot\}$. Notice that 
    \[
    \Pr[\sE_{\hat P}] = \Pr[\sE_{\hat P}\land \sE_\bot] + \Pr[\sE_{\hat P}\land \lnot \sE_\bot] \ge 1-3dk^2/|S|, 
    \]
    where $\Pr[\sE_{\hat P}\land \sE_\bot] = \Pr[\sE_P]$ by \Cref{prop: condensation} and $\Pr[\sE_{\hat P}\land \lnot \sE_\bot] \le \Pr[\lnot \sE_\bot]\le dk/|S|$. Therefore, $\Pr[\sE_P] \ge 1-3dk^2/|S|-dk/|S|\ge 1-4dk^2/|S|$. 
    \end{compactitem}
    
    Following the same argument, we can prove that $\Pr[P'(\vec x)\notin\{0,\bot\}] \ge 1-4dk^2/|S|$ over uniformly random $\vec x\in S^n$. It follows that 
    \begin{align*}
    \Pr[C(\vec x)\in\{0,\bot\}] & = \Pr[P(\vec x)\in\{0,\bot\}\lor P'(\vec x)\in\{0,\bot\}] \\ 
    & \le \Pr[P(\vec x)\in\{0,\bot\}] + \Pr[P'(\vec x)\in\{0,\bot\}] \le 8dk^2/|S|. 
    \end{align*}
    This completes the proof. 
\end{proof}

\subsection{A Weak Descartes' Rule over Finite Fields}

\begin{theorem}\label{thm: week lw}
    Let $p$, $q$ be prime numbers such that $q\mid p-1$, $g\in\F_p$ be an element of order $q$, and $G\eqdef\langle g \rangle$ be the unique multiplicative subgroup of $\F_p^*$ of order $q$. Let $k\le q$, $\alpha_1,\dots,\alpha_k\in \F_q$ be distinct, and $\beta_1,\dots,\beta_k\in\F_p$. Then the univariate polynomial
    \[
    f(z)\eqdef \beta_1 z^{\alpha_1} + \beta_2 z^{\alpha_2} + \dots + \beta_k z^{\alpha_k}
    \]
    has at most $q^{1-1/(k-1)}$ roots in $G$.
\end{theorem}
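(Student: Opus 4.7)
The plan is to follow the rescaling strategy outlined in the paper's technical overview (Section 1.3) and combine it with Dirichlet's theorem on simultaneous Diophantine approximation. First, I would replace $f(z)$ with $z^{q-\alpha_k} f(z)$ modulo $z^q - 1$; since multiplication by $z^{q-\alpha_k}$ is a bijection on $G$, this preserves the roots of $f$ in $G$, and it amounts to subtracting $\alpha_k$ from each exponent in $\F_q$. Hence we may assume $\alpha_k = 0$, so the remaining $\alpha_1, \ldots, \alpha_{k-1}$ are distinct and nonzero in $\F_q$.

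The key observation is the rescaling trick: for any $i \in \F_q^*$, define the polynomial $f_i(z) \eqdef \sum_l \beta_l z^{(i\alpha_l) \bmod q}$, with exponents lifted to integers in $[0, q-1]$. For any $z \in G$, $f_i(z) = f(z^i)$ since $z^q = 1$; moreover, since $q$ is prime and $i \ne 0$, the map $z \mapsto z^i$ is a bijection on $G$, so $f$ and $f_i$ have the same number of roots in $G$. It thus suffices to find an $i$ for which $f_i$, possibly after a multiplicative shift by a power of $z$, has small degree modulo $z^q - 1$, because any polynomial of degree $d$ in $\F_p[z]$ has at most $d$ roots in $\F_p$.

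To find such an $i$, I would apply Dirichlet's theorem on simultaneous Diophantine approximation to $\theta_l \eqdef \alpha_l / q$ for $l = 1, \ldots, k-1$: this guarantees an $i \in \{1, \ldots, q-1\}$ such that $\|i \alpha_l / q\| \le q^{-1/(k-1)}$ for every $l$, where $\|x\|$ denotes the distance of $x$ to the nearest integer. Setting $T \eqdef q^{1-1/(k-1)}$, the exponents $\gamma_l \eqdef (i \alpha_l) \bmod q$ then all lie in $[0, T] \cup [q-T, q-1]$ (with $\gamma_k = 0$ trivially in this set). Multiplying $f_i$ by $z^T$ and reducing modulo $z^q - 1$ shifts each $\gamma_l \in [0, T]$ into $[T, 2T]$, and each $\gamma_l \in [q-T, q-1]$ into $[0, T-1]$. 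The resulting polynomial in $\F_p[z]$ has degree at most $2T$ and the same roots in $G$ as $f$, which yields a bound of order $q^{1-1/(k-1)}$ on the number of roots.

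The main obstacle is the factor of $2$: Dirichlet's theorem is inherently two-sided, yielding an arc of exponents of length $2T$ rather than $T$, and so the naive shift argument gives $2q^{1-1/(k-1)}$ rather than the exact $q^{1-1/(k-1)}$ stated in the theorem. Obtaining the claimed constant likely requires the sharper analysis from \cite{Kelley16} that the paper cites: for instance, by exploiting the flexibility of considering both $i$ and $-i$, by a more refined combinatorial argument directly on $\F_q^*$, or by replacing the single shift with a more careful covering of the exponent set.
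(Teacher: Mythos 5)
Your proposal follows essentially the same route as the paper: normalize one exponent to zero, use the bijection $z\mapsto z^i$ on $G$ (this is exactly \Cref{prop: week lw inner}) to replace the exponents $\alpha_l$ by $i\alpha_l \bmod q$, find an $i$ making all rescaled exponents small, and bound the number of roots by the degree of the rescaled polynomial. The genuine gap is the one you flagged yourself: Dirichlet's theorem is two-sided, so it only places the rescaled exponents in $[0,T]\cup[q-T,q-1]$ with $T=q^{1-1/(k-1)}$, and after your shift by $z^T$ modulo $z^q-1$ you can only conclude degree at most $2T$, i.e.\ at most $2q^{1-1/(k-1)}$ roots, not the stated $q^{1-1/(k-1)}$. (There are also minor rounding issues in extracting exactly $q^{-1/(k-1)}$ from Dirichlet with an integer parameter, but these are secondary.) The missing ingredient is precisely the \emph{one-sided} simultaneous approximation lemma of Kelley (\Cref{lmm: Kelley}, Lemma~4.1 of \cite{Kelley16}): applied to $\alpha_2,\dots,\alpha_k$ with $N=n=q$, it produces a single $c\in\{1,\dots,q-1\}$ with $c\alpha_l \bmod q \le q^{1-1/(k-1)}$ for \emph{every} $l$, so all exponents are already small, no shift is needed, and the degree bound is exactly $T$. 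The paper invokes this lemma as a black box; your suggested repairs (trying both $i$ and $-i$, etc.) do not obviously recover it, since negating $i$ merely swaps which exponents sit near $0$ and which sit near $q$.

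In short, your argument as written proves the theorem only up to a factor of $2$ in the root count. That weaker bound would still suffice for the downstream soundness statements (\Cref{lmm: zero poly ff} and \Cref{thm: algebraic model}) with the term $q^{-1/(k-1)}$ (resp.\ $q^{-1/(k^2-1)}$) replaced by twice itself, but to obtain the theorem as stated you need to cite or reprove Kelley's one-sided lemma, which is exactly what the paper does.
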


The key idea of the proof (which follows from \cite{Canetti00,Kelley16}) is to find a low-degree polynomial that has as many roots as $f(z)$ in $G$. Let $p,q$ be prime numbers such that $q\mid p-1$, and $G=\langle g\rangle$ be the unique multiplicative subgroup of $\F^*_p$ of order $q$. Formally: 

\begin{proposition}\label{prop: week lw inner}
    Let $c \in \{1,2,\dots,q-1\}$ and $f_c(z)$ be the polynomial
    \[
        f_c(z) \eqdef \beta_1 z^{c\alpha_1\bmod q} + \beta_2 z^{c\alpha_2\bmod q} + \dots + \beta_k z^{c\alpha_k\bmod q}.
    \]
    Then the polynomials $\{f_c(z)\}_{c \in [q-1]}$ have the same number of roots in $G$.
    \end{proposition}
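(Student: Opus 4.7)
The plan is to exhibit an explicit bijection on $G$ that maps the roots of $f_c$ to the roots of $f_1$. The key observation is that for $c \in \{1, 2, \dots, q-1\}$, we have $\gcd(c, q) = 1$ since $q$ is prime, so the map $\phi_c: G \to G$ defined by $z \mapsto z^c$ is a group automorphism of the cyclic group $G$, and in particular a bijection on $G$.

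Next I would verify that $f_c(z) = f_1(z^c)$ for every $z \in G$. Since $z \in G$ has order dividing $q$, we have $z^q = 1$, and therefore for each $i \in [k]$,
\[
z^{c\alpha_i \bmod q} = z^{c\alpha_i} = (z^c)^{\alpha_i} = (z^c)^{\alpha_i \bmod q}.
\]
Summing over $i$ with weights $\beta_i$ gives $f_c(z) = f_1(z^c)$, as desired. Here I am implicitly identifying elements of $\F_q$ with integers in $\{0, 1, \dots, q-1\}$, which is consistent with how the exponential polynomial evaluation is defined in the algebraic model.

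Combining the two steps, the set of roots of $f_c$ in $G$ is exactly $\phi_c^{-1}(\{w \in G : f_1(w) = 0\})$. Since $\phi_c$ is a bijection on $G$, this set has the same cardinality as the set of roots of $f_1$ in $G$. Hence the number of roots in $G$ is the same for every $c \in \{1, \dots, q-1\}$, proving the proposition.

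I do not anticipate a serious obstacle; the only minor subtlety is being careful about whether exponents are treated as integers or as elements of $\F_q$, which is resolved by the identity $z^q = 1$ on $G$. The proposition is essentially an observation about the automorphism group of the cyclic group $G$ acting on sparse polynomials, and the full strength of \Cref{thm: week lw} will come from combining this symmetry with a clever choice of $c$ to make the support of $f_c$ lie in a small range (so that $f_c$ has low degree as an ordinary polynomial, and the Schwartz--Zippel-type bound or elementary polynomial root bound applies).
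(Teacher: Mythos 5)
Your proof is correct and is essentially the paper's argument: the paper's two injections $g^a\mapsto g^{ac^{-1}\bmod q}$ (roots of $f_1$ to roots of $f_c$) and $g^a\mapsto g^{ac\bmod q}$ (the reverse) are exactly your bijection $\phi_c:z\mapsto z^c$ and its inverse, together with the identity $f_c(z)=f_1(z^c)$ on $G$, which the paper uses implicitly via $z^q=1$ and the invertibility of $c$ modulo $q$. Your packaging as a single automorphism of $G$ is a slightly cleaner presentation of the same idea.
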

    \begin{proof}
        Let $c \in [q-1]$ and $c > 1$. We will prove that $f_c(z)$ and $f_1(z)=f(z)$ have the same number of roots in $G$. We first prove that the number of roots of $f(z)$ in $G$ is smaller than that of $f_c(z)$. For any $g^a$ that is a root of $f(z)$ in $G$, let $c^{-1}$ be the multiplicative inverse of $u$ modulo $q$, then
        \begin{align*}
            0 &= \beta_1 g^{a\alpha_1\bmod q} + \beta_2 g^{a\alpha_2\bmod q} + \dots + \beta_k g^{a\alpha_k\bmod q} \\
            &= \beta_1 g^{ac^{-1}c\alpha_1\bmod q} + \beta_2 g^{ac^{-1}c\alpha_2\bmod q} + \dots + \beta_k g^{ac^{-1}c\alpha_k\bmod q}. 
        \end{align*}
        This implies that $g^{ac^{-1}}$ is a root of $f_c(z)$. As the mapping $g^{a}\mapsto g^{a c^{-1}\bmod q}$ (as a function over $G$) is injective, the number of roots of $f(z)$ in $G$ is smaller than that of $f_c(z)$. 

        It remains to prove that the number of roots of $f_c(z)$ in $G$ is smaller than that of $f(z)$. This is because for any root $g^a$ of $f_c(z)$ in $G$, $g^{ac\bmod q}$ is a root of $f(z)$, and the mapping $g^{a}\mapsto g^{ac\bmod q}$ is injective. This concludes the proof. 
    \end{proof}

    We will need the following lemma from \cite{Kelley16}: 

    \begin{lemma}[Lemma 4.1 of \cite{Kelley16}]\label{lmm: Kelley}
    For $\alpha_1,\dots,\alpha_k,N\in\N$ and $n\le N/\gcd(\alpha_1,\dots,\alpha_t,N)$, there is a $c\in[n-1]$ such that 
    \[
    \max_{i\in[t]}\{\alpha_i\cdot c\bmod N\}\le \frac{N}{n^{1/t}}. 
    \]
    \end{lemma}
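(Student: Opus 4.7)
The plan is a simultaneous-Diophantine-approximation pigeonhole on the torus $(\Z/N\Z)^t$, in the style of Dirichlet's theorem on simultaneous rational approximation.

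\textbf{Step 1 (reduction to the coprime case).} Let $d \eqdef \gcd(\alpha_1, \dots, \alpha_t, N)$ and write $N = dN'$, $\alpha_i = d\alpha_i'$ with $\gcd(\alpha_1', \dots, \alpha_t', N') = 1$. The hypothesis $n \le N/d$ becomes $n \le N'$, and every residue $\alpha_i c \bmod N$ is automatically a multiple of $d$, so the target inequality $\alpha_i c \bmod N \le N/n^{1/t}$ is equivalent to $\alpha_i' c \bmod N' \le N'/n^{1/t}$ after scaling by $d$. Hence it suffices to prove the lemma under the additional assumptions $\gcd(\alpha_1, \dots, \alpha_t, N) = 1$ and $n \le N$.

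\textbf{Step 2 (pigeonhole setup).} Under coprimality, the map $\phi: \Z/N\Z \to (\Z/N\Z)^t$ defined by $\phi(c) \eqdef (\alpha_1 c, \dots, \alpha_t c) \bmod N$ is injective (its kernel is trivial by B\'ezout), so $\phi(0), \phi(1), \dots, \phi(n-1)$ are $n$ distinct points of $[0, N)^t$. Set $M \eqdef \lceil n^{1/t} \rceil$ and partition $[0, N)$ into $M$ consecutive subintervals, each of length at most $\lceil N/M \rceil \le N/n^{1/t}$ (using a slightly uneven partition when $M \nmid N$), aligning the leftmost subinterval in each axis to start at $0$. This yields $M^t$ product boxes, including a distinguished origin box $B_0 \eqdef [0, \lfloor N/M \rfloor)^t$ that contains $\phi(0) = \mathbf{0}$ and whose points all have each coordinate below $N/n^{1/t}$.

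\textbf{Step 3 (extract the witness).} If some $\phi(c)$ with $c \in \{1, \dots, n-1\}$ lies in $B_0$, then $c$ directly witnesses the bound. Otherwise the $n$ points $\phi(0), \dots, \phi(n-1)$ populate at most $M^t$ boxes with $\phi(0)$ isolated in $B_0$; a second pigeonhole on the remaining $n - 1$ points against the $M^t - 1$ non-origin boxes produces $0 < c_1 < c_2 \le n-1$ with $\phi(c_1), \phi(c_2)$ in a common box. Setting $c \eqdef c_2 - c_1 \in \{1, \dots, n-1\}$, each $\alpha_i c \bmod N$ lies within cyclic distance $N/n^{1/t}$ of $0$ on $\Z/N\Z$, i.e., in $[0, N/n^{1/t}) \cup (N - N/n^{1/t}, N)$. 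The literal one-sided statement then follows either by interpreting $\alpha_i c \bmod N$ as the representative of smallest absolute value (the natural reading in the application of the preceding proposition, where only the ``spread'' of the exponents matters for bounding the polynomial degree), or by a further refinement that classifies the non-origin points by the $\{0,1\}^t$ sign pattern indicating whether each coordinate of $\phi(\cdot)$ lies in the lower or upper half of its subinterval, and applies pigeonhole within one such sign class at the cost of an inessential $2^t$ factor.

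\textbf{Main obstacle.} The conceptual content is standard Dirichlet pigeonhole; the genuinely delicate point is reconciling the cyclic (two-sided) residue bound that vanilla pigeonhole naturally produces with the one-sided form written in the lemma, because a global sign flip $c \mapsto -c$ swaps all coordinates simultaneously and cannot individually correct coordinates falling on different sides. The sign-pattern refinement above, or a more delicate argument as in \cite{Kelley16}, is required for the sharp one-sided bound. Secondary issues --- achieving the precise constant $N/n^{1/t}$ when $n^{1/t}$ is not an integer, and handling $M \nmid N$ --- are resolved by a slightly uneven partition of $[0,N)$ and introduce no new ideas.
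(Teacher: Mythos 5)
Neither the paper nor your write-up actually settles this lemma: the paper imports it from Kelley without proof, so your proposal has to stand on its own, and as written it has two gaps, one of which you flag yourself but do not close. First, the counting: with $M=\lceil n^{1/t}\rceil$ you have $M^t\ge n$ boxes, so neither your first pigeonhole ($n$ points among $M^t$ boxes) nor your ``second pigeonhole'' ($n-1$ points among $M^t-1$ boxes) forces any collision at all; you would need $M^t<n$ (e.g.\ $M=\lfloor n^{1/t}\rfloor$, with a separate word when $n$ is a perfect $t$-th power), at which point the box side is only bounded by $N/\lfloor n^{1/t}\rfloor$ --- repairable, but Step 3 as written produces no pair $c_1<c_2$. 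Second, and more seriously, neither of your two proposed fixes for one-sidedness works. Reading $\alpha_i c\bmod N$ as the least-absolute-value representative is \emph{not} harmless in the application: in the proof of \Cref{thm: week lw} the degree of $f_c$ is literally $\max_i(c\alpha_i\bmod q)$ with nonnegative representatives, so a term whose balanced residue is $-1$ has degree $q-1$ and the degree bound collapses; to exploit a symmetric bound one must additionally multiply $f_c$ by a power of $z$ and reduce modulo $z^q-1$, which changes both the lemma and the constant in \Cref{thm: week lw}. The $2^t$ sign-class refinement also fails: two points lying in the same box and the same ``half'' in every coordinate still only bounds the absolute value of each coordinate of $\phi(c_2)-\phi(c_1)$; the sign in coordinate $i$ is determined by which of the two points is larger in that coordinate, and no classification of individual points controls that.

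In fact no refinement can work, because the statement as transcribed (residues in $\{0,\dots,N-1\}$, $c\in\{1,\dots,n-1\}$) is false: for $t=1$, $\alpha_1=99$, $N=100$, $n=3$ the hypothesis $n\le N/\gcd(\alpha_1,N)$ holds, yet $99c\bmod 100\in\{98,99\}$ for $c\in\{1,2\}$, far above $100/3$; and even in the regime $n=N$ used in \Cref{thm: week lw}, taking $N=5$ and $(\alpha_1,\alpha_2)=(2,3)$ gives $\max_i(c\alpha_i\bmod 5)\ge 3>5/\sqrt{5}$ for every $c\in\{1,2,3,4\}$. So the version quoted here must differ materially from Kelley's actual Lemma 4.1 (which controls cyclic distance to $0$, equivalently containment of the rescaled exponents in a short arc), and the honest route is the one you half-take: prove the symmetric/arc version via your Steps 1--3 with the counting fixed, and then patch the application by dividing $f_c$ by $z^{c\alpha_j}$ for a suitable $j$ (or multiplying by $z^{L}$ and reducing modulo $z^q-1$), at the cost of a constant factor in the root bound. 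Your instinct that one-sidedness is the crux is exactly right; the error is in claiming it can be recovered by a pigeonhole refinement rather than recognizing that the quoted statement itself needs amending.
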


    \begin{proof}[Proof of \Cref{thm: week lw}]
    Fix $p,q\in\N,g\in\F_p,k$, elements $\alpha_1,\dots,\alpha_k\in\F_q$ and $\beta_1,\dots,\beta_k\in\F_p$. Let $f_c(z)\in\F_p[z]$ be the univariate polynomial 
    \[
    f_c(z) \eqdef \beta_1 z^{c\alpha_1\bmod q} + \beta_2 z^{c\alpha_2\bmod q} + \dots + \beta_k z^{c\alpha_k\bmod q}
    \]
    and $f(z) = f_1(z)$. Assume that $\alpha_1\le \alpha_2\le \dots \le \alpha_k$. We may further assume without loss of generality that $\alpha_1=0$, as otherwise we can consider the polynomial $f(z)/z^{\alpha_1}$ that has the same number of roots in $G$. 
    
    Let $N\eqdef q$ and $n=q$, we have $\gcd(\alpha_2,\dots,\alpha_k,N)=1$ and thus $n\le N/\gcd(\alpha_2,\dots,\alpha_k,N)$. By \Cref{lmm: Kelley}, there exists some $c\in\{1,2,\dots,q-1\}$ such that for every $i\in\{2,3,\dots,k\}$, 
    \[
    \alpha_i\cdot c\bmod q\le q^{1-1/(k-1)},
    \] 
    which further implies that the polynomial $f_c(z)$ is of degree at most $q^{1-1/(k-1)}$. Subsequently, $f_c(z)$ has at most $q^{1-1/(k-1)}$ roots in $G$. This completes the proof as $f(z)$ and $f_c(z)$ have the same number of roots in $G$ by \Cref{prop: week lw inner}.
    \end{proof}

\subsection{Identity Testing in Algebraic Query Model}

\MainTheorem*

We first prove two lemmas: \Cref{lmm: for completeness} is used to prove the completeness property, and \Cref{lmm: zero poly ff} is used to prove the soundness property. 

\begin{lemma}\label{lmm: for completeness}
    Let $\hat P$ be a condensation of an integer-coefficient exponential polynomial $P$. If $P$ is identically zero on $\dom_\R(P)$ and $\dom_\R(P)\ne \varnothing$, then $\hat P$ is identically zero on $\dom_\R(\hat P)$. 
\end{lemma}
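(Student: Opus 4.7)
The plan is to reduce to \Cref{lmm: equiv pexp} applied to $\hat P$, via a Schwartz--Zippel and Lindemann--Weierstrass argument carried out inside $\dom_\R(P)$. First, since $\dom_\R(P) \ne \varnothing$, each denominator $h_i$ of $P$ is a nonzero polynomial in $\Z[\vec x]$. By \Cref{prop: condensation}, $\dom_\R(P) \subseteq \dom_\R(\hat P)$ and $\hat P(\vec u) = P(\vec u) = 0$ for every $\vec u \in \dom_\R(P)$. By construction $\hat P$ is condensed, and its representative denominators $h_{i_j}$ remain nonzero polynomials. So \Cref{lmm: equiv pexp} applies to $\hat P$, and since $(3) \Rightarrow (1)$ there is trivial, it suffices to show that every coefficient polynomial $F_j$ of $\hat P$ is the zero element of $\Z[\vec x]$.

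Assume for contradiction that some $F_j \not\equiv 0$, and let $\zeta \subseteq [t]$ index those $j$ with $F_j \not\equiv 0$, so $\zeta \ne \varnothing$. I would now replay the probabilistic argument from \Cref{lmm: equiv pexp}, but with the additional requirement that the sampled point lie in $\dom_\R(P)$. Pick any finite $S \subseteq \Q$ with $|S|$ larger than, say, $4dk^2$, and sample $\vec x \in S^n$ uniformly. Applying \Cref{lmm: Schwartz-Zippel} to the nonzero polynomials $h_i$ (for $i \in [k]$), $F_j$ (for $j \in \zeta$), and $g_{i_{j_1}} h_{i_{j_2}} - g_{i_{j_2}} h_{i_{j_1}}$ (for distinct $j_1, j_2 \in [t]$; these are nonzero precisely because $\hat P$ is condensed), a union bound gives strictly positive probability that simultaneously: (i) $h_i(\vec x) \ne 0$ for every $i \in [k]$, so $\vec x \in \dom_\R(P)$; (ii) $F_j(\vec x) \ne 0$ for every $j \in \zeta$; and (iii) $g_{i_{j_1}}(\vec x) h_{i_{j_2}}(\vec x) \ne g_{i_{j_2}}(\vec x) h_{i_{j_1}}(\vec x)$ for all distinct $j_1, j_2 \in [t]$.

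Fix such an $\vec x$. Setting $\alpha_j \eqdef g_{i_j}(\vec x) / h_{i_j}(\vec x) \in \Q$ and $\beta_j \eqdef F_j(\vec x) \in \Q$, property (iii) makes the $\alpha_j$ pairwise distinct and property (ii) makes $\beta_j \ne 0$ for $j \in \zeta$, so by \Cref{thm: LW}, $\hat P(\vec x) = \sum_{j \in \zeta} \beta_j e^{\alpha_j} \ne 0$. But (i) places $\vec x$ in $\dom_\R(P)$, which forces $\hat P(\vec x) = P(\vec x) = 0$, a contradiction. Hence every $F_j$ vanishes as a polynomial, so $\hat P$ is the zero exponential expression and in particular identically zero on $\dom_\R(\hat P)$.

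The main obstacle I would expect is precisely the reason this lemma needs its own proof rather than a one-line appeal to \Cref{lmm: equiv pexp}: we are only given $\hat P \equiv 0$ on $\dom_\R(P)$, which may be a strict subset of $\dom_\R(\hat P)$ (this gap is what condensation is designed to open up). The resolution is to notice that $\dom_\R(P)$ remains \emph{generic} in $\R^n$---its complement is a finite union of zero sets of nonzero polynomials---so the \emph{extra} constraint ``$\vec x \in \dom_\R(P)$'' in the union bound costs only $O(kd/|S|)$, and the Lindemann--Weierstrass argument of \Cref{lmm: equiv pexp} goes through unchanged inside $\dom_\R(P)$.
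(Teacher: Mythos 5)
Your proof is correct and follows essentially the same route as the paper's: both rely on the fact that the complement of $\dom_\R(P)$ is covered by zero sets of the nonzero polynomials $h_i$, so a Schwartz--Zippel union bound produces a point $\vec x\in\dom_\R(P)$ with $\hat P(\vec x)\ne 0$, contradicting $P\equiv 0$ on $\dom_\R(P)$ via \Cref{prop: condensation}. The only cosmetic difference is that you unwind the Lindemann--Weierstrass argument inside the proof of \Cref{lmm: equiv pexp}, whereas the paper invokes that lemma as a black box and contradicts ``$\hat P$ not identically zero'' directly.
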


\begin{proof}
    Fix $k,d$ such that both $P$ and $\hat P$ are of degree at most $d$ and width at most $k$. Suppose, towards a contradiction, that $\hat P$ is not identically zero on $\dom_\R(\hat P)$. Let $S\subseteq\Z$ be a set of size at least $10 dk^2$. By \Cref{lmm: equiv pexp}, we know that for a uniformly random $\vec x\in S^n$, $\Pr[\hat P(\vec x)=0]\le 1/3$.

    Let $g_1/h_1,\dots,g_k/h_k$ be the exponential fractions of $P$. We know that none of $h_1,\dots,h_k$ is a zero polynomial in $\R$ as otherwise $\dom_\R(P)=\varnothing$. Then for each $i\in[k]$, if we sample $\vec x\in S^n$ uniformly at random, $\Pr[h_i(\vec x) = 0] \le d/|S|$ by \Cref{lmm: Schwartz-Zippel}. Subsequently, by the union bound, we have that with probability at most $dk/|S|\le 1/10$, $h_i(\vec x)\ne 0$ for some $i\in[k]$, or equivalently, $\vec x\notin\dom_\R(P)$. 

    Combining these two cases and the union bound, for $x\in S^n$ sampled uniformly at random, 
    \[
    \Pr[\vec x\notin\dom_\R(P)\lor \hat P(\vec x)= 0] \le \Pr[\hat P(\vec x)=0] + \Pr[\vec x\notin \dom_\R(P)] <1. 
    \]
    Therefore, there is an $x\in S^n\subseteq\Z^n$ such that $\vec x\in\dom_\R(P)$ and $\hat P(\vec x) \ne 0$. In such case, $P(\vec x) \ne 0$ by \Cref{prop: condensation}, which leads to a contradiction. 
\end{proof}

\begin{lemma}\label{lmm: zero poly ff}
Let $p, q$ be prime numbers such that $q \mid  p - 1$, $G_{p, q}$ be the unique multiplicative subgroup of $\F_p^*$ of order $q$. Let $P$ be a condensed integer-coefficient exponential polynomial 
\[
P(\vec x) = \sum_{i\in[k]} f_i(\vec x) \cdot \exp\left(\frac{g_i(\vec x)}{h_i(\vec x)}\right)
\]
of degree $d$ and width $k$. Suppose that for every $i\in[k]$, $f_i(\vec x)$ is not identically zero on $\F_p$ and $h_i(\vec x)$ is not identically zero on $\F_q$, then for $(\vec u, \vec v, a)$ uniformly sampled from $\F_p^n\times\F_q^n\times G_{p, q}$,
\[ 
\Pr[P_a(\vec u, \vec v)  \in \{0, \bot\}] \le 3dk^2\cdot q^{-1} + q^{-1/(k-1)}. 
\] 
\end{lemma}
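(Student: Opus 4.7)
The plan is to follow the blueprint of \Cref{lmm: equiv pexp} (the real-query analogue) while replacing the Lindemann-Weierstrass theorem with the weak Descartes' rule over finite fields (\Cref{thm: week lw}). The argument separates the randomness of $(\vec u,\vec v)$ and of $a$: a Schwartz-Zippel step will fix a ``typical'' instantiation $(\vec u,\vec v)$, and then a Descartes-rule step will handle the remaining randomness in $a$.

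First, I would sample $\vec u \in \F_p^n$ and $\vec v \in \F_q^n$ uniformly and consider the following three ``good'' events, mirroring those in the proof of \Cref{lmm: equiv pexp}:
\begin{compactitem}
    \item $\mathcal{E}_f$: $f_i(\vec u) \not\equiv 0 \pmod{p}$ for every $i \in [k]$;
    \item $\mathcal{E}_h$: $h_i(\vec v) \not\equiv 0 \pmod{q}$ for every $i \in [k]$;
    \item $\mathcal{E}_g$: $(g_i h_j - g_j h_i)(\vec v) \not\equiv 0 \pmod{q}$ for every pair $i \ne j$.
\end{compactitem}
Using the lemma's hypotheses together with the condensed property (interpreted over $\F_q$), each of the underlying polynomials is non-zero over the appropriate field. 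By \Cref{lmm: Schwartz-Zippel} and the union bound, $\Pr[\lnot\mathcal{E}_f] \le dk/p$, $\Pr[\lnot\mathcal{E}_h] \le dk/q$, and $\Pr[\lnot\mathcal{E}_g] \le 2d\binom{k}{2}/q \le dk^2/q$. Since $q \mid p-1$ forces $p > q$, the total failure probability is at most $3dk^2/q$.

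Second, conditioned on $\mathcal{E}_f\land \mathcal{E}_h\land \mathcal{E}_g$, I would rewrite
\[
P_a(\vec u,\vec v) = \sum_{i=1}^{k}\beta_i\cdot a^{\alpha_i} \bmod p,
\]
where $\beta_i \eqdef f_i(\vec u)\in \F_p\setminus\{0\}$ (non-zero by $\mathcal{E}_f$) and $\alpha_i \eqdef g_i(\vec v)\cdot\Inv_q(h_i(\vec v))\in \F_q$ (well-defined by $\mathcal{E}_h$), with the $\alpha_i$ pairwise distinct thanks to $\mathcal{E}_g$. In particular, $\mathcal{E}_h$ already rules out $P_a(\vec u,\vec v) = \bot$. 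Applying \Cref{thm: week lw} to the univariate polynomial $\sum_{i=1}^{k}\beta_i z^{\alpha_i}\in\F_p[z]$ bounds its number of roots in $G_{p,q}$ by $q^{1-1/(k-1)}$, and since $a$ is uniform in $G_{p,q}$ and independent of $(\vec u,\vec v)$,
\[
\Pr_a[P_a(\vec u,\vec v) = 0 \mid \mathcal{E}_f,\mathcal{E}_h,\mathcal{E}_g] \le q^{-1/(k-1)}.
\]
Combining with the Schwartz-Zippel bound yields $\Pr[P_a(\vec u,\vec v)\in\{0,\bot\}] \le 3dk^2/q + q^{-1/(k-1)}$, as desired.

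The main subtlety I anticipate lies in the Schwartz-Zippel step: the condensed property ensures $g_i h_j - g_j h_i \ne 0$ over $\Z$, but I need it to remain non-zero after reduction modulo $q$ so that \Cref{lmm: Schwartz-Zippel} can be applied to bound $\Pr[\lnot \mathcal{E}_g]$. This requires $q$ to exceed the integer coefficients of the products $g_i h_j$, a condition implicit in the ambient assumption $q > 2(kw)^2$ of \Cref{thm: algebraic model}. Modulo this caveat, the proof is essentially a term-by-term translation of the real-query proof, with the transcendence of $e$ replaced by the sparsity-based root bound of \Cref{thm: week lw}.
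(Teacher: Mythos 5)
Your proof is correct and follows essentially the same route as the paper's: the same three Schwartz--Zippel events $\mathcal{E}_f,\mathcal{E}_h,\mathcal{E}_g$ over $(\vec u,\vec v)$ followed by the weak Descartes' rule (\Cref{thm: week lw}) applied to the sparse univariate polynomial in $a$, yielding exactly the bound $3dk^2/q + q^{-1/(k-1)}$. The mod-$q$ subtlety you flag for $\mathcal{E}_g$ (needing $g_ih_j-g_jh_i\not\equiv 0\pmod q$, not just nonvanishing over $\Z$) is a fair observation, but the paper's own proof is no more careful on this point --- it simply invokes \Cref{lmm: Schwartz-Zippel} on the integer polynomial --- so your treatment matches the paper's argument.
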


\begin{proof}
Fix $p,q$ and let $\{f_i\}_{i \in [k]}$ and $\{g_i/h_i\}_{i\in [k]}$ be the coefficient polynomials and exponent fractions of $P$, respectively. We know from the assumption that $f_i\ne 0\pmod p$ and $h_i\ne 0\pmod q$. 

Let $(\vec u, \vec v,a)$ be random variables sampled uniformly from $\F_p^n\times\F_q^n\times G_{p, q}$. We calculate the probability of the following events:
\begin{compactitem}
\item Let $\sE_h$ be the event that $h_i(\vec v) \neq 0$ for every $i \in [k]$. By Lemma~\ref{lmm: Schwartz-Zippel}, for a fixed $i \in [k]$, $\Pr[h_i(\vec v) = 0] \le d/q$. Then by the union bound, $\Pr[\sE_h] = 1 - \Pr[\neg \sE_h] \ge 1 - dk/q$.
\item Let $\sE_g$ be the event that for every $i \neq j$ from $[k]$, $g_i(\vec v)h_j(\vec v) \neq g_j(\vec v)h_i(\vec v)$. Since $g_i(\vec x)h_j(\vec x) - g_j(\vec x)h_i(\vec x)$ is a non-zero integer-coefficient polynomial of total degree at most $2d$, by Lemma~\ref{lmm: Schwartz-Zippel}, $\Pr[g_i(\vec v)h_j(\vec v) - g_j(\vec v)h_i(\vec v) = 0] \le 2d/q$. Then by the union bound, $\Pr[\sE_g] \ge 1 - dk^2/q$.
\item Let $\sE_f$ be the event that for every $i \in [k], f_i(\vec x) \neq 0$. Similar to $\sE_h$, $\Pr[\sE_f] \ge 1 - dk/q$.
\end{compactitem}
Let $\beta_i=f_i(\vec u)\in\F_p$ and $\alpha_i=g_i(\vec v)\cdot \Inv_q(h_i(\vec v))\in\F_q$ for every $i\in[k]$. When $\sE_h, \sE_g$ and $\sE_f$ are all true, we have that 
\[
P_a(\vec u,\vec v) = \beta_1 a^{\alpha_1} + \dots + \beta_k a^{\alpha_k}, 
\]
where $\beta_1,\dots\beta_k\ne 0$ and $\alpha_1,\dots,\alpha_k$ are distinct. By Lemma~\ref{thm: week lw}, $\Pr[P_a(\vec u, \vec v) \bmod{p} \in \{0, \bot\} \mid \sE_h \wedge \sE_g \wedge \sE_h] \le q^{-1/(k-1)}$. Therefore, 
\begin{align*}
    & \Pr[P_a(\vec u, \vec v) \in \{0, \bot\}] \\ 
 \le ~& \Pr[P_a(\vec u, \vec v) \notin \{0, \bot\} \mid \sE_h \wedge\sE_g \wedge \sE_h] + \Pr[\lnot(\sE_h \wedge \sE_g \wedge \sE_f)] \\ 
 \le~& 3dk^2\cdot q^{-1} + q^{-1/(k-1)}.
\end{align*}
This completes the proof. 
\end{proof}

\begin{proof}[Proof of \Cref{thm: algebraic model}]
By the definition of width, degree of $\AExp$ circuits, there are exponential polynomials
\[P(\vec x) = \sum_{i\in[k]} f_i(\vec x) \cdot \exp\left(\frac{g_i(\vec x)}{h_i(\vec x)}\right), P'(\vec x) = \sum_{i\in[k]} f'_i(\vec x) \cdot \exp\left(\frac{g'_i(\vec x)}{h'_i(\vec x)}\right)
\]
such that the following hold:
\begin{compactitem}
\item $f_i,g_i,h_i,f'_i,g_i',h'_i$ are $d$-degree polynomials with coefficients in $[-w, w] \cap \Z$,
\item $\dom_\R(C) = \dom_\R(P/P')$, and $C(\vec u) = P(\vec u)/P'(\vec u)$ for $\vec u \in \dom_\R(C)$, 
\item $\dom_{p,q,a}(C)=\dom_{p,q,a}(P/P')$, and $C_a(\vec u, \vec v) = P(\vec u, \vec v) \cdot \Inv_p(P'(\vec u, \vec v))$ for any $a \in G_{p,q}$ and $(\vec u, \vec v) \in \dom_{p,q,a}(C)$. 
\end{compactitem}

\paragraph{Proof of Completeness.} Let $C$ be an $\AExp^1$ circuit that is identically zero on $\dom_\R(C)$. We first prove that either $P(\vec x)$ is identically zero on $\dom_\R(P)$, or $P'(\vec x)$ is identically zero on $\dom_\R(P')$.

Towards a contradiction, we assume that $P$ is not identically zero on $\dom_\R(P)$ and $P'$ is not identically zero on $\dom_\R(P')$. Let $S\subseteq\Z$ be a set of size $30dk^2$. By \Cref{thm: soundness}, we know that for $\vec x\in S^n$ sampled uniformly at random, 
\[
\Pr[P(\vec x) \in \{0,\bot\}], \Pr[P'(\vec x) \in \{0,\bot\}] < \frac{1}{3}. 
\]
By the union bound, we know that for $\vec x\in S^n$ sampled uniformly at random, with probability at least $1/3$, $P(\vec x),P'(\vec x)\notin\{0,\bot\}$. For any such $\vec x\in S^m$, we have that $\vec x\in\dom_\R(C)$ and $C(\vec x) = P(\vec x)/P'(\vec x)\ne 0$, which is impossible as $C$ is identically zero on its domain. 

Suppose that $P(\vec x)$ is identically zero on $\dom_\R(P)$, we know by \Cref{lmm: for completeness} that the condensation $\hat P(\vec x)$ of $P(\vec x)$ is also identically zero on $\dom_\R(\hat P)$. Let $k'$ be the width of $\hat P$ and 
\[
\hat P(\vec x) = \sum_{i\in[k']} \hat f_i(\vec x)\cdot \exp\left(\frac{\hat g_i(\vec x)}{\hat h_i(\vec x)}\right).
\]
By \Cref{lmm: equiv pexp}, we know that for every $i\in[k']$, $\hat f_i(\vec x)$ is identically zero on $\R^n$, or equivalently, $\hat f_i(\vec x)$ is a zero polynomial. In that case, we must have $\hat f_i(\vec x)\equiv 0\pmod p$ for every $p$. This implies that 
\[
\hat P_a(\vec u,\vec v)=\sum_{i\in[k']} \hat f_i(\vec u)\cdot \exp\left(\frac{\hat g_i(\vec v)}{\hat h_i(\vec v)}\right) \in\{0,\bot\}. 
\]
Subsequently, $P_a(\vec u,\vec v)\in\{0,\bot\}$ as $\hat P$ and $P$ agree on $\dom_{p,q,a}(P)$ (see \Cref{prop: condensation}), which further implies that $C_a(\vec u,\vec v)\in\{0,\bot\}$. 

Similarly, if $P'(\vec x)$ is identically zero on $\dom_\R(P')$, then $P'_a(\vec u,\vec v)\in\{0,\bot\}$, which implies that $C_a(\vec u,\vec v)=\bot$. This concludes the completeness of the algorithm.   

\paragraph{Proof of Soundness.} Let $R(\vec x) = P(\vec x) \cdot P'(\vec x)$. It can be verified that it is an exponential polynomial with width $k^2$, degree $2d$ and weight $w^2$. It follows that: 
\begin{compactitem}
    \item For any $\vec x \in \R^n$, $C(\vec x) = P(\vec x) / P'(\vec x) \in \{0, \bot\}$ if and only if $R(\vec x) \in \{0, \bot\}$.
    \item For any $\vec x \in \F_p^n \times \F_q^n, \vec a \in G_{p, q}$, $C_a(\vec x) = P_a(\vec x) / P'_a(\vec x) \in \{0, \bot\}$ if and only if $R_a(\vec x) \in \{0, \bot\}$.
\end{compactitem}

Since $C$ is not identically zero on $\dom_\R(C)$, i.e., $C(x)\notin \{0,\bot\}$ for some $\vec x\in\R^n$, $R$ is not identically zero on $\dom_\R(R)$. Let $\hat R$ be a condensation of $R$. By \Cref{prop: condensation}, $\hat R$ is not identically zero on $\dom_\R(\hat R)$.

Let $\{f''_i\}_{i \in [k^2]}$ be the coefficient polynomials of $\hat R$, and $\{g''_i/h''_i\}_{i \in [k^2]}$ be the exponent fractions of $\hat R$. Note that $h''_i$ is non-zero for every $i\in[k^2]$, as otherwise $\dom_{\R}(R)=\varnothing$. By \Cref{lmm: equiv pexp}, we know that $f''_i$ is non-zero for some $i\in[k^2]$ and, without loss of generality, we may assume that $f''_i$ is non-zero for every $i\in[k^2]$.

It can be verified that the integer weights in $h''_i$ are within $[-w^2,w^2]$. Moreover, the integer weights in $f''_i$ are within $[-(kw)^2,(kw)^2]$, as each $f''_i$ is a summation of at most $k^2$ polynomials that have integer weights within $[-w^2,w^2]$. As $p>q>2(kw)^2$, we know that for every $i\in[k^2]$, $f''_i$ is not identically zero on $\F_p^n$ and $h''_i$ is not identically zero on $\F_q^n$.

Let $\vec x = (\vec u, \vec v) \in \F_p^n \times \F_q^n$ and $a \in G_{p, q}$ be random variables sampled uniformly at random.  We calculate the probability of the following events:
\begin{compactitem}
\item Let $\sE_{\hat R}$ be the event that $\hat R(\vec x) \notin \{0, \bot\}$. By Lemma~\ref{lmm: zero poly ff}, $\Pr[\sE_{\hat R}] \ge 1 - 6dk^4/q - q^{-1/(k^2-1)}$.
\item Let $\sE_{\bot}$ be the event that $h''_i(\vec v)\neq 0$ for every $i \in [k^2]$. For every fixed $i \in [k^2]$, by Lemma~\ref{lmm: Schwartz-Zippel}, $\Pr[h''_i(\vec v) = 0] \le 2d/q$. Then by the union bound, $\Pr[\sE_{\bot}] = 1 - \Pr[\neg \sE_{\bot}] \ge 1 - 2dk^2/q$.
\item Let $\sE_R$ be the event that $R(\vec x) \notin \{0, \bot\}$. Notice that
\[
\Pr[\sE_{\hat R}] = \Pr[\sE_{\hat R} \wedge \sE_{\bot}] + \Pr[\sE_{\hat R} \wedge \neg\sE_{\bot}] \ge 1 - 6dk^4/q - q^{-1/(k^2-1)},
\]
where $\Pr[\sE_{\hat R} \wedge \sE_{\bot}] = \Pr[\sE_R]$ and $\Pr[\sE_{\hat R} \wedge \neg\sE_{\bot}] \le 2dk^2/q$. Therefore, 
\[ 
\Pr[\sE_R] \ge 1 - 6dk^4/q - q^{-1/(k^2-1)} - 2dk^2/q \ge 1 - 8dk^4/q - q^{-1/(k^2-1)}.
\] 
\end{compactitem}
This completes the proof, as $C_a(\vec x)\in\{0,\bot\}$ if and only if $R_a(\vec x)\in\{0,\bot\}$. 
\end{proof}

\bibliographystyle{alpha}
\bibliography{ref}

\newcommand{\etalchar}[1]{$^{#1}$}
\begin{thebibliography}{CGRW17}

\bibitem[AB03]{AB03}
Manindra Agrawal and Somenath Biswas.
\newblock Primality and identity testing via chinese remaindering.
\newblock {\em J. {ACM}}, 50(4):429--443, 2003.

\bibitem[AB09]{AB09}
Sanjeev Arora and Boaz Barak.
\newblock {\em Computational Complexity - {A} Modern Approach}.
\newblock Cambridge University Press, 2009.

\bibitem[AS98]{AS98}
Sanjeev Arora and Shmuel Safra.
\newblock Probabilistic checking of proofs: {A} new characterization of {NP}.
\newblock {\em J. {ACM}}, 45(1):70--122, 1998.

\bibitem[Bak22]{Baker-book}
Alan Baker.
\newblock {\em Transcendental Number Theory}.
\newblock Cambridge Mathematical Library. Cambridge University Press, 2022.

\bibitem[CFK{\etalchar{+}}00]{Canetti00}
Ran Canetti, John Friedlander, Sergei Konyagin, Michael Larsen, Daniel Lieman, and Igor Shparlinski.
\newblock On the statistical properties of diffie-hellman distributions.
\newblock {\em Israel Journal of Mathematics}, 120:23--46, 2000.

\bibitem[CGRW17]{CGRW17}
Qi~Cheng, Shuhong Gao, J.~Maurice Rojas, and Daqing Wan.
\newblock Sparse univariate polynomials with many roots over finite fields.
\newblock {\em Finite Fields Their Appl.}, 46:235--246, 2017.

\bibitem[DF04]{Abstract04}
David~Steven Dummit and Richard~M Foote.
\newblock {\em Abstract algebra}, volume~3.
\newblock Wiley Hoboken, 2004.

\bibitem[DL78]{DL78}
Richard~A. DeMillo and Richard~J. Lipton.
\newblock A probabilistic remark on algebraic program testing.
\newblock {\em Inf. Process. Lett.}, 7(4):193--195, 1978.

\bibitem[Kel16]{Kelley16}
Zander Kelley.
\newblock Roots of sparse polynomials over a finite field.
\newblock {\em LMS Journal of Computation and Mathematics}, 19(A):196--204, 2016.

\bibitem[Lov79]{Lovasz79}
L{\'{a}}szl{\'{o}} Lov{\'{a}}sz.
\newblock On determinants, matchings, and random algorithms.
\newblock In Lothar Budach, editor, {\em Fundamentals of Computation Theory, {FCT} 1979, Proceedings of the Conference on Algebraic, Arthmetic, and Categorial Methods in Computation Theory, Berlin/Wendisch-Rietz, Germany, September 17-21, 1979}, pages 565--574. Akademie-Verlag, Berlin, 1979.

\bibitem[O'D14]{Boolean-function}
Ryan O'Donnell.
\newblock {\em Analysis of Boolean Functions}.
\newblock Cambridge University Press, 2014.

\bibitem[Ore21]{Ore1921}
{\O}ystein Ore.
\newblock {\em {\"U}ber h{\"o}here kongruenzen}.
\newblock Gr{\o}ndahl, 1921.

\bibitem[Sax09]{Saxena09}
Nitin Saxena.
\newblock Progress on polynomial identity testing.
\newblock {\em Bull. {EATCS}}, 99:49--79, 2009.

\bibitem[Sax14]{Saxena14}
Nitin Saxena.
\newblock Progress on polynomial identity testing - {II}.
\newblock {\em CoRR}, abs/1401.0976, 2014.

\bibitem[Sch80]{Schwartz80}
Jacob~T. Schwartz.
\newblock Fast probabilistic algorithms for verification of polynomial identities.
\newblock {\em J. {ACM}}, 27(4):701--717, 1980.

\bibitem[Sha92]{Shamir92}
Adi Shamir.
\newblock {IP} = {PSPACE}.
\newblock {\em J. {ACM}}, 39(4):869--877, 1992.

\bibitem[Sha20]{glu}
Noam Shazeer.
\newblock {GLU} variants improve transformer.
\newblock {\em CoRR}, abs/2002.05202, 2020.

\bibitem[Tut47]{Tutte47}
William~T Tutte.
\newblock The factorization of linear graphs.
\newblock {\em Journal of the London Mathematical Society}, 1(2):107--111, 1947.

\bibitem[VSP{\etalchar{+}}17]{attention}
Ashish Vaswani, Noam Shazeer, Niki Parmar, Jakob Uszkoreit, Llion Jones, Aidan~N Gomez, \L~ukasz Kaiser, and Illia Polosukhin.
\newblock Attention is all you need.
\newblock In I.~Guyon, U.~Von Luxburg, S.~Bengio, H.~Wallach, R.~Fergus, S.~Vishwanathan, and R.~Garnett, editors, {\em Advances in Neural Information Processing Systems}, volume~30. Curran Associates, Inc., 2017.

\bibitem[WCPJ24]{Mirage24}
Mengdi Wu, Xinhao Cheng, Oded Padon, and Zhihao Jia.
\newblock A multi-level superoptimizer for tensor programs.
\newblock {\em CoRR}, abs/2405.05751, 2024.

\bibitem[WZG{\etalchar{+}}21]{PET21}
Haojie Wang, Jidong Zhai, Mingyu Gao, Zixuan Ma, Shizhi Tang, Liyan Zheng, Yuanzhi Li, Kaiyuan Rong, Yuanyong Chen, and Zhihao Jia.
\newblock {PET}: Optimizing tensor programs with partially equivalent transformations and automated corrections.
\newblock In {\em 15th USENIX Symposium on Operating Systems Design and Implementation (OSDI 21)}, pages 37--54. USENIX Association, July 2021.

\bibitem[Zip79]{Zippel79}
Richard Zippel.
\newblock Probabilistic algorithms for sparse polynomials.
\newblock In Edward~W. Ng, editor, {\em Symbolic and Algebraic Computation, {EUROSAM} '79, An International Symposiumon Symbolic and Algebraic Computation, Marseille, France, June 1979, Proceedings}, volume~72 of {\em Lecture Notes in Computer Science}, pages 216--226. Springer, 1979.

\end{thebibliography}

\end{document}